\theoremstyle{definition}
\newtheorem{theorem}{Theorem}[section]
\newtheorem{lemma}[theorem]{Lemma}
\newtheorem{proposition}[theorem]{Proposition}
\tikzstyle{VertexStyle}=[shape = circle,inner sep=0pt,minimum size=4pt,draw]
\tikzstyle{LabelStyle}=[font = \scriptsize\bfseries]
\tikzset{SquigglyArrow/.style={
thick,
snake=snake,
segment aspect=0,
postaction=draw}}
\begin{document}

\title{Interdiction Problems on Planar Graphs}
\author[1]{Feng Pan}
\author[2]{Aaron Schild}
\affil[1]{D-6, Los Alamos National Laboratory, \href{mailto:fpan@lanl.gov}{\texttt{fpan@lanl.gov}}}
\affil[2]{Princeton University, \href{mailto:aschild@princeton.edu}{\texttt{aschild@princeton.edu}}}

\renewcommand\Authands{ and }

\maketitle

\begin{abstract}
Interdiction problems are leader-follower games in which the leader is allowed to delete a certain number of edges from the graph in order to maximally impede the follower, who is trying to solve an optimization problem on the impeded graph. We introduce approximation algorithms and strong NP-completeness results for interdiction problems on planar graphs. We give a  multiplicative $(1 + \epsilon)$-approximation for the maximum matching interdiction problem on weighted planar graphs. The algorithm runs in pseudo-polynomial time for each fixed $\epsilon > 0$. We also show that weighted maximum matching interdiction, budget-constrained flow improvement, directed shortest path interdiction, and minimum perfect matching interdiction are strongly NP-complete on planar graphs. To our knowledge, our budget-constrained flow improvement result is the first planar NP-completeness proof that uses a one-vertex crossing gadget.
\end{abstract}

\section{Introduction}

Interdiction problems are often used to understand the robustness of solutions to combinatorial optimization problems on graphs. For any optimization problem on a graph, one can formulate an interdiction variant by creating a leader-follower game. In edge interdiction problems, every edge of the graph has an interdiction cost associated with it. The leader is given a budget and is allowed to delete any set of edges with total cost no more than the given budget. The follower then solves the given optimization problem on the remaining graph. The leader wants to pick the set of edges to delete that impedes the follower as much as possible.

In this paper, we focus on edge interdiction problems relating to shortest path interdiction, maximum flow interdiction, and matching interdiction problems. First, we give definitions for the maximum matching interdiction, budget-constrained maximum flow, shortest path interdiction, and minimum perfect matching interdiction problems.

For a given directed graph $G$ with a nonnegative edge capacities, let $\alpha(G, s, t)$ denote the value of the maximum flow from $s$ to $t$. The Budget-Constrained Flow Improvement Problem (BCFIP) \cite{Sch-98}, a problem that is closely related to the maximum flow interdiction problem, is defined as follows:\\

\textbf{Input}: A directed graph $G = (V, E)$ with a capacity function $w: E\rightarrow \mathbb{Z}_{\ge 0}$, a transport cost function $c: E\rightarrow \mathbb{Z}_{\ge 0}$, an integer budget $B > 0$, and two distinct distinguished nodes $s, t\in V$.

\textbf{Output}: A subset $I\subseteq E$ with $c(I)\le B$ that maximizes $\alpha(G[I], s, t)$, where $G[I]$ denotes the subgraph of $G$ induced by the edges in $I$.\\

Let $\alpha_B^E(G, s, t)$ denote the optimal value for BCFIP.

A \emph{matching} in a graph is a set of edges such that no two edges share an endpoint. For an edge-weighted graph $G = (V, E)$ with edge weight function $w: E\rightarrow \mathbb{Z}_{\ge 0}$, let $\nu(G)$ be the weight of a maximum weight matching in $G$. The Maximum Matching Edge Interdiction Problem (MMEIP), originally defined by Zenklusen in \cite{Zenklusen10}, is defined as follows:\\

\textbf{Input}: An edge-weighted graph $G = (V, E)$ with a weight function $w: E\rightarrow \mathbb{Z}_{\ge 0}$, an interdiction cost function $c: E\rightarrow \mathbb{Z}_{\ge 0}$, and an integer interdiction budget $B > 0$.

\textbf{Output}: A subset $I\subseteq E$ with $c(I)\le B$ that minimizes $\nu(G\backslash I)$.\\

Let $\nu_B^E(G)$ denote the optimal value for MMEIP.

Let $\rho(G, u, v)$ denote the total weight of a shortest path between $u$ and $v$ in an edge-weighted graph $G$. DSPEIP is defined as follows:\\

\textbf{Input}: An edge-weighted directed graph $G = (V, E)$ with a weight function $w: E\rightarrow \mathbb{Z}_{\ge 0}$, an interdiction cost function $c: E\rightarrow \mathbb{Z}_{\ge 0}$, an integer interdiction budget $B > 0$, and two distinct nodes $u, v\in V$.

\textbf{Output}: A subset $I\subseteq E$ with $c(I)\le B$ that maximizes $\rho(G\backslash I, u, v)$.\\

Let $\rho_B^E(G, u, v)$ denote the optimal value for DSPEIP.

A \emph{perfect matching} in a graph is a matching such that every vertex in the graph is incident with some edge in the matching. For an edge-weighted graph $G$, let $\mu(G)$ denote the weight of the minimum weight perfect matching if it exists. If no perfect matching exists in $G$, let $\mu(G) = \infty$. We introduce the Minimum Perfect Matching Edge Interdiction Problem (MPMEIP) as follows:\\

\textbf{Input}: An edge-weighted graph $G = (V, E)$ with edge weight function $w: E\rightarrow \mathbb{Z}_{\ge 0}$, interdiction cost function $c: E\rightarrow \mathbb{Z}_{\ge 0}$, and interdiction budget $B > 0$. It is assumed that, for every set $I\subseteq E$ with $c(I)\le B$, $G\backslash I$ has a perfect matching.

\textbf{Output}: A subset $I\subseteq E$ with $c(I)\le B$ that maximizes $\mu(G\backslash I)$.\\

Let $\mu_B^E(G)$ denote the optimal value for MPMEIP.

\subsection{Prior work}

Interdiction problems have practical applications in assessing the robustness of infrustructure networks. Some previous applications include drug trafficking \cite{Wood93}, military planning \cite{Ghare71}, protecting utility networks from terrorist attacks \cite{Salmeron09}, and controlling the spread of an infection \cite{Assimakopoulos87}. 

Many researchers have worked towards understanding the complexity of interdiction problems on general graphs. BCFIP is known to be strongly NP-complete on bipartite graphs and weakly NP-complete on series parallel graphs \cite{Sch-98}. Directed shortest path interdiction is strongly NP-complete on general graphs \cite{Malik89} and is strongly NP-hard to approximate within any factor better than 2 on general graphs \cite{Khachiyan08}. A node-wise variant of the shortest path interdiction problem is solvable in polynomial time \cite{Khachiyan08}. Heuristic solutions are known for shortest path interdiction. Israeli and Wood \cite{Israeli02} gave a MIP formulation and used Benders' Decomposition to solve it efficiently on graphs with up to a few thousand vertices.

The maximum flow interdiction problem is strongly NP-hard on general graphs, though a continuous variant of the problem has a pseudoapproximation that for every $\epsilon > 0$ either returns a solution within a $1 + \frac{1}{\epsilon}$-factor of the optimum or returns a better than optimal solution that uses at most $1 + \epsilon$ times the allocated budget \cite{Phillips03, Wood93}. A standard linear programming formulation of the maximum flow interdiction problem, even after adding two families of valid inequalities, has an integrality gap of $\Omega(n^{1 - \epsilon})$ for all $\epsilon\in (0, 1)$ \cite{Altner10}.

MMEIP is strongly NP-complete on bipartite graphs, even with unit edge weights and interdiction costs. Zenklusen \cite{Zenklusen10} introduced a constant factor approximation algorithm for MMEIP on graphs with unit edge weights. This algorithm makes use of iterative LP rounding. Zenklusen \cite{Zenklusen10} also showed that MMEIP is solvable in pseudo-polynomial time on graphs with bounded treewidth.

Fewer researchers have worked on interdiction problems restricted to planar graphs. Phillips \cite{Phillips93} gave a pseudo-polynomial time algorithm for the directed maximum flow edge interdiction problem. Zenklusen \cite{Zenklusen10NetworkFlow} extended this algorithm to allow for some vertex deletions and showed that the planar densest $k$-subgraph problem reduces to maximum flow interdiction with multiple sources and sinks on planar graphs. Zenklusen \cite{Zenklusen10} left the complexity of matching interdiction on planar graphs as an open problem.

\subsection{Our contributions}

\begin{figure}
\begin{center}
\begin{tabular}{|c|c|c|}
\multicolumn{3}{c}{ \bf Approximation factor lower bounds}\\ \hline
Problem name & General hardness & Planar hardness\\ \hline
Max-Flow Interdiction & Strongly NP-C & Pseudo-polynomial\cite{Phillips93}\\ \hline
DSPEIP & $(2 - \epsilon)$-inapproximable (NP-H) \cite{Khachiyan08}& \textbf{Strongly NP-C}\\ \hline
BCFIP & Strongly NP-C \cite{Sch-98} & \textbf{Strongly NP-C}\\ \hline
MPMEIP (\textbf{introduced}) & \textbf{Strongly NP-C} & \textbf{Strongly NP-C}\\ \hline
MMEIP & Strongly NP-C \cite{Zenklusen10} & \textbf{Strongly NP-C}\\ \hline
\end{tabular}
\vspace{.125 in}

\begin{tabular}{|c|c|c|}
\multicolumn{3}{c}{\bf Approximation factor upper bounds}\\ \hline
Problem name & General approximation & Planar approximation\\ \hline
Max-Flow Interdiction & Open & 1\cite{Phillips93}\\ \hline
DSPEIP & Open & Open\\ \hline
BCFIP & Open & Open\\ \hline
MPMEIP (\textbf{introduced}) & Open & Open\\ \hline
MMEIP & O(1) weighted \cite{Dinitz13}, 4 unweighted\cite{Zenklusen10} & \textbf{Pseudo-PTAS}\\ \hline
\end{tabular}
\end{center}

\caption{A summary of  known results about the problems we consider. Our results are displayed in bold. ``Open'' means that no nontrivial approximation algorithm is known. ``NP-C'' abbreviates ``NP-complete.'' ``NP-H'' abbreviates ``NP-hard.''}
\end{figure}

In this paper, we give a pseudo-polynomial time approximation scheme for MMEIP on undirected planar graphs. A \textit{pseudo-polynomial time approximation scheme} (Pseudo-PTAS) is an algorithm that takes a parameter $\epsilon\in (0, 1)$ as additional input and outputs a solution with objective value within a multiplicative $(1 + \epsilon)$-factor of the optimum (or a $(1 - \epsilon)$ factor for maximization problems). Furthermore, the algorithm terminates in polynomial time for fixed $\epsilon$. In Section \ref{PTASMaxInt}, we give a pseudo-polynomial  algorithm that achieves the following guarantee:

\begin{theorem}\label{PTASMaxMatchingThm}
Let $I$ be an edge set with $c(I)\le B$ that minimizes $\nu(G\backslash I)$. There is an algorithm that, for every $\epsilon > 0$, returns a set $\widehat{I}$ of edges for which $$\nu(G\backslash \widehat{I})\le (1 + \epsilon)\nu(G\backslash I)$$ and $c(\hat{I})\le B$. Furthermore, for every fixed $\epsilon$, the algorithm runs in polynomial time with respect to the size of the graph, the sum of the edge weights, and the sum of the interdiction costs of all edges (pseudo-polynomial time).
\end{theorem}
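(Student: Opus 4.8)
The plan is to apply Baker's layering technique, reducing the interdiction problem on the planar graph $G$ to a family of subproblems on bounded-treewidth graphs that can be solved exactly by Zenklusen's pseudo-polynomial algorithm for bounded treewidth, and then to stitch the subproblem solutions together with a knapsack-style dynamic program that distributes the interdiction budget among them. First I would fix an integer $k = \Theta(1/\epsilon)$, root $G$ at an arbitrary vertex, and compute the BFS layers $L_0, L_1, \ldots$. For each shift $j \in \{0, 1, \ldots, k-1\}$, the layers whose index is congruent to $j$ modulo $k$ form the cut layers $V_j$, and the maximal blocks of consecutive non-cut layers (together with their two bounding cut layers) form slabs $S_1^j, S_2^j, \ldots$. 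Consecutive slabs share only the vertices of one cut layer, and I would assign each edge to exactly one slab, so the slabs are edge-disjoint and cover $E$. Each slab spans at most $k+1$ consecutive layers, hence is $(k+1)$-outerplanar and has treewidth $O(k)$.

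For each slab $S_\ell^j$ and each budget value $b \in \{0, \ldots, B\}$, I would run an easy extension of Zenklusen's bounded-treewidth MMEIP algorithm to compute the entire budget–value table $b \mapsto \nu_b^E(S_\ell^j)$, the minimum maximum-matching weight achievable inside the slab when at most $b$ budget is spent there. Because the slabs are edge-disjoint and a matching of $G$ decomposes into a matching in each slab, I would combine these tables with a dynamic program over the slabs that minimizes $\sum_\ell \nu_{b_\ell}^E(S_\ell^j)$ subject to $\sum_\ell b_\ell \le B$; let $\widehat I_j$ be the union of the resulting per-slab interdictions and let $\widehat I$ be the best $\widehat I_j$ over the $k$ shifts. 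Feasibility $c(\widehat I) \le B$ is immediate, and since every edge lies in exactly one slab, a maximum matching of $G \setminus \widehat I_j$ restricts to a matching in each slab; summing yields the clean one-sided bound $\nu(G \setminus \widehat I_j) \le \sum_\ell \nu(S_\ell^j \setminus \widehat I_j) = \sum_\ell \nu_{b_\ell}^E(S_\ell^j)$.

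It remains to show that the best shift is nearly optimal. Using the slab interdictions induced by restricting the optimal set $I$ to each slab is budget-feasible by edge-disjointness and shows the dynamic program's value is at most $\sum_\ell \nu(S_\ell^j \setminus I)$, so it suffices to prove $\min_j \sum_\ell \nu(S_\ell^j \setminus I) \le (1+\epsilon)\,\nu(G \setminus I)$. Writing $H := G \setminus I$ and letting $M$ be a maximum-weight matching of $H$, the heart of the matter is to bound the excess $\sum_\ell \nu(S_\ell^j) - \nu(H)$. I would take a maximum matching $N_\ell$ in each slab and form $N = \bigcup_\ell N_\ell$; since slabs meet only at cut layers, $N$ has maximum degree two with every degree-two vertex in $V_j$, so deleting one edge at each such vertex turns $N$ into a matching of $H$ and bounds the excess by the weight removed at $V_j$. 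The key technical step, and the one I expect to be the main obstacle, is to charge this removed weight to the edges of $M$ incident to the cut layers $V_j$ (denote this edge set $\delta(V_j)$), via an alternating-path comparison between $N$ and $M$ inside each slab, where the only vertices left free within a slab are cut vertices whose $M$-edge was assigned to the neighboring slab. Granting such a bound of order $w(M \cap \delta(V_j))$, a standard shifting average finishes the argument: every edge of $M$ has endpoints in layers whose indices differ by at most one, so it lies in $\delta(V_j)$ for at most two of the $k$ shifts, whence $\sum_{j} w(M \cap \delta(V_j)) \le 2\,w(M) = 2\nu(H)$ and some shift satisfies $w(M \cap \delta(V_j)) \le (2/k)\nu(H)$. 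Choosing $k = \Theta(1/\epsilon)$ gives the $(1+\epsilon)$ guarantee.

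Finally, for fixed $\epsilon$ (hence fixed $k$) each slab has treewidth $O(k) = O(1)$, so each budget–value table is computed in time polynomial in the slab size, in $W = \sum_e w(e)$, and in $B \le C = \sum_e c(e)$; the budget-distribution dynamic program runs in time polynomial in the number of slabs and in $B$; and there are only $k$ shifts, giving the claimed pseudo-polynomial running time. The delicate point throughout is the bilevel interaction at slab boundaries: I must guarantee that the leader's per-slab interdictions, taken together, control the follower's global matching, which is precisely why both the edge-disjoint slab decomposition (yielding the one-sided bound $\nu(G \setminus \widehat I_j) \le \sum_\ell \nu(S_\ell^j \setminus \widehat I_j)$) and the alternating-path charging of the cross-boundary excess are needed.
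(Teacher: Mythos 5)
You have the right scaffolding (BFS layers, $k = \Theta(1/\epsilon)$ shifts, Zenklusen's bounded-treewidth algorithm as a subroutine, budget splitting, and a shifting average), but there is a genuine gap exactly where you flag one: the charging lemma is never proved, and in the form you state it (for an arbitrary fixed maximum matching $M$ and arbitrary tie-breaking inside the slabs), it is false. You bound the excess $\sum_\ell \nu(S_\ell^j \setminus I) - \nu(H)$ by the weight of the $N$-edges deleted at degree-two cut vertices, and then propose to bound that \emph{deleted weight} by $O\bigl(w(M \cap \delta(V_j))\bigr)$. Counterexample: let $H$ be the path $x'\,x\,v\,y\,y'$ with unit weights, where $v$ lies in the cut layer, $x',x$ above it and $y,y'$ below it. Under adversarial tie-breaking the slab maxima are $N_u = \{\{x,v\}\}$ and $N_d = \{\{v,y\}\}$, so one unit of weight is removed at $v$, while $M = \{\{x',x\},\{y,y'\}\}$ is a maximum matching of $H$ with $w(M \cap \delta(V_j)) = 0$. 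What is true (and what the averaging step actually needs) is a bound on the \emph{excess} itself, not on the removed weight; here the excess is $0$. Proving excess $\le 2\,w(M\cap\delta(V_j))$ can be done, but it needs a careful exchange argument: each positive-gain alternating component of $N_\ell \triangle (M \cap E(S_\ell^j))$ is swapped into $M$, the only conflicts are $M$-edges at cut vertices that lie in the neighboring slab, and maximality of $M$ charges the component's gain to those conflict edges, each charged at most twice. None of this is in your writeup; ``granting'' it is granting the heart of the proof.

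The paper avoids this entire difficulty with a different decomposition, and that is the idea you are missing. Instead of keeping the cut layers inside the slabs (which makes adjacent slabs share vertices and forces the exchange argument), the paper pulls the cut-layer edge sets $F_i$ out into their own graph $H_i$ (a vertex-disjoint union of two-layer pieces, hence constant treewidth) and keeps $G_i$ induced by $E(G)\setminus F_i$ (treewidth $O(k)$); the two parts are edge-disjoint, and each is interdicted as a single bounded-treewidth instance. Zenklusen's algorithm is run on $G_i$ with budget $b$ and on $H_i$ with budget $B-b$ for every $b \in \{0,\dots,B\}$; no per-slab knapsack DP is needed because $G_i$, though disconnected, is one bounded-treewidth graph. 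The analysis then needs only three one-line facts: monotonicity $\nu(G_i\setminus I)\le\nu(G\setminus I)$, subadditivity over an edge partition $\nu(G\setminus I_{ib})\le\nu(G_i\setminus I_{ib1})+\nu(H_i\setminus I_{ib2})$, and the even/odd averaging bound $\sum_{i=0}^{k-1}\nu(H_i\setminus I)\le 2\nu(G\setminus I)$, which yields a shift with $\nu(H_i\setminus I)\le\frac{2}{k}\nu(G\setminus I)$. In other words, the leader also interdicts the cut-layer edges, their matching weight is the error term, and no alternating-path comparison is ever required.
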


To obtain our Pseudo-PTAS for MMEIP, we extend Baker's technique \cite{Baker94} for interdiction problems on planar graphs. To our knowledge, our algorithm is the first using Baker's technique in solving any network interdiction problem. With Baker's technique, MMEIP can be solved approximately in pseudo-polynomial time by using using a tree decomposition with bounded width. We give the approximation ratio and run-time analysis in Section \ref{PTASMaxInt}.

A pseudo-polynomial algorithm is known for solving maximum flow interdiction on directed planar graphs. \cite{Phillips93} No pseudo-polynomial algorithms for solving other network interdiction problems on planar graphs are known. We explain this nonexistence by showing that BCFIP, DSPEIP, MPMEIP, and MMEIP are strongly NP-complete even on planar graphs. In Sections \ref{MultiMaxFlow}, \ref{ShortPathInt}, \ref{MinPerfectInt}, and \ref{MaxInt} respectively, we prove the following results:

\begin{theorem}\label{MaxFlowNPC}
It is strongly NP-complete to decide, given an integer budget $B > 0$ and an integer $k \ge 0$, whether or not $\alpha_B^E(G, s, t) > k$, even when $G$ is a directed planar graph and $s$ and $t$ border a common face.
\end{theorem}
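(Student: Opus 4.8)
The plan is to reduce from a known strongly NP-complete problem to BCFIP restricted to planar graphs. Since the excerpt tells us BCFIP is strongly NP-complete on bipartite graphs \cite{Sch-98}, the most natural starting point is to take a general (bipartite) BCFIP instance and transform it into a planar BCFIP instance while preserving the optimal value $\alpha_B^E(G,s,t)$. Thus the theorem is really a statement that BCFIP remains hard under the planarity restriction, and the whole burden falls on a planarization argument. My first step would be to fix a plane drawing of the input graph $G$ and identify its edge crossings; the goal is to locally replace each crossing with a small planar gadget that simulates the two independent edges passing through without allowing flow to ``leak'' from one to the other.

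\textbf{The crossing gadget.} The key device—flagged in the abstract as a \emph{one-vertex crossing gadget}—is a constant-size planar subgraph, containing a single new vertex, that replaces an intersection of two edges $e_1$ and $e_2$. The gadget must have the property that any feasible flow routes capacity along the $e_1$-direction and the $e_2$-direction independently, so that in the planarized graph the maximum $s$–$t$ flow value equals that of the original drawing for every choice of interdicted edge set $I$. Designing this gadget is the crux of the construction: it must (i) be planar, (ii) use only a single internal vertex, (iii) have appropriately chosen capacities $w$ so the two ``through'' flows cannot mix, and (iv) have transport costs $c$ assigned so that buying the gadget's edges within budget $B$ exactly mirrors buying the original edge in the BCFIP instance. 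Because the reduction must preserve strong NP-completeness, all capacities, costs, and the budget must stay polynomially bounded in the input, so the gadget's parameters can be at most a polynomial (indeed constant) blow-up.

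\textbf{Correctness and the value equality.} After replacing all crossings, I would argue the two directions of the reduction. In one direction, any feasible interdiction set $I$ in the original instance lifts to a feasible set $\widehat{I}$ in the planar instance with $\alpha(G[\widehat{I}],s,t)\ge \alpha(G[I],s,t)$, by routing each unit of flow through the corresponding gadget. In the other direction, given a feasible planar solution I would project it back, using the gadget's non-mixing property and a flow-decomposition argument to show the value does not increase. Combining both yields $\alpha_B^E(G_{\mathrm{planar}},s,t)=\alpha_B^E(G,s,t)$, so deciding whether this quantity exceeds $k$ on the planar instance decides the original BCFIP instance. Finally, I would verify that $s$ and $t$ can be placed on a common face, either by choosing the initial drawing so that $s,t$ lie on the outer face or by an extra routing of their incident edges around the outside.

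\textbf{Main obstacle.} The hardest part is the construction and verification of the one-vertex crossing gadget: I must choose capacities and costs so that the gadget transmits the two crossing flows faithfully and independently, and crucially so that interdiction decisions inside the gadget correspond exactly to deleting the original edge—never allowing a cheaper ``partial'' interdiction that breaks only one of the two crossing routes. Proving this rigorously, especially the non-interference of the two flows under all possible interdiction patterns within the fixed budget $B$, is where the real work lies; the rest is bookkeeping to confirm planarity, the common-face condition, and polynomial boundedness of all numerical parameters.
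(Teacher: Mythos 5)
There is a genuine gap, and it is exactly the step you flag as ``the crux'': your plan requires a local, one-vertex crossing gadget that preserves the maximum flow value of an \emph{arbitrary} BCFIP instance, for \emph{every} feasible purchase set $I$, so that $\alpha_B^E(G_{\mathrm{planar}},s,t)=\alpha_B^E(G,s,t)$. No such local gadget is constructed in your proposal, and it is very unlikely one exists: once you place a single vertex $x$ at the intersection of $e_1$ and $e_2$, the planar graph contains paths that enter on a piece of $e_1$ and exit on a piece of $e_2$, and no local choice of capacities and transport costs can forbid this mixing for all instances --- in particular when the two crossing edges have equal capacities, the gadget cannot even distinguish them. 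This is why the paper does \emph{not} planarize arbitrary (bipartite) BCFIP instances. Instead, it reduces from maximum independent set and builds highly structured instances with two properties: (i) there is always an optimal solution in which every edge carries either zero flow or full flow, and (ii) one can choose the layered embedding so that the two edges at any crossing receive \emph{distinct} capacities (Proposition~\ref{DistinctCapacitiesProp}). The non-mixing property is then enforced \emph{globally} by the budget, not locally by the gadget: costs are assigned by a sweepline rule (each edge pays its capacity once for each of the $r+3$ vertical cuts it crosses), so any flow of value $k|E(G)|^2$ costs at least $(r+3)k|E(G)|^2$ (Proposition~\ref{CostProp}), and any flow with partial flow on even one edge costs strictly more (Proposition~\ref{PartialFlowProp}). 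With the budget set exactly to $(r+3)k|E(G)|^2$, redistribution at a crossing would force partial flow on some edge (since the two incident capacities differ) and hence exceed the budget. The equivalence proved is only the decision-level statement ``independent set of size $k$ exists iff a flow of value $k|E(G)|^2$ within budget exists,'' not value preservation under every purchase set.

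Two further points worth internalizing. First, your reduction direction is off: starting from bipartite BCFIP hardness \cite{Sch-98} does not help, because the known hard instances there need not have the all-or-nothing flow structure that makes one-vertex crossing removal sound; the paper re-proves general-graph hardness from independent set (Lemma~\ref{WeakMaxFlowNPC}) precisely to obtain instances with that structure. Second, in the planarized instance the transport costs are not inherited from the original instance at all --- they are manufactured from the geometry of the embedding --- so the statement ``buying the gadget's edges mirrors buying the original edge'' cannot be arranged independently of the rest of the construction; the budget and the cost function are a single global mechanism. In short, you have correctly identified where the difficulty lies, but the approach of a universally value-preserving local gadget would fail; the actual proof succeeds by tailoring the source problem and the cost structure so that the trivial gadget (one vertex) becomes sound.
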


\begin{theorem}\label{ShortestPathNPC}
Given an edge-weighted directed planar graph $G$, it is strongly NP-complete to decide, given an integer budget $B > 0$ and an integer $k \ge 0$, whether or not $\rho_B^E(G, u, v) > k$.
\end{theorem}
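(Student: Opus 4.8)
The plan is to establish both membership in NP and strong NP-hardness, the latter by a reduction from the planar budget-constrained flow improvement problem of Theorem \ref{MaxFlowNPC}. To place DSPEIP in NP, a witness is an interdiction set $I$: one verifies $c(I)\le B$ and computes $\rho(G\setminus I,u,v)$ by a single shortest-path computation (Dijkstra suffices since $w\ge 0$, with the convention $\rho=\infty$ when $v$ is unreachable), so the existential predicate $\rho_B^E(G,u,v)>k$ is verifiable in polynomial time. Since this verification is polynomial even in the unary-encoded input, it suffices to exhibit a polynomial, weight-preserving reduction \emph{into} DSPEIP to inherit strong completeness.

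For the hardness reduction I would exploit the classical planar duality between minimum $s$--$t$ cuts and shortest paths, which is exactly why Theorem \ref{MaxFlowNPC} is stated with $s$ and $t$ bordering a common face. Given a BCFIP instance $(G,w,c,B,s,t)$ with $G$ planar and $s,t$ on a common face, form the planar dual $G^*$ together with the two distinguished faces $f_s,f_t$ obtained by splitting that common face along the $s$--$t$ boundary; under this correspondence the minimum $s$--$t$ cut of any subgraph equals the shortest $f_s$--$f_t$ path in $G^*$ when each dual edge $e^*$ is given length $w(e)$. Because $\alpha(G[I],s,t)$ equals the min cut of $G[I]$, and removing a primal edge merges its two incident faces (contracts $e^*$, i.e.\ sets its dual length to $0$), the BCFIP objective is precisely the shortest $f_s$--$f_t$ distance in $G^*$ after the excluded edges have been contracted, and maximizing over $I$ with $c(I)\le B$ becomes a shortest-path maximization in the dual.

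The one genuine obstacle is that BCFIP's mechanism is edge \emph{exclusion}, which contracts a dual edge to length $0$, whereas DSPEIP's mechanism is edge \emph{deletion}, which \emph{lengthens} paths; the two act in opposite directions. I would bridge this with a small planar ``theta'' gadget that replaces each dual edge $e^*$ on endpoints $a,b$ by two internally disjoint $a$--$b$ routes: a length-$0$ route containing a single deletable edge of interdiction cost $c(e)$, and a length-$w(e)$ backup route whose edges are assigned interdiction cost $B+1$ and hence are never deleted within budget. With nothing deleted the gadget has $a$--$b$ distance $0$, modelling exclusion/contraction; deleting the cost-$c(e)$ edge forces distance $w(e)$, modelling inclusion of $e$ at full capacity. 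Thus every within-budget interdiction set in the gadgeted dual $H$ corresponds exactly to a BCFIP solution $I$ with $c(I)\le B$, giving $\rho_B^E(H,f_s,f_t)=\alpha_B^E(G,s,t)$, so deciding $\rho_B^E>k$ decides $\alpha_B^E>k$.

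To close the argument I would verify three points, none of which I expect to be hard once the framework is fixed: the backup routes keep $f_s$ and $f_t$ connected under every within-budget interdiction, so $\rho$ stays finite and the leader cannot win trivially by disconnecting $u$ from $v$; the gadgets are local and preserve planarity of $H$, and edges can be oriented consistently with the dual shortest-path direction so that $H$ is a directed planar graph as DSPEIP requires; and the construction has size polynomial in the input and reproduces the weights $w$ and costs $c$ up to the additive backup costs, so the reduction is polynomial and weight-preserving and strong completeness is inherited from Theorem \ref{MaxFlowNPC}. The main conceptual work, and the step most likely to conceal subtleties, is setting up the \emph{directed} planar cut/shortest-path duality together with the theta gadget so that the arc orientations, the choice of $f_s,f_t$, and the contraction-versus-deletion conversion all line up simultaneously.
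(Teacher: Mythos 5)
Your proposal follows the same skeleton as the paper's proof: NP membership via a shortest-path computation, hardness via a reduction from planar BCFIP with $s$ and $t$ on a common face (Theorem \ref{MaxFlowNPC}), planar cut/path duality with the common face split into the two path terminals, and a per-edge gadget consisting of a deletable weight-$0$ copy of cost $c(e)$ in parallel with a non-deletable weight-$w(e)$ copy of cost $B+1$, so that deleting the cheap copy ``activates'' the edge's capacity. Up to this point your theta gadget is exactly the paper's pair $g_{3b}(e)$ (weight $0$, cost $c(e)$) and $g_{3a}(e)$ (weight $w(e)$, cost $B+1$).

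The genuine gap is precisely the issue you defer to the end as ``most likely to conceal subtleties'': the \emph{directed} duality does not work with your two-route gadget alone. A directed cut $\delta^+(S)$ only charges edges crossing from the $s$-side to the $t$-side, but its dual cycle must cross \emph{every} edge of $\delta(S)$, including the edges of $\delta^-(S)$, which it crosses in the opposite (left) orientation. Your gadget's two routes are both oriented for one crossing direction, so a directed $u$--$v$ path in the gadgeted dual can only realize cuts whose dual cycle crosses every primal edge in the right orientation; cuts with backward-crossed edges are simply not representable as directed paths. Consequently the inequality $\rho_B^E(H,f_s,f_t)\le \alpha_B^E(G,s,t)$ can fail: the shortest path may be strictly longer than the minimum cut (or no directed $f_s$--$f_t$ path may exist at all, giving $\rho=\infty$), so the decision equivalence breaks. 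There is no single ``consistent'' orientation that fixes this, because one path may need to cross different edges in different orientations. The paper's construction resolves it with a \emph{third} copy per edge, $g_{3c}(e)$: a reverse-direction, weight-$0$, cost-$(B+1)$ edge that lets the path cross any primal edge in the left orientation for free, exactly mirroring the fact that such edges do not count toward a directed cut (see Figure \ref{CrossingOrientations} and the partition of a path into $P_{right}$ and $P_{left}$ in the paper's argument). Adding that third copy to your gadget, and then proving both inequalities via Proposition \ref{cut-cycle-duality} as the paper does, closes the gap; without it the reduction as you stated it is not correct for directed instances.
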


\begin{theorem}\label{MinPerfectMatchingNPC}
Given an edge-weighted undirected bipartite planar graph $G$, it is strongly NP-complete to decide, given an integer budget $B > 0$ and an integer $k \ge 0$, whether or not $\mu_B^E(G) > k$.
\end{theorem}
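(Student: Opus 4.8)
The plan is to prove Theorem \ref{MinPerfectMatchingNPC} by reducing from DSPEIP on planar graphs, which was just shown to be strongly NP-complete in Theorem \ref{ShortestPathNPC}. Membership in NP is immediate: given a candidate set $I$, one checks $c(I)\le B$ and computes $\mu(G\setminus I)$ with the Hungarian algorithm in polynomial time, then compares it to $k$. Since DSPEIP is strongly NP-complete and the reduction below introduces only polynomially bounded weights and interdiction costs, strong NP-completeness will transfer; the whole content of the proof is therefore in the construction and its correctness.

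For the reduction, take a planar DSPEIP instance $(G_0,w_0,c_0,B_0,u,v)$ with threshold $k$. First delete every in-edge of $u$ and every out-edge of $v$; because $w_0\ge 0$, no shortest $u$--$v$ path (and no useful interdiction) ever uses such an edge, so $\rho_B^E(G_0,u,v)$ is unchanged. I would then build an undirected bipartite graph $G$ by splitting each vertex $x$ into a left copy $x_L$ and a right copy $x_R$, joined by a \emph{stay} edge $x_Lx_R$ of weight $0$; encoding each directed edge $(x,y)$ as the undirected edge $x_Ly_R$ of weight $w_0(x,y)$; adding a single edge $v_Lu_R$ of weight $0$; and giving $u$ and $v$ no stay edge. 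In a perfect matching the non-stay edges form vertex-disjoint directed cycles of $G_0\cup\{(v,u)\}$; since $u_R$ and $v_L$ must be matched and the only in-edge of $u$ and out-edge of $v$ is now $(v,u)$, the edge $v_Lu_R$ is forced, and minimizing weight over the remainder leaves exactly a $u$--$v$ path. Hence $\mu(G)$ equals the shortest $u$--$v$ path length in $G_0$.

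To transfer the interdiction budget I would set $c(x_Ly_R)=c_0(x,y)$ for edges coming from $G_0$, make every stay edge and the edge $v_Lu_R$ unremovable by assigning them interdiction cost $B_0+1$, and keep $B=B_0$; then feasible interdictions of $G$ correspond exactly to feasible interdictions of $G_0$, and $\mu(G\setminus I)=\rho(G_0\setminus I,u,v)$. The one subtlety is the MPMEIP precondition that a perfect matching survives every feasible $I$, which fails precisely when the leader disconnects $u$ from $v$. I would repair this by appending a protected backup path $u\to p_1\to\cdots\to p_m\to v$ of total weight $k+1$, giving each internal $p_i$ its own stay edge and assigning every backup edge cost $B_0+1$. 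This path plus the stay edges is always available, so a perfect matching always exists and $\mu(G\setminus I)=\min(k+1,\rho(G_0\setminus I,u,v))$. Consequently $\mu_B^E(G)>k$ if and only if $\rho_B^E(G_0,u,v)>k$, and the target graph is bipartite by the left/right construction.

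The main obstacle I expect is preserving planarity under the vertex split: if the edges incident to $x$ interleave in-edges and out-edges in the planar rotation, then $x_L$ and $x_R$ cannot be separated by the stay edge without forcing a crossing. I plan to handle this by replacing each vertex with a small planar bipartite \emph{routing gadget} (a short even cycle carrying a single token, to which the incident edges attach in their native cyclic order) that reproduces the flow-conservation behavior of the naive split while remaining planar; routing $v_Lu_R$ and the backup path additionally requires $u$ and $v$ to lie on a common face, which can be arranged. The crux is then to verify that the gadgetized graph is still bipartite, that its minimum-weight perfect matching still equals the shortest path, and that the token gadgets cannot be exploited to produce an illegitimately cheap matching.
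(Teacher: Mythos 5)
Your overall strategy is the same as the paper's: reduce from planar DSPEIP (Theorem \ref{ShortestPathNPC}), keep the original interdiction costs on weight-carrying edges, and make all auxiliary edges uninterdictable with cost $B_0+1$; your NP-membership argument and budget transfer are fine. The genuine gap is exactly where the paper's technical contribution lies: the planarity-preserving replacement for the classical vertex split of \cite{Hoffman63}. Your ``routing gadget'' is only sketched, and as described it fails concretely. First, an even cycle with incident edges attached in their native cyclic order always has an internal perfect matching of weight zero; since after gadgetization every vertex of the final graph lies in some gadget and all gadget edges have weight $0$, the union of these internal matchings is a perfect matching of weight $0$ that survives every feasible interdiction set, so $\mu(G\setminus I)=0$ identically and the reduction collapses. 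In your naive split this is prevented by giving $u$ and $v$ no stay edge, but that forcing property does not carry over to a cycle gadget (an odd gadget would force passage but destroy bipartiteness), and you never say how $u$ and $v$ are gadgetized. Second, for internal vertices correctness requires that a gadget be traversable only by matching one in-edge together with one out-edge. For an even cycle this is a parity condition: the externally matched attachment points must alternate between the two color classes of the cycle. ``Native cyclic order'' does not determine which class each attachment lands in; if an in-attachment and an out-attachment share a class, that legitimate passage is blocked (completeness fails), and if two in-attachments lie in opposite classes, both can be matched simultaneously, so the matched external edges form closed walks that traverse directed edges backwards, which can be strictly cheaper than any directed $u$--$v$ path (soundness fails). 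By your own framing (``the crux is then to verify\dots''), all of this is deferred, so the proposal is a plan rather than a proof.

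These defects are plausibly repairable (attach in-edges only to even positions and out-edges only to odd positions of a cycle of length $2\deg(x)$, and build separate pendant-style forcing gadgets at $u$ and $v$), but carrying that out, proving global bipartiteness, and disposing of the extra closed walks that multiple passages permit is essentially the entire content of the paper's Section \ref{MinPerfectInt}. It is worth seeing how the paper sidesteps hand-designed matching gadgets: it does the routing in the directed world, where flow conservation is automatic. Each high-degree vertex is replaced by a zero-weight, cost-$(B+1)$ directed cycle respecting the cyclic order; the resulting max-degree-$3$ planar digraph is turned into its directed line graph, which is planar by Proposition \ref{planar-directed-line-graph}; and each line-graph vertex is subdivided into a three-edge path whose middle edge carries the original weight and interdiction cost. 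Bipartiteness then follows from a parity count over cycles, the forced pendant vertices at the ends play the role of your missing forcing gadgets, and minimality is used to strip superfluous directed cycles out of a minimum perfect matching. Until your gadget is specified at that level of detail and the three verifications you defer are actually carried out, the statement is not proved.
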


\begin{theorem}\label{MaxMatchingNPC}
Given an edge-weighted undirected bipartite planar graph $G$, it is strongly NP-complete to decide, given an integer budget $B > 0$ and an integer $k \ge 0$, whether or not $\nu_B^E(G) < k$.
\end{theorem}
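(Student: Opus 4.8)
The plan is to reduce from the minimum perfect matching interdiction problem, whose strong NP-completeness on bipartite planar graphs is exactly Theorem \ref{MinPerfectMatchingNPC}. The guiding observation is that on a graph where every within-budget interdiction leaves a perfect matching intact, maximizing the surviving minimum weight \emph{perfect} matching and minimizing the surviving maximum weight matching are two sides of the same coin, linked by complementing edge weights. Since the reduced graph will be literally the same graph, planarity and bipartiteness are preserved for free, and the only real work is arranging the weights.

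Concretely, I would take an MPMEIP instance $(G, w, c, B, k)$ on a bipartite planar graph, where by the standing MPMEIP assumption every $I$ with $c(I)\le B$ leaves $G\backslash I$ with a perfect matching, and where, by strong NP-completeness of the source, the weights are bounded by a polynomial in the input size. Let $n = |V(G)|$ (necessarily even) and $w_{\max} = \max_e w_e$. I would build an MMEIP instance on the same graph $G$, with the same interdiction costs $c$ and the same budget $B$, but with complemented weights $w'_e = W - w_e$ for a large integer $W$, e.g.\ $W = (n/2)\, w_{\max} + 1$. This choice keeps $w'_e \ge 0$ and keeps all numbers polynomially bounded, so strong NP-completeness is preserved. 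The decision question I would output is whether $\nu_B^E(G') < (n/2)\,W - k$.

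The heart of the correctness argument is that, for this $W$, a maximum weight matching of $G'\backslash I$ is forced to be a perfect matching whenever $c(I)\le B$: every perfect matching uses exactly $n/2$ edges each of weight at least $W - w_{\max}$, while any non-perfect matching uses at most $n/2 - 1$ edges each of weight at most $W$, and the inequality $(n/2)(W - w_{\max}) > (n/2 - 1)\,W$ forced by $W > (n/2)\, w_{\max}$ makes the former total strictly larger; a perfect matching exists by hypothesis. Summing $w'_e = W - w_e$ over a perfect matching then gives $\nu(G'\backslash I) = (n/2)\,W - \mu(G\backslash I)$ for every within-budget $I$, so taking the minimum over all such $I$ yields $\nu_B^E(G') = (n/2)\,W - \mu_B^E(G)$. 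Hence $\nu_B^E(G') < (n/2)\,W - k$ if and only if $\mu_B^E(G) > k$, as required. Membership in NP is immediate, since an interdiction set $I$ is a certificate and both $c(I)\le B$ and $\nu(G'\backslash I) < k'$ can be checked in polynomial time using any polynomial maximum weight matching algorithm.

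The step I expect to demand the most care is exactly the claim that the maximum weight matching of the complemented graph is perfect: this is where the MPMEIP standing assumption that a perfect matching survives \emph{every} within-budget interdiction becomes essential, because without a surviving perfect matching the clean identity $\nu(G'\backslash I) = (n/2)\,W - \mu(G\backslash I)$ would break down. Verifying that $W$ can be chosen both large enough to force perfection and small enough to remain polynomially bounded (so that the reduction is strong rather than merely weak) is routine but should be spelled out explicitly.
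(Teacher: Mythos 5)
Your proposal is correct, and it reduces from the same source as the paper, namely minimum perfect matching interdiction on bipartite planar graphs (Theorem \ref{MinPerfectMatchingNPC}), on the identical graph with the identical budget, via weight manipulation. The structural difference is that the paper does the manipulation in two stages while you do it in one. The paper first complements with $W = \max_e w(e)$, proving $\gamma_{w'}(G\backslash I) = \frac{W|V(G)|}{2} - \mu_{w}(G\backslash I)$, which passes through an intermediate problem (interdiction of the \emph{maximum} weight \emph{perfect} matching); it then adds $2\nu_{w'}(G) + 2$ to every edge weight, and proves in a second lemma that this additive boost forces any maximum weight matching under $w''$ to be perfect, giving $\nu_{w''}(G\backslash I) = \gamma_{w'}(G\backslash I) + |V(G)|\nu_{w'}(G) + |V(G)|$. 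You merge both jobs into a single complementation by taking the constant large, $W = (n/2)w_{\max}+1$: the complementation flips minimization into maximization, and the size of $W$ simultaneously forces maximum weight matchings of $G'\backslash I$ to be perfect, yielding the single identity $\nu(G'\backslash I) = (n/2)W - \mu(G\backslash I)$. Your forcing inequality $(n/2)(W - w_{\max}) > (n/2 - 1)W$ is exactly the condition $W > (n/2)w_{\max}$, and it uses the MPMEIP standing assumption in precisely the same place the paper's second lemma does, so the argument is sound; both constructions keep all numbers polynomially bounded, so strong NP-completeness transfers either way. What the paper's version buys is modularity (the intermediate problem and its lemma are isolated and independently quotable); what yours buys is economy: one weight function, one lemma, no auxiliary problem. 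One cosmetic point: when $k > (n/2)W$ your output threshold $(n/2)W - k$ is negative, but since $\mu_B^E(G) \le (n/2)w_{\max} < W$ such source instances are trivially no-instances and can be mapped to a fixed no-instance, so this does not affect correctness.
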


The strong NP-completeness of MMEIP on planar graphs implies that our Pseudo-PTAS is optimal with respect to approximation ratio, resolving the question asked by Zenklusen \cite{Zenklusen10} about the complexity of MMEIP on planar graphs.

We introduce a novel crossing removal technique to show that the BCFIP problem is strongly NP-complete on directed planar graphs. Prior NP-completeness results on planar graphs (e.g. \cite{Garey76, Garey77, Stockmeyer73}) reduce problems on general graphs to their counterparts on planar graphs by replacing edge crossings with a constant-size graph called a \textit{crossing gadget}. While crossing gadgets are known for many different problems, they change the numerical value of the solution and, therefore, cannot be approximation-preserving. To our knowledge, we introduce the first simple crossing gadget that preserves the optimal value.

We then show that the reduction from maximum flow interdiction (with source and sink on the same face) to multi-objective shortest path on planar graphs \cite{Phillips93} also gives an approximation-preserving reduction from $s-t$ planar BCFIP to shortest path interdiction on directed planar graphs. A well-known reduction from the shortest path problem to the assignment problem \cite{Hoffman63} fails to preserve planarity. We give a new approximation-preserving reduction that reduces directed shortest path interdiction to minimum perfect matching interdiction on planar graphs. Finally, we reduce minimum perfect matching interdiction to MMEIP on planar graphs using edge weight manipulations. Our results distinguish BCFIP, shortest path interdiction, minimum perfect matching interdiction, and MMEIP from the maximum flow interdiction problem, which is solvable in pseudo-polynomial time on planar graphs.

In Section \ref{Preliminaries}, we give some notation that we use throughout this paper. In Section \ref{PTASMaxInt}, we present our Pseudo-PTAS for MMEIP. In Section \ref{MultiMaxFlow}, we prove that BCFIP is strongly NP-complete on planar graphs. In Section \ref{ShortPathInt}, we reduce BCFIP to directed shortest path interdiction on planar graphs. In Section \ref{MinPerfectInt}, we reduce directed shortest path interdiction to minimum perfect matching interdiction. In Section \ref{MaxInt}, we reduce minimum perfect matching interdiction to MMEIP. We conclude with Section \ref{Conclusion}, in which we discuss open problems relating to interdiction.

\section{Preliminaries}\label{Preliminaries}

For a (undirected or directed) graph $G$, let $V(G)$ denote its vertex set and $E(G)$ denote its edge set. Edges in a directed graph are denoted by ordered pairs $(u, v)$ for $u, v\in V(G)$, while edges in an undirected graph are denoted by unordered pairs $\{u, v\}$. For a (undirected or directed) graph $G$, let $G[U]$ denote the subgraph induced by the vertices in $U\subseteq V(G)$, i.e. the subgraph of $G$ for which $V(G[U]) = U$ and $E(G[U]) = \{\{u, v\}\in E(G): u, v\in U\}$ (undirected graph, for a directed graph replace $\{u, v\}$ with $(u, v)$). For $F\subseteq E(G)$, let $G[F]$ denote the subgraph of $G$ induced by the edges in $F$, i.e. the graph with $V(G[F]) = \{v\in V(G): \exists w | \{w, v\}\in F\}$ (undirected) or $V(G[F]) = \{v\in V(G): \exists  (w, v)\in F\vee (v, w)\in F\}$ (directed) and $E(G[F]) = F$.

For a set $S\subseteq V(G)$, let $\delta(S)\subseteq E(G)$ be the set of edges with exactly one endpoint in $S$. For a directed graph, let $\delta^+(S) = \{(u, v)\in E(G): u\in S, v\notin S\}$ and let $\delta^-(S) = \{(u, v)\in E(G): u\notin S, v\in S\}$. Note that $\delta(S) = \delta^-(S)\cup \delta^+(S)$ for any $S\subseteq V(G)$ if $G$ is directed. For a vertex $v\in V(G)$, let $\delta(v) = \delta(\{v\})$, $\delta^+(v) = \delta^+(\{v\})$, and $\delta^-(v) = \delta^-(\{v\})$. For a directed graph $G$, let $s, t: E(G)\rightarrow V(G)$ denote $s((u, v)) = u$ and $t((u, v)) = v$ for all $(u, v)\in E(G)$. For any function $f: E(G)\rightarrow \mathbb{R}$ and $F\subseteq E(G)$, let $f(F) = \sum_{e\in F} f(e)$.

\subsection{Tree decompositions and treewidth}
For a set $S$, let $\mathcal{P}(S)$ denote its power set. For an undirected graph $G$, a tree decomposition \cite{RS84, Halin76} of $G$ is a pair $(T, \mathcal{B}: V(T)\rightarrow \mathcal{P}(V(G)))$ where $T$ is a tree. Furthermore, $\mathcal{B}$ has the following properties:

\begin{itemize}
\item $\cup_{w\in V(T)}\mathcal{B}(w) = V(G)$
\item For all $\{u, v\}\in E(G)$, there is some $w\in V(T)$ such that $u, v\in \mathcal{B}(w)$.
\item For any $v\in V(G)$, let $U_v\subseteq V(T)$ be the set of vertices $w\in V(T)$ for which $v\in \mathcal{B}(w)$. Then, $T[U_v]$ is connected.
\end{itemize}

Let $k_T = \max_{w\in V(T)} |\mathcal{B}(w)| - 1$ denote the \textit{width} of $T$. The treewidth of $G$ is the minimum width of any tree decomposition of $G$.

\subsection{Planar graphs}
In this paper, we use the combinatorial definition of planar graphs given in \cite{GraphsOnSurfaces}. First, we define undirected planar graph embeddings. Consider a set of objects $E$, which can be thought of as a set of undirected edges. The set of \emph{darts} is defined as $F = E\times \{-1, 1\}$. Define a graph $G$ from $E$ as a pair $(\pi, F)$ where $\pi$ is a permutation on the edge set. $\pi$ is a product of cycles. Let $V(G)$, the vertex set of $G$, be the set of these cycles.

Define three maps $\text{head}_G: F\rightarrow V(G)$, $\text{tail}_G: F\rightarrow V(G)$, and $\text{rev}: F\rightarrow F$ as follows. Let $\text{rev}((e, x)) = (e, -x)$. Note that each dart $d\in F$ is in a unique cycle of $V(G)$. Let $\text{head}_G(d)$ be the unique cycle containing $d$ and let $\text{tail}_G(d) = \text{tail}_G(\text{rev}(d))$. An undirected planar graph is also an undirected graph with vertex set $V(G)$ and edge set $E(G) = \{\{\text{tail}_G(d), \text{head}_G(d)\}: \forall d\in F\}$ and let $E_G: F\rightarrow E(G)$ be defined by $E_G(d) = \{\text{tail}_G(d), \text{head}_G(d)\}$ for any $d\in F$.

The \emph{dual graph} of $G$ is the graph $G^*:= (\pi\circ\text{rev}, F)$. For each edge $e$, there are precisely two darts $d_1$ and $d_2$ such that $E_G(d_1) = E_G(d_2) = e$. Let $D_G: E(G)\rightarrow E(G^*)$ be the bijection defined by $D_G(e) := E_{G^*}(d_1)$. Note that $E_{G^*}(d_1) = E_{G^*}(d_2)$ since $d_1 = \text{rev}(d_2) = \text{rev}(d_2)$. $G$ is a \emph{undirected planar graph} if $G$ satisfies Euler's formula, i.e. $|V(G)| - |E(G)| + |V(G^*)| = 2$. The set of \emph{faces} of $G$ is defined to be $F(G) := V(G^*)$. Define $B_G: F(G)\rightarrow 2^{V(G)}$ by $B_G(f) := V(G[D_G^{-1}(\delta(f))])$ and call $B_G(f)$ the set of \emph{boundary vertices} of $f$.

Now, we define directed planar graphs. Let $E$ and $E'$ be sets of objects with $E\subseteq E'$ and let $R: E'\rightarrow E'$ be a bijection with $R\circ R = \text{id}$ and $R(e')\ne e'$ for all $e'\in E'$. $E$ can be thought of as a set of directed edges of a graph, $E'$ can be thought of as the edge set of the undirected version of the graph, and $R$ can be thought of as the map from an edge to its reverse direction copy. $R$ induces a perfect matching on the elements of $E'$. A \emph{directed planar graph} $G$ is a tuple $(\pi, E, E', R)$, where $\pi$ is a permutation on the elements of $E'$. Furthermore, there is a bijection $b$ between $E'$ and a set of darts $F$ such that $b(R(e')) = \text{rev}(b(e'))$ for all $e'\in E'$, where $H = (\pi, b(E'))$. We put a further constraint on $G$ that $(\pi, b(E'))$ must be an undirected planar graph. Let $V(G) := V((\pi, b(E')))$ and $E(G) := \{(\text{tail}_G(b(e), \text{head}_G(b(e))): \forall e\in E\}$.

The \emph{dual graph} $G*$ of a directed planar graph $G$ is the directed planar graph $(\pi\circ\text{rev}, E, E', R)$. There is a bijection $c_G: E\rightarrow E(G)$ defined by $c_G(e) = (\text{tail}_G(b(e), \text{head}_G(b(e)))$ for all $e\in E$. Let $D_G: E(G)\rightarrow E(G^*)$ be the map $D_G = c_{G^*}\circ c_G^{-1}$. Define the set of faces $F(G) := F(H)$ and the boundary vertices by $B_G := B_H$.

Since $\text{rev}\circ \text{rev} = \text{id}$, the identity permutation, $(G^*)^* = G$. One can also show the following propositions:

\begin{proposition}\label{cut-cycle-duality}
In an undirected planar graph $G$, let $C_{s, t}\subseteq V(G)$ be a set such that $\delta(C_{s, t})$ is a minimum $s-t$ cut with respect to some edge capacity function $c:E(G)\rightarrow \mathbb{R}_{\ge 0}$. Furthermore, suppose that $s, t\in B_G(f_{\infty})$ for some $f_{\infty}\in F(G)$. Then, $D_G(\delta(C_{s, t}))$ is a simple cycle in $G^*$ that passes through $f_{\infty}$.
\end{proposition}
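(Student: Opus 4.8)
The plan is to exploit the standard cut--cycle duality between a planar graph and its dual and then use the minimality of the cut to upgrade a generic even subgraph to a single simple cycle. Write $C := C_{s,t}$ and assume without loss of generality that $s \in C$ and $t \notin C$. First I would show that $D_G(\delta(C))$ is an \emph{even subgraph} of $G^*$, i.e. every vertex of $G^*$ (equivalently, every face of $G$) is incident with an even number of its edges. This is a parity argument: fix a face $f \in F(G) = V(G^*)$ and traverse its boundary; each boundary edge of $f$ lying in $\delta(C)$ is exactly a place where the traversal switches between $C$ and $V(G)\setminus C$, and since the traversal is closed it switches an even number of times. Hence the degree of $f$ in the subgraph $D_G(\delta(C))$ is even, which is precisely the statement that $D_G(\delta(C))$ lies in the cycle space of $G^*$.

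Second, any even subgraph decomposes into edge-disjoint simple cycles. Because $(G^*)^* = G$, the inverse map $D_G^{-1}$ (which equals $D_{G^*}$) sends each simple cycle of $G^*$ to a bond (an inclusion-minimal cut) of $G$. Writing the decomposition as $D_G(\delta(C)) = \gamma_1 \sqcup \cdots \sqcup \gamma_k$ and $D_{G^*}(\gamma_i) = \delta(S_i)$, edge-disjointness gives $\delta(C) = \delta(S_1)\,\triangle\cdots\triangle\,\delta(S_k) = \delta(S_1\triangle\cdots\triangle S_k)$, so the $S_i$ recover $C$ by symmetric difference. Since $s$ and $t$ lie on opposite sides of $C$, their membership parities among the $S_i$ differ, so at least one $\delta(S_i)$ is itself an $s$--$t$ cut. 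As the $\gamma_i$ are edge-disjoint and the capacities are nonnegative, $c(\delta(S_i)) \le c(\delta(C))$, and minimality forces equality with every other $\gamma_j$ carrying only zero-capacity edges. Discarding the zero-capacity cycles (equivalently, replacing $C$ by the appropriate symmetric difference) leaves a minimum cut whose dual is a single simple cycle; when all capacities are positive this already forces $k = 1$, so $D_G(\delta(C))$ is simple as it stands.

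Finally, I would check that the cycle passes through $f_\infty$. Because $s, t \in B_G(f_\infty)$ lie on opposite sides of the cut, traversing the boundary of $f_\infty$ switches between $C$ and its complement, so $f_\infty$ has degree at least $2$ in $D_G(\delta(C))$ and therefore lies on the cycle.

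The main obstacle is the uncrossing step in the second paragraph: controlling the case where the even subgraph is not already a disjoint union of $2$-regular pieces, i.e. where some face $f$ carries four or more cut edges and its dual vertex has degree $\ge 4$. Here the planar rotation system at $f$ must be used to split the incident darts into two non-crossing even subgraphs, each of which again dualizes to a cut; the same minimality and edge-disjointness accounting then shows that only one piece can separate $s$ from $t$ at the full minimum capacity, so a high-degree dual vertex cannot survive in the reduced cut. Making this splitting canonical, and verifying that the reduced cut still separates $s$ and $t$ and still has minimum value, is the only genuinely delicate part of the argument.
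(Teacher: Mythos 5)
The paper never actually proves this proposition --- it is introduced with ``One can also show the following propositions,'' and only Proposition~\ref{boundary-deletions} receives a proof --- so there is no argument of the authors to compare yours against; your proof stands or falls on its own, and it essentially stands. The first three paragraphs are the standard cut--cycle duality argument, and they are complete: evenness of $D_G(\delta(C))$ by the closed-boundary-walk parity argument; decomposition of an even subgraph into edge-disjoint simple cycles $\gamma_1,\hdots,\gamma_k$; dualization of each $\gamma_i$ to a bond $\delta(S_i)$ of $G$; the identity $\delta(C)=\delta(S_1\triangle\cdots\triangle S_k)$ over $\mathrm{GF}(2)$ plus the membership parities of $s$ and $t$, which yield an index $i$ with $\delta(S_i)$ an $s$--$t$ cut; and minimality plus nonnegativity of $c$ to force every other $\gamma_j$ onto zero-capacity edges. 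Your caution about zero capacities is not pedantry but a necessary correction: as literally stated the proposition is false for $c\ge 0$ (hang a triangle of zero-capacity edges off $t$, with $\{s,t\}$ of capacity $1$, and put the far vertex of the triangle into $C_{s,t}$; this is a minimum cut whose dual is a digon plus a self-loop, not a simple cycle), so one must either assume $c>0$ or weaken the conclusion to ``some minimum $s$--$t$ cut dualizes to a simple cycle through $f_\infty$'' --- exactly your reformulation, and all that the reductions in Sections~\ref{MultiMaxFlow} and~\ref{ShortPathInt} actually use.

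Two corrections to your own assessment of the proof. First, the ``main obstacle'' of your last paragraph is not an obstacle: an even subgraph of any multigraph decomposes into edge-disjoint simple cycles by elementary graph theory (repeatedly peel off a cycle), and the cycle-to-bond duality applies to each $\gamma_i$ individually no matter how several of them meet at a dual vertex of degree four or more. Your accounting uses only edge-disjointness, never vertex-disjointness, so no rotation-system uncrossing, no canonical non-crossing splitting, and no separate verification that a ``reduced cut'' still separates $s$ from $t$ are needed; the proof is finished at the end of your third paragraph. Second, there is a small genuine patch needed in that third paragraph: in the degenerate case where cycles are discarded, showing that $f_\infty$ has degree at least $2$ in $D_G(\delta(C))$ only puts $f_\infty$ on the even subgraph, not necessarily on the surviving cycle $\gamma_i$. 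Run the same argument on $S_i$ instead: $s,t\in B_G(f_\infty)$ lie on opposite sides of $S_i$, so the boundary walk of $f_\infty$ crosses $\delta(S_i)$ at least twice, and hence $f_\infty$ lies on $\gamma_i$. The argument goes through verbatim; it just has to be aimed at the right cut.
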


\begin{proposition}\label{boundary-deletions}
Consider an undirected planar graph $G = (\pi, F)$ and consider a face $f$. $f$ is a cycle $C = (d_1, d_2, \hdots, d_r)$ of the permutation $\pi\circ\text{rev}$. Let $t_i := \text{tail}_G(d_i) = \text{head}_G(d_{i-1})$. Suppose that there are indicies $i$ and $j$, $i\le j$ with $i-j\ne 1, 0, -1\pmod r$ such that $t_i = t_j$. Then $G\backslash t_i$ is disconnected.
\end{proposition}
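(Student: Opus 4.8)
The plan is to realize the repeated vertex $v := t_i = t_j$ as a cut vertex by splitting $v$ along the face $f$ and then detecting the resulting disconnection through Euler's formula, which is how this paper defines planarity. Writing $\sigma := \pi\circ\text{rev}$, the face $f = (d_1, \dots, d_r)$ is a cycle of $\sigma$ whose closed boundary walk visits $t_1, t_2, \dots, t_r$, with $v$ occurring at the two positions $i$ and $j$. These two occurrences cut the walk into two arcs, $P = (d_i, \dots, d_{j-1})$ and $Q = (d_j, \dots, d_{i-1})$; the hypothesis $i - j \not\equiv 0, \pm 1 \pmod r$ says precisely that each arc contains at least two darts. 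I would then build a new map $G'$ by splitting the rotation at $v$. The vertex $v$ is a single cycle of $\pi$, and each gap between two rotation-consecutive darts is the corner of a unique face; the two corners of $f$ at positions $i$ and $j$ are two distinct such gaps. Cutting the rotation at exactly these two gaps yields two nonempty dart-arcs, which I assign to two new vertices $v_P$ and $v_Q$ (keeping every other rotation and every edge unchanged), so that $|V(G')| = |V(G)| + 1$ and $|E(G')| = |E(G)|$.

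The crux is to show $|F(G')| = |F(G)| + 1$. Cutting one $\pi$-cycle into two is the same as replacing $\pi$ by $\pi' = \tau\pi$ for a single transposition $\tau = (x\ y)$, and hence replacing $\sigma$ by $\sigma' := \pi'\circ\text{rev} = \tau\sigma$. I would verify that the two swapped darts $x$ and $y$ both lie on the face cycle $f$ --- this is exactly the assertion that the two gaps we cut are corners of $f$. Since composing a permutation with a transposition of two elements of the same cycle splits that cycle into two and fixes all others, the face $f$ splits into the arcs $P$ and $Q$ while every other face is unchanged, giving $|F(G')| = |F(G)| + 1$. Consequently $$|V(G')| - |E(G')| + |F(G')| = (|V(G)| - |E(G)| + |F(G)|) + 2 = 4,$$ using the defining identity $|V(G)| - |E(G)| + |V(G^*)| = 2$ for the planar graph $G$.

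It is a standard fact that for any combinatorial map the quantity $|V| - |E| + |F|$ equals $2c - 2g$, where $c$ is the number of connected components and $g \ge 0$ is the total genus; thus $2c' - 2g' = 4$ forces $c' \ge 2$, so $G'$ is disconnected. Because identifying $v_P$ with $v_Q$ recovers the connected graph $G$, and identifying two vertices drops the component count by at most one, $G'$ has exactly two components, $C_P \ni v_P$ and $C_Q \ni v_Q$. I would then read off a separation of $G\backslash v$: set $X := V(C_P)\backslash\{v_P\}$ and $Y := V(C_Q)\backslash\{v_Q\}$, which partition $V(G)\backslash\{v\}$. Any edge of $G\backslash v$ between $X$ and $Y$ is an edge of $G$ missing $v$, hence is present unchanged in $G'$ and would join $C_P$ to $C_Q$ --- impossible. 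The non-adjacency hypothesis is exactly what makes $X$ and $Y$ nonempty: since each arc has at least two darts, the boundary walk genuinely leaves $v$ on each side, depositing interior vertices $t_{i+1} \in X$ and $t_{j+1} \in Y$ that are distinct from $v$ (for instance whenever $G$ is simple). Thus $G\backslash t_i$ is disconnected.

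The main obstacle I expect is the face-count step $|F(G')| = |F(G)| + 1$: it requires pinning down, in the dart formalism, the bijection between the rotation gaps at $v$ and the face corners, and checking that the transposition realizing the split moves two darts both lying on $f$. A secondary delicate point is the degenerate situation in which an arc consists solely of loops at $v$; the exclusion $i - j \not\equiv \pm 1$ kills the single-loop case, and the remaining multi-loop case already forces $v$ to be a cut vertex and can be handled on its own.
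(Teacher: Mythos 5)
Your proposal is correct and takes essentially the same route as the paper: both split the repeated vertex $t_i$ into two vertices along the two corners of the face $f$ and then invoke Euler's formula to force a disconnection, which is transferred back to $G\backslash t_i$ via the interior vertices of the two boundary arcs. The only mechanical difference is that the paper joins the two halves by an auxiliary edge $e$ (making the split an inverse edge contraction, so Euler's formula is trivially preserved) and then deletes $e$, whereas you omit the edge and instead prove $|F(G')| = |F(G)| + 1$ directly by the transposition argument --- a bookkeeping variation, not a different idea, and your version handles the exactly-two-components and nonemptiness details (and flags the degenerate loop case) somewhat more carefully than the paper does.
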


\begin{proof}
Since $i-j\ne 1, 0, -1\pmod r$, there are vertices $t_a$ and $t_b$ along the boundary cycle with $a\in [i, j)$ and $b\notin [i, j)$ with $a\neq i$, $b\neq j$. Create a new graph by replacing $t_i$ with two vertices $u$ and $v$ connected by an edge $e$. Call this graph $G' = (\pi', F')$. Note that this graph still satisfies Euler's formula, as the number of vertices increased by one, the number of edges increased by one, and the number of faces did not change. Deleting $e$ from $G'$ does not change the number of faces or vertices. However, it does decrease the number of edges, so $G'\backslash e$ violates Euler's formula. Therefore, it must be disconnected and any path between $t_a$ and $t_b$ in $G'$ must contain $e$. Contracting $e$ to obtain $G$ shows that any path between $t_a$ and $t_b$ contains $t_i = t_j$. Therefore, $t_a$ cannot be connected to $t_b$ in $G\backslash t_i$.
\end{proof}

For an integer $k > 0$, call an undirected planar graph $G$ a \emph{$k$-outerplanar graph} if there is a face $f^{\infty}\in F(G)$ such that for every vertex $v\in V(G)$, there is a path containing at most $k-1$ edges to a vertex $w\in B_G(f^{\infty})$. Bodlaender \cite{Bodlaender88} showed the following theorem:

\begin{theorem}\label{KOuterplanarTW}
$k$-outerplanar graphs have treewidth at most $3k - 1$. Furthermore, a tree decomposition of width at most $3k - 1$ can be found in $O(kn)$ time.
\end{theorem}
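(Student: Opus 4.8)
The plan is to induct on $k$, peeling off the vertices incident to the outer face one layer at a time and, at each step, combining a width-$2$ decomposition of the outer layer with an inductively obtained decomposition of the interior. The bag-size bookkeeping is what forces the constant $3$: an outer-layer bag will contribute at most $3$ vertices and an interior bag at most $3(k-1) = 3k-3$ vertices, for a total of $3k$, i.e. width $3k-1$.

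First I would establish the base case $k = 1$, that every outerplanar graph has treewidth at most $2 = 3\cdot 1 - 1$. After reducing to the $2$-connected case (cut vertices are handled by gluing the decompositions of the blocks: pick a bag containing the shared vertex in each and join the two tree nodes by an edge, which preserves both width and the three axioms), a $2$-connected outerplanar graph embeds with all of $V(G)$ on a single cycle bounding $f^{\infty}$ and all remaining edges as noncrossing chords. Triangulating the interior yields a maximal outerplanar graph, i.e. a triangulated polygon, whose weak dual (the dual with the $f^{\infty}$-vertex deleted) is a tree. Assigning to each tree node the three vertices of its triangle gives a tree decomposition of width $2$: every edge lies in some triangle, and the triangles incident to a fixed vertex form a contiguous fan and hence induce a connected subtree. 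This construction is linear in $|V(G)|$.

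For the inductive step, fix a $k$-outerplanar embedding of $G$, let $L_1 = B_G(f^{\infty})$ be the outer layer, and let $H = G - L_1$. Each connected component of $H$ inherits a $(k-1)$-outerplanar embedding (its outer face consists of vertices whose paths to $L_1$ in $G$ all passed through what becomes its new boundary), so by induction $H$ has a tree decomposition $(T', \mathcal{B}')$ of width at most $3(k-1) - 1 = 3k - 4$, with bags of size at most $3k - 3$. Separately, $G[L_1]$ is outerplanar and, by the base case, admits a width-$2$ decomposition $(T, \mathcal{B})$ with bags of size at most $3$. The goal is to merge these into a single decomposition whose bags are unions of one bag of size at most $3$ from $L_1$ and one bag of size at most $3k-3$ from $H$, yielding bags of size at most $3k$ and therefore width at most $3k - 1$.

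The main obstacle is performing this merge so that \emph{every} edge of $G$ lies in a common bag — in particular the edges inside $L_1$ and the edges between $L_1$ and $H$ — while the connectivity axiom is preserved. This is exactly where planarity enters: because $L_1$ bounds the outer face, the vertices of $L_1$ adjacent to any single connected component of $H$ occur consecutively along the outer cycle, so the interface between the outer layer and each interior component is an arc rather than an arbitrary set. I would exploit this by processing the components of $H$ in the cyclic order in which they attach to $L_1$, superimposing the relevant part of $T'$ onto the arc of $T$ carrying the corresponding interface vertices, and adding each interface vertex of $L_1$ to the bags of its attached component; the within-$L_1$ edges are already covered by $(T,\mathcal{B})$. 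A careful but routine verification then checks the three axioms and the $3k$ bag-size bound. Finally, extracting the layers from the embedding, building the two base decompositions, and performing the merge each cost $O(n)$ work charged to one layer, for a total running time of $O(kn)$, as claimed.
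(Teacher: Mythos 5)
First, a remark on the comparison: the paper does not prove this theorem at all --- it quotes it from Bodlaender \cite{Bodlaender88}, whose argument is not a layer-peeling induction but instead constructs a spanning forest of the $k$-outerplanar graph whose fundamental cycles pass through every vertex and every edge a bounded number of times (the vertex and edge ``remember numbers''), and then converts that data into a tree decomposition. Your base case is fine: the triangulated-polygon/weak-dual argument is the standard proof that outerplanar graphs have treewidth $2$. The problem is the inductive step, and it sits exactly at the point you dismiss as ``a careful but routine verification'': the merge.

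The merge rule you actually state, ``adding each interface vertex of $L_1$ to the bags of its attached component,'' fails on the wheel. Let $G$ be a cycle $v_1,\dots,v_n$ plus a hub $h$ adjacent to every $v_i$; this is $2$-outerplanar, $H = G - L_1$ is the single component $\{h\}$ with the one-bag decomposition $\{h\}$, and every $v_i$ is an interface vertex, so your rule produces the bag $\{h, v_1, \dots, v_n\}$ of size $n+1$, i.e.\ width $n$ rather than $3k-1 = 5$. The wheel instead needs the opposite rule (add $h$ to every bag of the rim's decomposition), but that opposite rule fails symmetrically on a fan --- one outer vertex $v$ adjacent to every vertex of a long interior path --- so no one-sided insertion rule can work, and on graphs containing both patterns the two kinds of insertions collide in the same bags. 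The deeper flaw is that your inductive hypothesis is too weak to support any such merge: induction hands you an \emph{arbitrary} width-$(3k-4)$ decomposition of each component of $H$, with no relation between its tree structure and the cyclic order in which the component attaches to $L_1$. The consecutivity of the attachment vertices along the outer cycle --- the only use of planarity in your merge --- constrains \emph{which} outer vertices attach to a component, but says nothing about how many bags of the inner decomposition each attachment vertex must be inserted into, nor how many distinct vertices (from either side) end up forced into a single bag, which is exactly what the width bound requires you to control. A correct layer-based proof must strengthen the induction so that the interior decomposition is produced in a form aligned with its own outer boundary (this is in effect what Bodlaender's remember-number construction achieves); that strengthening is the real content of the theorem, and the same gap undermines your claimed $O(kn)$ running time.
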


\subsection{Line graphs}
Let $G$ be an undirected graph. The \emph{line graph} $L_u(G)$ is an undirected graph with $V(L_u(G)) := E(G)$ and $E(L_u(G)) := \{\{e_1, e_2\}: \forall e_1, e_2\in E(G) \text{ $e_1$ and $e_2$ share an endpoint}\}$. For a directed graph $H$, the \emph{line graph} $L_d(H)$ is defined by $V(L_d(H)) = E(H)$ and $E(L_d(H)) = \{(e_1, e_2): \forall e_1=(u_1, v_1), e_2=(u_2, v_2)\in E(H), v_1 = u_2\}$. One can show the following proposition:

\begin{proposition}\label{planar-directed-line-graph}
The directed line graph of a directed planar graph with maximum degree 3 is a directed planar graph.
\end{proposition}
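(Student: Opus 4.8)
The plan is to show that the directed line graph construction preserves planarity when the original directed planar graph $H$ has maximum degree $3$. I would begin with the fixed combinatorial embedding of $H$ given by $(\pi, E, E', R)$, and construct an explicit rotation system for $L_d(H)$ so that the resulting structure satisfies Euler's formula. Recall that vertices of $L_d(H)$ are edges of $H$, and there is an arc $(e_1, e_2)$ in $L_d(H)$ exactly when the head of $e_1$ equals the tail of $e_2$. The key idea is that all arcs of $L_d(H)$ incident to the vertex-class of a given $H$-vertex $v$ are determined entirely by the local behavior at $v$: the in-arcs $\delta^-(v)$ and out-arcs $\delta^+(v)$ of $H$ must be paired up, and the cyclic order in which they interleave around $v$ in the embedding of $H$ dictates how to route the corresponding line-graph arcs without crossings.

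The central step is a local gadget replacement. For each vertex $v$ of $H$, I would replace a small planar disk around $v$ with a planar gadget that realizes the complete bipartite routing from $\delta^-(v)$ to $\delta^+(v)$ (since every in-edge connects to every out-edge in $L_d(H)$). Here is exactly where the degree-$3$ hypothesis is essential: a vertex of degree $3$ can only split as (in-degree, out-degree) $\in \{(1,2),(2,1),(3,0),(0,3)\}$ up to symmetry, and in each case the bipartite graph $K_{a,b}$ with $a + b \le 3$ and $\min(a,b)\le 1$ is planar and, more importantly, can be embedded inside the disk consistently with the boundary cyclic order inherited from $\pi$ at $v$. The obstruction that degree $3$ avoids is a vertex with in-degree $2$ and out-degree $2$, whose required routing is $K_{2,2}$; while $K_{2,2}$ is planar in isolation, it cannot in general be embedded in a disk with the four terminals in an arbitrary prescribed boundary order, which is what forces crossings in the line graph of a higher-degree planar graph.

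Concretely, I would carry out the steps in this order. First, fix the embedding of $H$ and, at each vertex $v$, read off the cyclic sequence of incident darts from the rotation $\pi$. Second, for each $v$, define the local gadget that connects the in-darts to the out-darts, checking case by case (using the degree bound) that the gadget embeds in the disk respecting the boundary order. Third, glue these disks together along the edges of $H$: each edge $e = (u,w)$ of $H$ becomes a vertex of $L_d(H)$ sitting on the shared boundary between the disk at $u$ and the disk at $w$, so the out-side of $u$'s gadget meets the in-side of $w$'s gadget coherently. Fourth, verify that the glued object is a valid combinatorial planar graph by exhibiting the rotation system $(\pi', E'_{L}, \ldots)$ explicitly and confirming $|V| - |E| + |F| = 2$ via Euler's formula, or equivalently by arguing the construction never introduced a crossing.

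The main obstacle I expect is bookkeeping the cyclic orders correctly: one must be careful that the order of the out-darts around $v$ as seen from inside $v$'s gadget matches the order in which the corresponding line-graph vertices appear along the boundary where they are glued to the neighboring gadgets, so that no two routing strands are forced to cross at a gluing seam. Making this rigorous in the purely combinatorial embedding formalism of the preliminaries — rather than appealing to a picture — is the delicate part, since I cannot simply draw the gadget but must produce a permutation $\pi'$ and verify Euler's formula. Once the degree-$3$ case analysis confirms that each local $K_{a,b}$ admits an order-respecting planar routing, the global gluing and the Euler-characteristic check should go through routinely.
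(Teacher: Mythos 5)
The paper never actually proves Proposition \ref{planar-directed-line-graph}; it is introduced with ``One can show the following proposition'' and used as a black box in Section \ref{MinPerfectInt}, so there is no paper proof to compare against. Judged on its own merits, your outline is correct and contains the one idea that makes the statement true: a total degree bound of $3$ forces $\min(\deg^-(v),\deg^+(v))\le 1$ at every vertex $v$, so the $\deg^-(v)\cdot\deg^+(v)\le 2$ line-graph arcs generated at $v$ all share a common endpoint and form a star, and a star can be routed inside a disk with its leaves in \emph{any} prescribed cyclic boundary order. Your diagnosis of the $(2,2)$ split as the obstruction is also right: with boundary order $e_1,e_2,f_1,f_2$ the chords $e_1f_1$ and $e_2f_2$ interleave and must cross, which is exactly what degree $3$ rules out. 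The seam-gluing worry you flag is not actually delicate: the arcs incident to the line-graph vertex sitting on $e=(u,w)$ split into in-arcs, drawn entirely in the disk of $u$, and out-arcs, drawn entirely in the disk of $w$, so strands from different disks meet only at that vertex and no crossing can occur at a seam. What remains is the finite case analysis and reading off the rotation system $\pi'$ from the crossing-free drawing, both routine; the plan as stated goes through.
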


\section{A Pseudo-PTAS for maximum matching interdiction on planar graphs}\label{PTASMaxInt}

In this section, we introduce a pseudo-polynomial time approximation scheme for the maximum matching interdiction problem (MMEIP). We use many ideas from Baker's framework introduced in \cite{Baker94}. First, we give our Pseudo-PTAS. We will use Zenklusen's algorithm for bounded treewidth graphs in \cite{Zenklusen10} as a subroutine.
\vspace{.125 in}

\begin{figure}
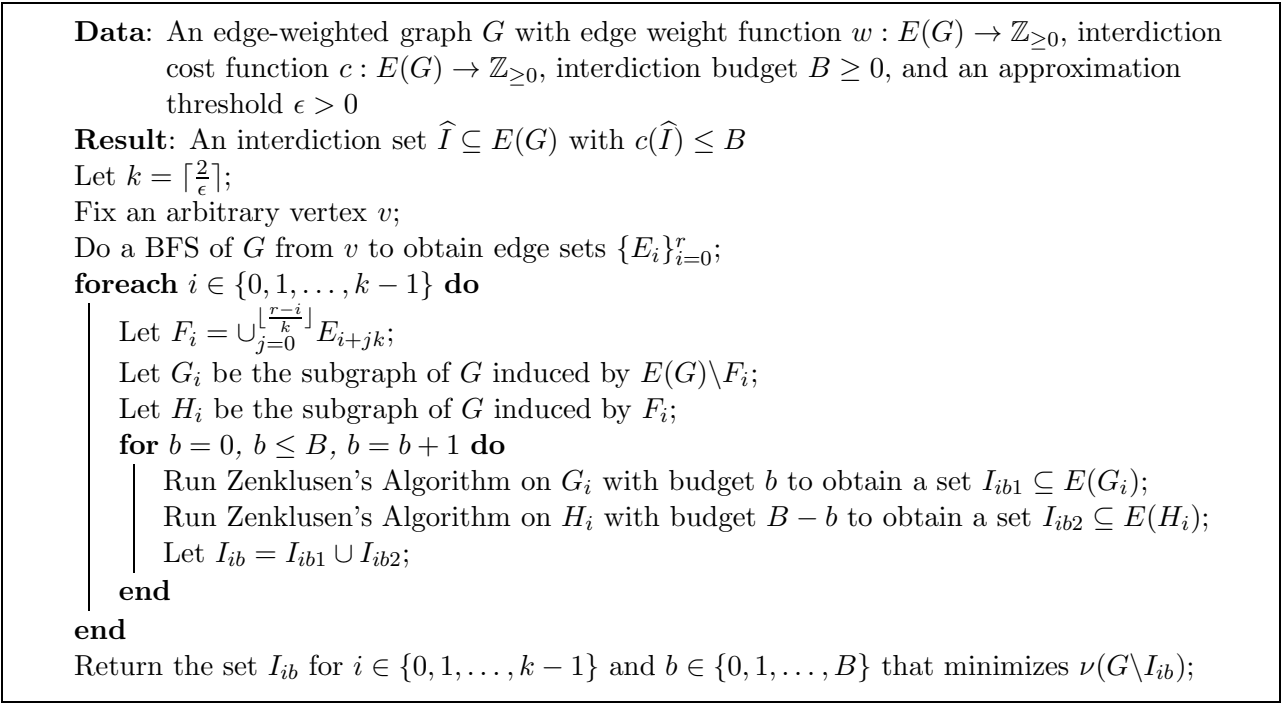

\fbox{
\begin{algorithm}[H]
\SetAlgoLined
 \KwData{An edge-weighted graph $G$ with edge weight function $w: E(G)\rightarrow \mathbb{Z}_{\ge 0}$, interdiction cost function $c: E(G)\rightarrow \mathbb{Z}_{\ge 0}$, interdiction budget $B\ge 0$, and an approximation threshold $\epsilon > 0$}
 \KwResult{An interdiction set $\widehat{I}\subseteq E(G)$ with $c(\widehat{I})\le B$}
 Let $k = \lceil\frac{2}{\epsilon}\rceil$\;
 Fix an arbitrary vertex $v$\;
 Do a BFS of $G$ from $v$ to obtain edge sets $\{E_i\}_{i = 0}^r$\;
 \ForEach{$i\in \{0, 1, \hdots, k - 1\}$}{
   Let $F_i = \cup_{j = 0}^{\lfloor\frac{r - i}{k}\rfloor} E_{i + jk}$\;
   Let $G_i$ be the subgraph of $G$ induced by $E(G)\backslash F_i$\;
   Let $H_i$ be the subgraph of $G$ induced by $F_i$\;
   \For{$b = 0$, $b\le B$, $b = b + 1$}{
    Run Zenklusen's Algorithm on $G_i$ with budget $b$ to obtain a set $I_{ib1}\subseteq E(G_i)$\;
    Run Zenklusen's Algorithm on $H_i$ with budget $B - b$ to obtain a set $I_{ib2}\subseteq E(H_i)$\;
    Let $I_{ib} = I_{ib1}\cup I_{ib2}$\;
   }
 }
 Return the set $I_{ib}$ for $i\in \{0, 1, \hdots, k - 1\}$ and $b\in \{0, 1, \hdots, B\}$ that minimizes $\nu(G\backslash I_{ib})$\;
\label{MatchingAlgorithm}
\end{algorithm}
}
\caption{Algorithm for MMEIP on planar graphs}
\end{figure}
\vspace{.125 in}

Now, we prove that the algorithm produces a good interdiction set. First, we show three properties of maximum weight matchings, which encapsulate our use of the structure of matchings to obtain a Pseudo-PTAS:

\begin{proposition}\label{subgraph}
Let $G$ be an edge-weighted graph with weight function $w: E(G)\rightarrow \mathbb{Z}_{\ge 0}$ and let $H$ be a subgraph of $G$. Then, $\nu(H)\le \nu(G)$.
\end{proposition}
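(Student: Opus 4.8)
The plan is to observe that the matching problem on $H$ is simply a restriction of the matching problem on $G$: every matching of $H$ is automatically a matching of $G$, so the optimization defining $\nu(H)$ ranges over a subset of the matchings feasible for $\nu(G)$. First I would fix a maximum weight matching $M$ of $H$, i.e.\ a matching with $w(M) = \nu(H)$. Since $H$ is a subgraph of $G$ we have $E(H)\subseteq E(G)$, so $M\subseteq E(G)$. Moreover, two edges of $M$ that share no endpoint in $H$ still share no endpoint when viewed in $G$, because the incidence relation of a subgraph is inherited from that of the supergraph. Hence $M$ is a valid matching of $G$.

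Next I would note that the weight function on $H$ is the restriction of $w$ to $E(H)$, so the total weight $w(M)$ is the same whether $M$ is regarded as a matching of $H$ or of $G$. Combining this with the feasibility of $M$ in $G$ yields
$$\nu(G)\ge w(M) = \nu(H),$$
which is exactly the claimed inequality.

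There is essentially no real obstacle in this argument; the only points that require care are entirely definitional, namely that the feasible set of matchings for $H$ embeds into the feasible set for $G$ and that edge weights are preserved under this embedding. Both are immediate from the convention that a subgraph inherits its edge weights by restriction, so the proposition follows directly.
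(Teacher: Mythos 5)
Your proof is correct and follows exactly the paper's argument: the paper's one-line proof (``All matchings in $H$ are matchings in $G$, so the result follows'') is precisely the observation you make, and you have simply spelled out the definitional details of why a maximum weight matching of $H$ remains a feasible matching of the same weight in $G$. No changes are needed.
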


\begin{proof}
All matchings in $H$ are matchings in $G$, so the result follows.
\end{proof}
\vspace{.125 in}

\begin{proposition}\label{triangle}
Let $G$ be an edge-weighted graph with weight function $w: E(G)\rightarrow \mathbb{Z}_{\ge 0}$. Consider any $F\subseteq E(G)$. Let $G_1$ be the subgraph of $G$ induced by the edge set $F$ and let $G_2$ be the subgraph of $G$ induced by the edge set $E(G)\backslash F$. Then, $\nu(G)\le \nu(G_1) + \nu(G_2)$.
\end{proposition}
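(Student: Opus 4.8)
The plan is to prove that the maximum weight matching of $G$ can be decomposed into a matching of $G_1$ and a matching of $G_2$, so that splitting by the edge partition $\{F, E(G)\backslash F\}$ cannot increase the optimum by too much. Concretely, let $M$ be a maximum weight matching in $G$, so that $w(M) = \nu(G)$. The key observation is that $M$ is a set of edges and the edge set of $G$ is partitioned into $F$ and $E(G)\backslash F$. I would therefore set $M_1 := M\cap F$ and $M_2 := M\cap (E(G)\backslash F)$, giving $M = M_1\sqcup M_2$ and hence $w(M) = w(M_1) + w(M_2)$.

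The main step is to check that $M_1$ is a valid matching in $G_1$ and $M_2$ is a valid matching in $G_2$. Since $M_1\subseteq F = E(G_1)$, every edge of $M_1$ is present in $G_1$; moreover $M_1\subseteq M$, and $M$ is a matching, so no two edges of $M_1$ share an endpoint. Thus $M_1$ is a matching in $G_1$, which gives $w(M_1)\le \nu(G_1)$. The identical argument applied to $M_2\subseteq E(G)\backslash F = E(G_2)$ yields $w(M_2)\le \nu(G_2)$. There is one subtlety worth noting: $G_1$ and $G_2$ are the subgraphs \emph{induced by edge sets}, so their vertex sets are exactly the endpoints of the chosen edges, but since the endpoints of $M_1$ and $M_2$ automatically lie in $V(G_1)$ and $V(G_2)$ respectively, this causes no difficulty.

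Combining the two inequalities gives the result directly:
$$\nu(G) = w(M) = w(M_1) + w(M_2)\le \nu(G_1) + \nu(G_2).$$
I do not anticipate a genuine obstacle here, since the statement is essentially the subadditivity of the matching number under an edge partition; the only thing requiring care is making sure the restriction of a matching to each part remains a matching in the corresponding induced subgraph, which follows because subsets of matchings are matchings and the induced subgraph contains exactly the relevant edges. The argument is symmetric in $G_1$ and $G_2$, so no separate casework is needed.
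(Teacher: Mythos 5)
Your proof is correct and follows exactly the paper's argument: take a maximum weight matching $M$ of $G$, restrict it to the edge sets $F$ and $E(G)\backslash F$ to get matchings of $G_1$ and $G_2$, and use disjointness of the two parts to split $w(M)$ and bound each piece by the respective matching number. The extra care you take about edge-induced subgraphs is a fine detail but does not change the substance; the approaches coincide.
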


\begin{proof}
Consider a maximum weight matching $M$ of $G$. Note that $M\cap E(G_1)$ is a matching in $G_1$ and that $M\cap E(G_2)$ is a matching of $G_2$. Since $E(G_1)\cap E(G_2) = \emptyset$, $$\nu(G) = w(M) = w(M\cap E(G_1)) + w(M\cap E(G_2))\le \nu(G_1) + \nu(G_2)$$ as desired.
\end{proof}
\vspace{.125 in}

\begin{proposition}\label{edge-sum}
Let $G$ be an edge-weighted graph with weight function $w: E(G)\rightarrow \mathbb{Z}_{\ge 0}$. Pick any node $v\in V(G)$ and do a breadth-first search starting from $v$ and label the vertices of $G$ with their distance from $v$. Let $E_i$ be the set of edges from vertices with label $i$ to vertices with label $i + 1$. Fix an integer $k > 1$. For all $i\in \{0, 1, \hdots, k - 1\}$, let $F_i = \cup_{j = 0}^{\lfloor\frac{r - i}{k}\rfloor} E_{i + jk}$, where $r$ is the number of labels, and let $G_i$ be the subgraph of $G$ induced by $F_i$. Then, $$\sum_{i = 0}^{k - 1} \nu(G_i)\le 2\nu(G)$$
\end{proposition}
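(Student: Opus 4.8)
The plan is to produce, for each $i$, a maximum weight matching $M_i$ of $G_i$ and to argue that the disjoint union $H := \bigcup_{i=0}^{k-1} M_i$ can itself be split into just two matchings of $G$. Since $\nu(G_i) = w(M_i)$ and the edge sets $F_i$ are pairwise disjoint (the layers $E_0,\ldots,E_r$ are disjoint because each cross-level edge joins a unique consecutive pair of levels, and each $F_i$ collects exactly those $E_\ell$ with $\ell \equiv i \pmod k$), the $M_i \subseteq F_i$ are pairwise edge-disjoint. Hence $w(H) = \sum_{i=0}^{k-1} w(M_i) = \sum_{i=0}^{k-1} \nu(G_i)$, and it suffices to bound $w(H)$ by $2\nu(G)$.

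The heart of the argument is a degree bound on $H$. First I would observe that every edge of $H$ lies in some $E_\ell$, i.e.\ joins a vertex at BFS-level $\ell$ to one at level $\ell+1$; two-coloring $V(G)$ by the parity of the BFS label then shows that $H$ is bipartite and, in particular, contains no odd cycle. Next I would bound the degree of each vertex in $H$. A vertex $u$ at level $\ell$ has all of its incident cross-level edges contained in $E_{\ell-1}\cup E_{\ell}$, and these two sets land in $F_{(\ell-1)\bmod k}$ and $F_{\ell\bmod k}$ respectively. Because $k > 1$, these two residues are distinct, so $u$ belongs to at most two of the subgraphs $G_i$; as $u$ is matched at most once in each $M_i$, we conclude $\deg_H(u)\le 2$.

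With $\deg_H \le 2$, the graph $H$ is a disjoint union of paths and cycles, and bipartiteness rules out odd cycles, leaving only paths and even cycles. Consequently the edges of $H$ can be partitioned into two matchings $N_1, N_2$ by alternating the edges along each path and each even cycle (equivalently, applying König's edge-coloring theorem to the bipartite graph $H$ of maximum degree $2$). Each $N_j$ is a matching of $G$, so $w(N_j)\le \nu(G)$, and therefore $\sum_{i=0}^{k-1}\nu(G_i) = w(H) = w(N_1) + w(N_2) \le 2\nu(G)$, as desired.

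I expect the main obstacle to be the degree bound, which rests on the observation that each vertex touches only two consecutive BFS layers whose indices fall in distinct residue classes modulo $k$, so a vertex survives into at most two of the $G_i$. Once this is in place, bipartiteness and the two-matching decomposition are routine. Note also that same-level edges never enter any $G_i$ (they belong to no $E_\ell$), so they can be ignored throughout.
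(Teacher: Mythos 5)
Your proof is correct, but it takes a genuinely different route from the paper's. The paper argues at the level of matching \emph{values}: since the layers $E_{i+jk}$ making up $F_i$ are pairwise vertex-disjoint (they are $k\ge 2$ levels apart), $\nu(G_i)$ splits additively as $\sum_j \nu(\widehat{G}_{i+jk})$, where $\widehat{G}_\ell$ is the subgraph induced by $E_\ell$; reindexing turns $\sum_{i=0}^{k-1}\nu(G_i)$ into $\sum_{\ell=0}^{r}\nu(\widehat{G}_\ell)$, and grouping layers by parity expresses this as $\nu(G') + \nu(G'')$ for the even-layer and odd-layer subgraphs, each at most $\nu(G)$ by Proposition \ref{subgraph}. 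You instead argue at the level of the matchings themselves: you union the optimal $M_i$ into $H$, bound $\deg_H$ by $2$ via the observation that a vertex at level $\ell$ meets only $E_{\ell-1}$ and $E_{\ell}$, which fall in distinct residue classes mod $k$, rule out odd cycles by the parity $2$-coloring of BFS levels, and then split the paths and even cycles of $H$ into two matchings. Both proofs rest on the same two combinatorial facts (each vertex touches only two consecutive layers, and layers $2$-color by parity), but your path/cycle alternation step is actually avoidable: all edges of $H$ lying in a single layer $E_\ell$ come from the one matching $M_{\ell\bmod k}$, and distinct even (resp.\ odd) layers share no vertices, so the even-layer edges of $H$ and the odd-layer edges of $H$ already constitute the two matchings you need --- this parity split is precisely the regrouping the paper performs on values. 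What your version buys is robustness: the union-of-matchings degree argument would survive in settings where the pieces do not decompose into vertex-disjoint blocks; what the paper's version buys is brevity, since it never inspects the structure of any matching and simply reuses Proposition \ref{subgraph}.
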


\begin{proof}
Note that for $k > 1$, no edge in $E_{i + jk}$ shares an endpoint with any edge in $E_{i + j'k}$ for $j\ne j'$. Therefore, if $\widehat{G}_i$ denotes the subgraph of $G$ induced by $E_i$, then $$\nu(G_i) = \sum_{j = 0}^{\lfloor\frac{r - i}{k}\rfloor} \nu(\widehat{G}_{i + jk})$$ Let $F' = \cup_{j = 0}^{\lfloor\frac{r}{2}\rfloor} E_{2j}$ and let $F'' = \cup_{j = 0}^{\lfloor\frac{r - 1}{2}\rfloor} E_{2j + 1}$. Let $G'$ be the subgraph of $G$ induced by $F'$ and let $G''$ be the subgraph of $G$ induced by $F''$. Then, reindexing shows that $$\sum_{i = 0}^{k - 1} \nu(G_i) = \sum_{j = 0}^r \nu(\widehat{G}_j) = \nu(G') + \nu(G'')\le 2\nu(G)$$
\end{proof}
\vspace{.25 in}

We prove Theorem \ref{PTASMaxMatchingThm} by proving two theorems about the performance of Algorithm \ref{MatchingAlgorithm}. We restate them here:

\begin{theorem}[Approximation guarantee for Algorithm \ref{MatchingAlgorithm}]
Let $I$ be an edge set with $c(I)\le B$ that minimizes $\nu(G\backslash I)$. The set $\widehat{I}$ returned by Algorithm \ref{MatchingAlgorithm} satisfies $$\nu(G\backslash \widehat{I})\le (1 + \epsilon)\nu(G\backslash I)$$
\end{theorem}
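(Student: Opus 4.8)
The plan is to bound the objective value of the best solution Algorithm~\ref{MatchingAlgorithm} finds by comparing it against the true optimum $I$, using the three matching propositions to control how much is lost when we split the graph into the ``most'' layers $G_i$ and the ``sparse'' layers $H_i$. First I would fix the index $i$ that the analysis will exploit. Since the algorithm tries all $k$ shifts $i \in \{0, \dots, k-1\}$ and returns the best, it suffices to show that \emph{some} shift $i$ yields a set $I_{ib}$ with $\nu(G \setminus I_{ib}) \le (1+\epsilon)\nu(G \setminus I)$. Recall $k = \lceil 2/\epsilon \rceil$, so $2/k \le \epsilon$. By Proposition~\ref{edge-sum} applied to the graph $G \setminus I$ (with its BFS layering), we have $\sum_{i=0}^{k-1} \nu((G\setminus I)[F_i]) \le 2\nu(G \setminus I)$, so by averaging there is an index $i^\ast$ for which $\nu((G \setminus I)[F_{i^\ast}]) \le \frac{2}{k}\nu(G \setminus I) \le \epsilon\,\nu(G \setminus I)$. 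This $i^\ast$ is the shift whose removed layers $F_{i^\ast}$ carry very little matching weight; it is the crux of why Baker's technique transfers to interdiction.

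Next I would fix this $i = i^\ast$ and choose the budget split. The key observation is that the optimal interdiction set $I$ can be partitioned as $I \cap E(G_i)$ and $I \cap E(H_i)$, since $E(G_i)$ and $E(H_i)$ partition $E(G)$. Let $b^\ast = c(I \cap E(G_i))$, so $c(I \cap E(H_i)) \le B - b^\ast$. Because the algorithm iterates over all budgets $b \in \{0, \dots, B\}$, it considers $b = b^\ast$. For that iteration, $I \cap E(G_i)$ is a feasible interdiction set on $G_i$ with cost $\le b^\ast$, and Zenklusen's algorithm returns an \emph{optimal} interdiction set $I_{ib1}$ for this budget, so $\nu(G_i \setminus I_{ib1}) \le \nu(G_i \setminus (I \cap E(G_i)))$; likewise $\nu(H_i \setminus I_{ib2}) \le \nu(H_i \setminus (I \cap E(H_i)))$ with budget $B - b^\ast$. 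Here I am using that Zenklusen's subroutine solves MMEIP exactly on bounded-treewidth graphs, which applies since $G_i$ is $k$-outerplanar (hence bounded treewidth by Theorem~\ref{KOuterplanarTW}), and $H_i$, being a disjoint union of ``thin'' slabs, also has bounded treewidth.

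Then I would assemble the bound using Propositions~\ref{subgraph} and~\ref{triangle}. Writing $\widehat I$ for the returned set and using $I_{ib} = I_{ib1} \cup I_{ib2}$, I would apply Proposition~\ref{triangle} with the partition $E(G) = E(G_i) \sqcup E(H_i)$ to split $\nu(G \setminus I_{ib})$ into a contribution from $G_i \setminus I_{ib1}$ and one from $H_i \setminus I_{ib2}$. The first term is bounded by $\nu(G_i \setminus (I \cap E(G_i))) \le \nu(G \setminus I)$ via the optimality of Zenklusen plus Proposition~\ref{subgraph} (since $G_i \setminus (I \cap E(G_i))$ is a subgraph of $G \setminus I$). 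The second term is bounded by $\nu(H_i \setminus (I \cap E(H_i))) \le \nu(H_i) = \nu((G \setminus I)[F_{i}]) \le \epsilon\,\nu(G \setminus I)$, again by Proposition~\ref{subgraph} and the averaging step. Combining gives $\nu(G \setminus I_{ib}) \le (1+\epsilon)\nu(G \setminus I)$, and since $\widehat I$ is at least as good as $I_{ib}$ and clearly satisfies $c(\widehat I) \le B$, the theorem follows.

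The main obstacle I anticipate is the bookkeeping around the second term, $\nu(H_i)$: I must argue carefully that $H_i = G[F_i]$, the subgraph of the \emph{full} graph $G$ induced by the deleted layers, has matching value comparable to $\nu((G\setminus I)[F_i])$, to which the averaging bound of Proposition~\ref{edge-sum} actually applies. The cleanest route is to apply Proposition~\ref{edge-sum} to $G \setminus I$ rather than to $G$, so that the averaged quantity is directly $\nu((G\setminus I)[F_i])$ and the bound on the sparse slab becomes immediate; one then only needs that the interdicted sparse graph $H_i \setminus I_{ib2}$ is a subgraph of $(G\setminus I)[F_i]$, which holds because $I_{ib2}$ dominates $I \cap E(H_i)$. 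Keeping the two budgets and the two induced-subgraph relationships aligned is the delicate part; everything else reduces to the three propositions and the optimality of the subroutine.
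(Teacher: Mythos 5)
Your proposal follows the paper's proof essentially step for step: Proposition \ref{edge-sum} applied to $G\setminus I$ (with the layering inherited from the BFS of $G$) produces the good shift $i$, the budget is split as $b = c(I\cap E(G_i)) = c(I\setminus F_i)$, exact optimality of Zenklusen's subroutine handles both pieces, Proposition \ref{subgraph} bounds the $G_i$ term by $\nu(G\setminus I)$, and Proposition \ref{triangle} recombines; the substance is correct and the approach is the same.

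Two of your local justifications near the second term are misstated, though, even while the correct ones appear elsewhere in your own write-up. In your third paragraph the chain $\nu(H_i\setminus(I\cap E(H_i))) \le \nu(H_i) = \nu((G\setminus I)[F_i])$ contains a false equality: $\nu(H_i) = \nu(G[F_i])$ can strictly exceed $\nu((G\setminus I)[F_i])$ whenever $I$ contains heavy edges inside $F_i$. What you actually need is the identity $H_i\setminus(I\cap E(H_i)) = (G\setminus I)[F_i]$ --- both graphs have edge set $F_i\setminus I$ --- so the detour through $\nu(H_i)$ should simply be deleted. Similarly, in your final paragraph, $H_i\setminus I_{ib2}$ need not be a subgraph of $(G\setminus I)[F_i]$: nothing forces Zenklusen's output $I_{ib2}$ to contain $I\cap E(H_i)$, so ``$I_{ib2}$ dominates $I\cap E(H_i)$'' is unfounded. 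The inequality you want follows instead from the optimality argument you already gave in your second paragraph: $I\cap E(H_i)$ is feasible for budget $B-b^\ast$, hence $\nu(H_i\setminus I_{ib2}) \le \nu(H_i\setminus(I\cap E(H_i))) = \nu((G\setminus I)[F_i]) \le \frac{2}{k}\nu(G\setminus I) \le \epsilon\,\nu(G\setminus I)$. With these two repairs --- both using material already present in your proposal --- your argument coincides with the paper's.
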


\begin{proof}
By Proposition \ref{edge-sum} applied to $G\backslash I$, there is some $i$ such that $\nu(H_i\backslash I)\le \frac{2}{k}\nu(G\backslash I)$. Let $b = c(I\backslash F_i)$. Since $c(I)\le B$, $c(I\cap F_i)\le B - b$. Zenklusen's algorithm returns the optimal interdiction sets on $G_i$ and $H_i$ for the budgets $b$ and $B - b$ respectively. Therefore, $\nu(G_i\backslash I_{ib1})\le \nu(G_i\backslash I)$ and $\nu(H_i\backslash I_{ib2})\le \nu(H_i\backslash I)$. By Proposition \ref{subgraph}, $\nu(G_i\backslash I)\le \nu(G\backslash I)$. Summing inequalities and applying Proposition \ref{triangle} shows that

\begin{align*}
\nu(G\backslash I_{ib})&\le \nu(G_i\backslash I_{ib1}) + \nu(H_i\backslash I_{ib2})\\
&\le \nu(G_i\backslash I) + \nu(H_i\backslash I)\\
&\le (1 + \frac{2}{k})\nu(G\backslash I)\\
&\le (1 + \epsilon)\nu(G\backslash I)
\end{align*}

Since $\nu(G\backslash \widehat{I})\le \nu(G\backslash I_{ib})$, we are done.
\end{proof}
\vspace{.25 in}

\begin{theorem}[Runtime guarantee for Algorithm \ref{MatchingAlgorithm}]
For fixed $\epsilon$, this algorithm terminates in pseudo-polynomial time on planar graphs and, more generally, apex-minor-free graphs. More precisely, on planar graphs, it terminates in time $$O((2B/\epsilon)(|V(G)|(C + 1)^{8^{2/\epsilon + 1}}) + (2B/\epsilon)|E(G)|\sqrt{|V(G)|})$$ where $C = w(E(G))$.
\end{theorem}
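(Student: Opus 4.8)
The plan is to bound the total running time by accounting for the three phases of Algorithm \ref{MatchingAlgorithm}: the initial breadth-first search, the doubly-nested loop over the shift $i$ and the budget split $b$ in which Zenklusen's subroutine is invoked, and the final pass that evaluates $\nu(G\setminus I_{ib})$ on each candidate set. The BFS costs $O(|V(G)| + |E(G)|)$ and is dominated by the later phases. The outer loop runs exactly $k = \lceil 2/\epsilon\rceil$ times and the inner loop runs $B+1$ times, so the number of $(i,b)$ pairs is $k(B+1) = O(2B/\epsilon)$; this accounts for the common prefactor $(2B/\epsilon)$ in both terms of the claimed bound. It therefore suffices to bound (a) the cost of a single pair of Zenklusen calls and (b) the cost of a single maximum-weight-matching evaluation.

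For (a), the key structural step is to show that $G_i$ and $H_i$ have small treewidth. Each connected component of $G_i = G\setminus F_i$ lies within $k$ consecutive BFS levels, because $F_i$ deletes every $k$-th level set $E_{i+jk}$; the standard Baker layering argument \cite{Baker94} then shows each component is $k$-outerplanar, so by Theorem \ref{KOuterplanarTW} the treewidth of $G_i$ is at most $3k-1$ and a tree decomposition of this width is computable in $O(kn)$ time. For $H_i$, Proposition \ref{edge-sum} shows that $H_i$ is the disjoint union of the bipartite two-level graphs $\widehat{G}_{i+jk}$ (no edge of $E_{i+jk}$ shares an endpoint with an edge of $E_{i+j'k}$ for $j\ne j'$), each of which is $2$-outerplanar and hence of constant treewidth. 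I would then invoke the running time of Zenklusen's bounded-treewidth algorithm \cite{Zenklusen10}: on a graph of treewidth $\omega$ with total edge weight $C$ it runs in time $O(|V|\,(C+1)^{2^{O(\omega)}})$, the doubly-exponential dependence coming from the fact that the interdiction dynamic program tracks, for each of the $2^{O(\omega)}$ matching configurations on a bag, the attainable matching weights up to $C$. Substituting $\omega \le 3k-1$ with $k = \lceil 2/\epsilon\rceil \le 2/\epsilon + 1$ yields $2^{\omega+1} = 2^{3k} = 8^{k} \le 8^{2/\epsilon+1}$, so the call on $G_i$ costs $O(|V(G)|\,(C+1)^{8^{2/\epsilon+1}})$; the call on $H_i$ is cheaper, since its treewidth is constant, and is absorbed. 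This produces the first term.

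For (b), after the loops I would compute $\nu(G\setminus I_{ib})$ for each of the $O(2B/\epsilon)$ candidate sets using a maximum-weight-matching algorithm, each evaluation costing $O(|E(G)|\sqrt{|V(G)|})$, which gives the second term. Finally, for the generalization to apex-minor-free graphs I would replace the appeal to $k$-outerplanarity (which is specific to planar graphs) by the fact that apex-minor-free families have bounded local treewidth: the subgraph spanned by any $k$ consecutive BFS levels has treewidth bounded by a function of $k$ alone, and a tree decomposition of comparable width can still be found efficiently, after which the same Zenklusen-based accounting applies with a possibly larger but $\epsilon$-dependent-only exponent.

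I expect the main obstacle to be pinning down the exact exponent in Zenklusen's subroutine as a function of treewidth and verifying that it composes to exactly $(C+1)^{8^{2/\epsilon+1}}$: the doubly-exponential blow-up is delicate, and I must confirm both that the matching-configuration state space is $2^{O(\omega)}$ with the right constant inside the $O(\cdot)$ and that running the subroutine separately on $G_i$ and on $H_i$, rather than on a single combined decomposition, does not change the asymptotics. A secondary subtlety is the precise form of the $k$-outerplanarity claim for the ``boundary'' slabs of $G_i$, where one must choose the distinguished face correctly so that every vertex is within $k-1$ edges of its boundary, and the fact that the quoted matching-evaluation bound depends on the specific maximum-weight-matching algorithm invoked.
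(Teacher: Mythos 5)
Your proposal is correct and follows essentially the same route as the paper's proof: count the $k(B+1) = O(2B/\epsilon)$ loop iterations, bound the treewidths of $G_i$ and $H_i$ by $3k-1$ and $2$ via $k$-outerplanarity (Theorem \ref{KOuterplanarTW}), plug $t = 3k-1$ into the exponent $2^{t+1}$ of Zenklusen's bounded-treewidth algorithm to get $8^{2/\epsilon+1}$, and charge the final evaluations of $\nu(G\backslash I_{ib})$ at $O(|E(G)|\sqrt{|V(G)|})$ each. The only differences are elaborations (the Baker-layering justification for outerplanarity, the local-treewidth remark for apex-minor-free graphs) rather than a different argument.
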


\begin{proof}
Zenklusen's algorithm \cite{Zenklusen10} has runtime $O(|V(G)|B^2(C + 1)^{2^{t + 1}})$ on graphs with treewidth at most $t$. The breadth-first search at the beginning of the algorithm takes $O(|E| + |V|)$ time. The innermost for loop is run $B + 1$ times per execution of the outermost for loop, which runs at most $k \le \frac{2}{\epsilon} + 1$ times. By Theorem \ref{KOuterplanarTW}, the treewidths of $G_i$ and $H_i$ are at most $3k - 1$ and $2$ respectively. The computation on $G_i$ dominates the computation on $H_i$. Using the Hopcroft-Karp algorithm \cite{HopcroftK71} to find the best interdiction set on the last line of Algorithm \ref{MatchingAlgorithm} gives the last term of the runtime.
\end{proof}

\section{Strong NP-hardness of budget-constrained flow improvement on directed planar graphs}\label{MultiMaxFlow}

In this section, we prove Theorem \ref{MaxFlowNPC}. We reduce from the maximum independent set problem on general graphs. Before proving Theorem \ref{MaxFlowNPC}, we discuss the proof of the following easier result, which was shown using a different reduction in \cite{Sch-98}:

\begin{lemma}\label{WeakMaxFlowNPC}
The decision version of BCFIP is strongly NP-complete on general directed graphs.
\end{lemma}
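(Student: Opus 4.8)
The plan is to prove that BCFIP is strongly NP-complete on general directed graphs by reduction from the maximum independent set problem. First I would establish membership in NP: given a candidate edge set $I$ with $c(I) \le B$, one can verify in polynomial time that $\alpha(G[I], s, t) > k$ by computing a maximum flow. For hardness, I would reduce from \textsc{Independent Set}: given a graph $H$ and an integer $\ell$, decide whether $H$ has an independent set of size $\ell$. Since \textsc{Independent Set} is NP-complete and all numerical quantities in my construction will be polynomially bounded in the size of $H$, the resulting BCFIP instance will witness \emph{strong} NP-completeness.

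The heart of the reduction is a gadget encoding the vertex/edge structure of $H$ into a flow network. The natural construction is to build a layered directed graph: introduce the source $s$, the sink $t$, a layer of ``vertex nodes'' $\{x_v : v \in V(H)\}$, and wire things so that \emph{selecting} an edge into the interdiction set $I$ corresponds to \emph{choosing} a vertex of $H$ into the independent set. Concretely, I would give each vertex $v$ a short $s$--$t$ path through $x_v$ whose capacity is $1$, and make the cost $c$ of the unique ``gate'' edge on that path equal to $1$, so that the budget $B = \ell$ forces the leader to open exactly $\ell$ such paths; the objective $\alpha$ then counts how many vertex-paths carry flow. The crucial ingredient is to enforce the independence constraint: for each edge $\{u,w\} \in E(H)$, I must ensure that flow cannot simultaneously route through both $x_u$ and $x_w$. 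This is achieved by routing the paths for $u$ and $w$ through a shared bottleneck edge (or shared internal node) of capacity $1$, so that any two adjacent vertex-paths compete for the same unit of capacity and at most one can be active. With this design, the maximum flow achievable under budget $B=\ell$ equals the size of the largest independent set whose vertices can be ``turned on'' within budget, so $\alpha_B^E(G,s,t) \ge \ell$ if and only if $H$ has an independent set of size $\ell$.

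The step I expect to be the main obstacle is designing the bottleneck gadgets so that the correspondence is \emph{exact} in both directions and the capacities compose correctly when many edges of $H$ are present. In particular, a single vertex $v$ may be adjacent to many other vertices, so its path must share a capacity-$1$ bottleneck with each neighbor simultaneously; I must verify that these overlapping constraints do not accidentally block an active path that \emph{should} be able to carry flow (the ``completeness'' direction: every genuine independent set yields a flow of the claimed value), while still forbidding any flow pattern that corresponds to including two adjacent vertices (the ``soundness'' direction). The cleanest way to argue soundness is via a min-cut / LP-duality argument: I would exhibit, for any interdiction set $I$ that includes the gates of a non-independent vertex subset, a small $s$--$t$ cut in $G[I]$ certifying that the flow value is strictly below $\ell$. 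Conversely, for an independent set $S$ of size $\ell$, I would construct an explicit integral flow of value $\ell$ using one unit along each selected vertex-path, checking that the shared bottleneck edges are used by at most one path precisely because $S$ is independent.

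Finally I would assemble these pieces: argue the reduction is computable in polynomial time, that $B$, $k=\ell$, all capacities, and all transport costs are bounded by a polynomial in $|V(H)|$ (hence the hardness is \emph{strong}), and conclude that deciding $\alpha_B^E(G,s,t) \ge \ell$ is NP-hard. Combined with membership in NP, this yields the strong NP-completeness claimed in Lemma \ref{WeakMaxFlowNPC}. This general-graph construction then serves as the template for Theorem \ref{MaxFlowNPC}, where the additional work will be to planarize the bottleneck gadgets using the value-preserving one-vertex crossing gadget advertised in the introduction.
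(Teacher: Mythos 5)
Your high-level strategy (reduce from maximum independent set, with cost-$1$ ``gate'' edges and budget $B=\ell$ so that paying for a gate corresponds to choosing a vertex) matches the paper's, but the gadget you propose for enforcing independence does not work, and it fails exactly at the step you flagged as the main obstacle. The problem is that flow does not respect ``path identity'': if the capacity-$1$ paths for two adjacent vertices share a capacity-$1$ bottleneck edge, the head of that bottleneck necessarily has outgoing edges continuing \emph{both} paths, and a unit of flow that entered as one vertex's flow may exit along the other vertex's continuation. Concretely, take $H$ to be a triangle on $\{u,v,w\}$, order the bottlenecks on each vertex-path lexicographically ($u$: $b_{uv},b_{uw}$; \ $v$: $b_{uv},b_{vw}$; \ $w$: $b_{uw},b_{vw}$), set $\ell=2$, and pay the gates of the adjacent pair $\{u,w\}$. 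One unit can route $s\to g_u\to b_{uv}\to b_{vw}\to t$, hijacking $v$'s continuation even though $v$'s gate was never paid, and a second unit can route $s\to g_w\to b_{uw}\to t$, exiting through $u$'s continuation; all capacities are respected, so $\alpha_2^E(G,s,t)\ge 2$ although the maximum independent set in a triangle has size $1$. Reordering the bottlenecks only moves the failure to a different adjacent pair. In particular, the min-cut certificate you hope to exhibit for soundness cannot exist, because the flow value genuinely is $\ell$.

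The paper avoids this by abandoning unit-capacity paths and using an amplification-plus-forcing argument instead. Each vertex node $v^v$ gets a gate edge from $s$ of capacity $d$ (the maximum degree of $H$) and cost $1$; its out-edges are one capacity-$1$ edge to the node $v^e$ of each incident edge $e$, plus one edge of capacity $d-\deg(v)$ directly to $t$; and each edge node $v^e$ has a single capacity-$1$ edge to $t$. The target flow value is $kd$ with budget $k$, which forces exactly $k$ gates to carry full flow $d$; since the out-capacity of each vertex node is exactly $d$, full in-flow forces \emph{every} out-edge of a selected vertex node to be saturated, and two adjacent saturated vertex nodes would then push $2$ units into a capacity-$1$ edge $(v^e,t)$, a contradiction. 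Note the topology: everything downstream of a vertex node feeds directly into $t$, so there is no continuation for flow to hijack, and an independent vertex can always dispose of its full $d$ units ($\deg(v)$ through edge nodes, the rest directly to $t$). To repair your proof you would need to replace your shared-bottleneck gadget with this construction or something with the same forcing property; the rest of your outline (NP membership, polynomial bounds on all numbers, hence strong NP-completeness) is fine.
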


\begin{proof}
Consider an undirected graph $G$ for which we want to decide if there is an independent set in $G$ of size at least $k$. Create a graph $G_1$ with one vertex for every vertex of $G$, one vertex for every edge of $G$, and two distinguished vertices $s$ and $t$. The edges in $G_1$ will be split into four sets $E_1$, $E_2$, $E_3$, and $E_4$. There will be directed edges in $G_1$ of the following types:\\

\begin{enumerate}
\item One edge from $s$ to each vertex $v^v$ corresponding to a vertex of $G$. ($E_1$)
\item For every edge $e = \{u, v\}\in E(G)$, two directed edges $(v^u, e)$ and $(v^v, e)$ in $E(G_1)$. ($E_2$) 
\item One edge from each vertex $v^e$ corresponding to an edge of $G$ to $t$. ($E_3$)
\item One edge from each vertex $v^v$ corresponding to a vertex of $G$ to $t$. ($E_4$)
\end{enumerate}

Let $d$ be the maximum degree of any node in $G$. The capacity function $w_1: E(G_1)\rightarrow \mathbb{Z}_{\ge 0}$ will be defined as follows:\\

\begin{displaymath}
   w_1(e) = \left\{
     \begin{array}{lr}
       d & : e \in E_1\\
       1 & : e \in E_2\\
       1 & : e \in E_3\\
       d - \text{deg}_G(\eta(e)) & : e \in E_4
     \end{array}
   \right.
\end{displaymath}

where $\eta: E_4\rightarrow V(G)$ returns the vertex of $G$ corresponding to the left endpoint of the input edge of $G_1$. The cost function $c_1: E(G_1)\rightarrow \mathbb{Z}_{\ge 0}$ will be defined as follows:

\begin{displaymath}
   c_1(e) = \left\{
     \begin{array}{lr}
       1 & : e \in E_1\\
       0 & : e \notin E_1\\
     \end{array}
   \right.
\end{displaymath}

This construction is depicted in Figure \ref{BCFIPNonplanarImage}. The proof of the following proposition implies the lemma:

\begin{proposition}\label{NonplanarMaxFlowNPC}
$\alpha_k^E(G_1, s, t) = kd$ if and only if there is an independent set in $G$ of size at least $k$.
\end{proposition}

\begin{proof}[Proof of the if direction]
Suppose that $G$ has an independent set $S\subseteq V(G)$ with $|S| \ge k$ and let $S'$ be the corresponding set of vertices in $G_1$. The following function $f: E(G)\rightarrow \mathbb{Z}_{\ge 0}$ is a valid flow on $G_1$:

\begin{displaymath}
   f(\{v_1, v_2\}) = \left\{
     \begin{array}{ll}
       d & : \{s, v_2\} \in E_1, v_2\in S'\\
       1 & : \{v_1, v_2\} \in E_2, v_1\in S'\\
       1 & : \{v_1, t\} \in E_3, v_1\in \delta(S) \text{ ($v_1$ corresponds to an edge in $G$ here)}\\
       d - \text{deg}_G(\eta(e)) & : \{v_1, t\} \in E_4, v_1\in S'\\
       0 & : \text{otherwise}
     \end{array}
   \right.
\end{displaymath}

The cost of the nonzero edges in this flow is $k$, as there are precisely $k$ edges in $E_1$ with positive flow value. The value of this flow is $kd$, so $\alpha_k^E(G_1, s, t) \ge kd$. Since at most $k$ edges from $E_1$ can be paid for, there is a cut of total capacity $kd$ in the subgraph of $G$ induced by paid-for edges. Therefore, $\alpha_k^E(G_1, s, t)\le kd$ and this direction is complete.
\end{proof}

\begin{proof}[Proof of the only if direction]
Suppose that $\alpha_k^E(G_1, s, t) = kd$. Let $f: E(G_1)\rightarrow \mathbb{Z}_{\ge 0}$ be a flow with flow value $kd$. Since at most $k$ edges of $E_1$ can be paid for, $f(e) = 0$ or $f(e) = d$ for all $e\in E_1$. Furthermore, exactly $k$ edges$e$ in $E_1$ have the property that $f(e) = d$. The right endpoints of these $k$ edges correspond to vertices in $G$ and are denoted as set $T$. Next, we show that $T$ is an independent set in $G$.\\

Suppose instead that $T$ is not an independent set and that for two vertices $u, v\in T$, $e = \{u, v\}\in E(G)$. Let $u'$ and $v'$ be the corresponding vertices to $u$ and $v$ respectively in $V(G_1)$ and let $e'$ be the vertex in $V(G_1)$ corresponding to $e$. Since $f(\{s, u'\}) = f(\{s, v'\}) = d$, which is full flow, all outgoing edges from $u'$ and $v'$ must have full flow. This means that $f(\{u', e'\}) = f(\{v', e'\}) = 1$ and that $f(\{e', t\}) = 2 > w_1(\{e', t\}) = 1$, a contradiction. Therefore, $T$ is an independent set in $G$ of size at least $k$.
\end{proof}

Since this reduction takes polynomial time, Lemma \ref{WeakMaxFlowNPC} is proven.
\end{proof}

\begin{figure}[H]

\begin{tikzpicture}
\node(MaxFlowA){
	\begin{tikzpicture}[scale = 2]
		\Vertex[x = 0, y = 0]{A0}
		\Vertex[x = 0, y = 2]{A1}
		\Vertex[x = 1, y = 1]{A2}
		\Vertex[x = 2, y = 1]{A3}

		\Edge[label = a](A0)(A1)
		\Edge[label = b](A0)(A2)
		\Edge[label = c](A1)(A2)
		\Edge[label = d](A2)(A3)
	\end{tikzpicture}
};
\node[yshift = -10cm](MaxFlowB){
	\begin{tikzpicture}[scale = 1.1]
		\tikzset{VertexStyle/.append style = {fill = blue}}
		\Vertex[x = 0, y = 5]{S}
		\Vertex[x = 12, y = 5]{T}
		\tikzset{VertexStyle/.append style = {fill = none}}

		\Vertex[x = 4, y = 8]{A0}
		\Vertex[x = 4, y = 6]{A1}
		\Vertex[x = 4, y = 4]{A2}
		\Vertex[x = 4, y = 2]{A3}

		\Vertex[x = 8, y = 8]{a}
		\Vertex[x = 8, y = 6]{b}
		\Vertex[x = 8, y = 4]{c}
		\Vertex[x = 8, y = 2]{d}

		\tikzset{EdgeStyle/.style={ 
			postaction=decorate,
            		decoration={
                			markings,
                			mark=at position 1 with {\arrow[scale = 1.5]{>}}
            		}
		}}
		\Edge[style = dashed, label = $3|1$](S)(A0)
		\Edge[style = dashed, label = $3|1$](S)(A1)
		\Edge[style = dashed, label = $3|1$](S)(A2)
		\Edge[style = dashed, label = $3|1$](S)(A3)

		\Edge[label = {$1|0$,a}](a)(T)
		\Edge[label = {$1|0$,b}](b)(T)
		\Edge[label = {$1|0$,c}](c)(T)
		\Edge[label = {$1|0$,d}](d)(T)

		\Edge(A0)(a)
		\Edge(A1)(a)

		\Edge(A0)(b)
		\Edge(A2)(b)

		\Edge(A1)(c)
		\Edge(A2)(c)

		\Edge(A2)(d)
		\Edge(A3)(d)

		\Edge[color = red, label = $1|0$](A0)(T)
		\Edge[color = red, label = $1|0$](A1)(T)
		\Edge[color = red, label = $2|0$](A3)(T)
	\end{tikzpicture}
};

\path[SquigglyArrow, ->](MaxFlowA)--(MaxFlowB);
\end{tikzpicture}


\caption{The reduction in the proof of Lemma \ref{WeakMaxFlowNPC}, with $w|c$ as the edge labels. Dashed edges must be paid for, while dark edges are never deleted (since they have cost 0). All edges in $E_2$ (which are the unlabeled edges here) have label $1|0$.}\label{BCFIPNonplanarImage}
\end{figure}
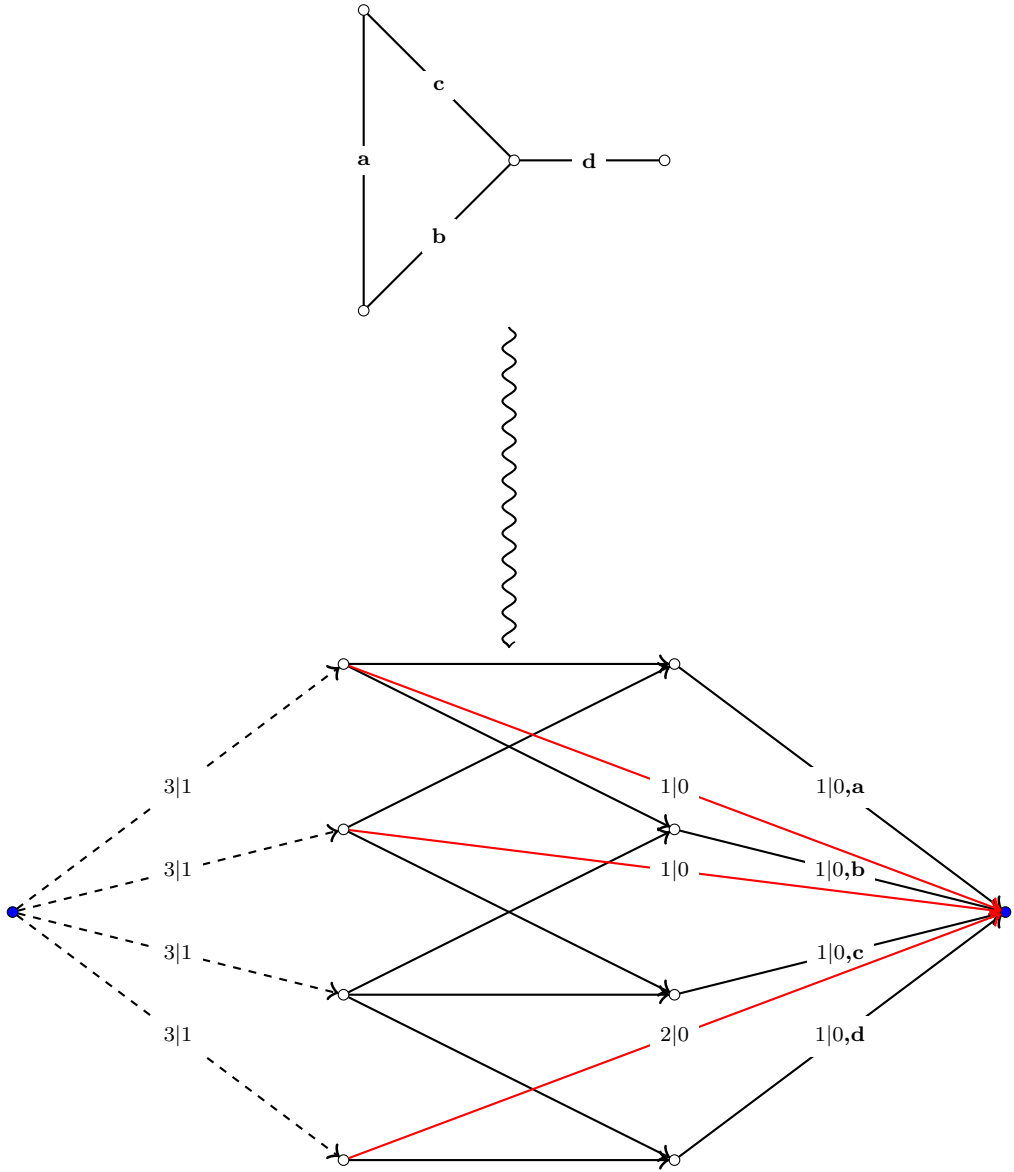

There are two properties of this reduction that facilitate crossing removal:

\begin{enumerate}
\item If $G$ has an independent set of size $k$, then there exists a set $F\subseteq E(G_1)$ that will give rise to an optimal flow with full flow along all edges in $F$.
\item If $G$ has an independent set $T$ of size $k$, $\delta^+(v)\cup \delta^-(v)\subseteq F$ for every vertex $v\in V(G_1)$ corresponding to a vertex of $T$.
\end{enumerate}

We will prove Theorem \ref{MaxFlowNPC} by removing crossings from $G_1$.  First, we will  discuss the key idea. The first property states that the existence of an independent set of size at least $k$ guaranteed that all edges transfered either no flow or full flow in. If two edges cross and both have full flow,  adding a vertex at their intersection does not alter flow from the original network. If two edges cross and only one edge has full flow, we need to ensure the flow is not redistributed by adding a vertex at the crossing. We will show that there exists a combination of edge capacities and costs to guarantee that  
no redistribution of flow can result in maximum flow.

We will now describe how to remove crossings from $G_1$ to obtain a planar graph $G_2$ with an associated capacity function $w_2: E(G_2)\rightarrow \mathbb{Z}_{\ge 0}$ and cost function $c_2: E(G_2)\rightarrow \mathbb{Z}_{\ge 0}$:\\

\begin{enumerate}
	\item Embed $G_1$ in an $x-y$ coordinate system so that it has the following properties:\label{G1Embedding}
	\begin{itemize}
		\item $s$ has coordinates $(0, 0)$
		\item $t$ has coordinates $(1, 0)$
		\item all right endpoints of edges in $E_1$ (left endpoints of $E_2$ and $E_4$) have $x$-coordinate $\frac{1}{3}$
		\item all right endpoints of edges in $E_2$ (left endpoints of $E_3$) have $x$-coordinate $\frac{2}{3}$
		\item all edges in $E_1$, $E_2$, and $E_3$ are embedded as line segments
		\item all edges in $E_4$ are embedded so that they do not cross edges in $E_3$
	\end{itemize}
	\item Obtain $G_2$ by adding vertices at all edge crossings in the $x-y$ coordinate embedding of $G_1$. Note that all of the added crossings will have $x$-coordinate in the interval $(\frac{1}{3}, \frac{2}{3})$. Each edges in $G_2$ has a {\em parent edge} in $G_1$. For edges in $E_1$ and $E_3$, there are no crossings, and each edge in $E_1$ and $E_3$ has exactly one {\em child edge} in $G_2$.  The added crossings splits an edge in $E_2$ and $E_4$ into multiple child edges. Thus, an each in $E_2$ and $E_4$ can possibly have multiple child edges. \label{CrossingConstruction}
	
	\item Now, construct $w_2$ as follows:\label{W2Construction}
	\begin{enumerate}
		\item Arbitrarily label the vertices of $G_1$ that correspond to edges of $G$ with the integers 1 through $|E(G)|$ inclusive.
		\item For any edge $e \in E(G_2)$ with its parent (a copy of itself) in $E_1$ (within $G_1$), let $w_2(e) = |E(G)|^2$.
		\item For any edge $e \in E(G_2)$ with its parent in $E_2$, let $w_2(e)$ be the label (assigned above in (a)) of the right endpoint of the parent of $e$ in $G_1$.
		\item For any edge $e \in E(G_2)$ with its parent (a copy of itself) in $E_3$, let $w_2(e)$ be the label (assigned above in (a)) of the left endpoint of the parent of $e$ in $G_1$.
		\item For any edge $e \in E(G_2)$ with its parent in $\{u, t\} \in E_4$, let $$w_2(e) = |E(G)|^2 - \sum_{\substack{\hat{e}\in E(G_2): \\ \hat{e} \in \delta^+(u), \kappa(\hat{e})\in E_2}} w_2(\hat{e})$$ where $\kappa: 2^{E(G_2)}\rightarrow 2^{E(G_1)}$ returns the parent edge of a child edge $G_2$.  See Fig.~\ref{MaxFlowNPCFigure}.
	\end{enumerate}
	\item Finally, construct $c_2$ as follows:\label{C2Construction}
	\begin{enumerate}
		\item Sort the crossing vertices added to $G_1$ in increasing order by $x$-coordinate (with ties broken arbitrarily) to obtain a list $\{v_i\}_{i = 1}^r$, where $r$ denotes the number of added points.
		\item For any edge $e\in E(G_2)$ with its parent in $E_1$, let $c_2(e) = w_2(e) = |E(G)|^2$.
		\item For any edge $e = \{v_i, v_j\}\in E(G_2)$ with its parent in $E_2$ or $E_4$, let $c_2(e) = (j - i)w_2(e)$.
		\item For any edge $e = \{u, v_i\}\in E(G_2)$ with its parent in $E_2$ or $E_4$ and $u$ having $x$-coordinate $\frac{1}{3}$, let $c_2(e) = iw_2(e)$.
		\item For any edge $e = \{v_i, w\}\in E(G_2)$ with its parent in $E_2$ and $w$ having $x$-coordinate $\frac{2}{3}$, let $c_2(e) = (r + 1 - i)w_2(e)$.
		\item For any edge $e = \{v_i, t\}\in E(G_2)$ with its parent in $E_4$, let $c_2(e) = (r + 2 - i)w_2(e)$.
		\item For any edge $e = \{u, w\}\in E(G_2)$ with its parent in $E_2$, $u$ with $x$-coordinate $\frac{1}{3}$, and $w$ with $x$-coordinate $\frac{2}{3}$, let $c_2(e) = (r + 1)w_2(e)$.
		\item For any edge $e = \{u, t\}\in E(G_2)$ with its parent in $E_4$ and $u$ with $x$-coordinate $\frac{1}{3}$, let $c_2(e) = (r + 2)w_2(e)$.
		\item For any edge $e = \{w, t\}\in E(G_2)$ with its parent in $E_3$ and $w$ with $x$-coordinate $\frac{2}{3}$, let $c_2(e) = w_2(e)$.
	\end{enumerate}
\end{enumerate}

\begin{proposition}\label{CostProp}
For any positive integer $k$, the total edge cost of a flow with value at least $k|E(G)|^2$ must be at least $(r + 3)k|E(G)|^2$.

\end{proposition}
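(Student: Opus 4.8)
Let me understand what this proposition asserts and what structure the cost function $c_2$ is designed to enforce.

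The construction of $c_2$ in step \ref{C2Construction} is carefully rigged so that every $s$-$t$ path in $G_2$ has the same total cost per unit of capacity it carries, namely $(r+3)$ times the capacity. The plan is to verify this by a telescoping/path-decomposition argument: I would show that for any single $s$-$t$ path $P$ in $G_2$, the cost $c_2(P)$ equals exactly $(r+3)w_2(P)$ where $w_2(P)$ denotes the bottleneck-free capacity contribution of the path (more precisely, $(r+3)$ times the flow pushed along $P$). Since a flow of value $v$ decomposes into $s$-$t$ paths (and possibly cycles, which we may ignore or remove without decreasing flow value or increasing cost), summing over the path decomposition gives total cost $(r+3)v$, and setting $v \ge k|E(G)|^2$ yields the claim.

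The key step, then, is the per-path cost computation. Every $s$-$t$ path in $G_2$ begins with an edge of $E_1$ (cost $= w_2(e) = |E(G)|^2$, contributing a coefficient of $1$), then traverses a sequence of crossing vertices $v_{i_1}, v_{i_2}, \dots$ before reaching $t$, either via an $E_2$-then-$E_3$ route or via an $E_4$ route. I would trace each case and check that the coefficients defined in steps \ref{C2Construction}(c)--(i) telescope: the first segment from the $x=\frac13$ endpoint to the first crossing $v_{i_1}$ contributes coefficient $i_1$; each intermediate segment $\{v_{i},v_{j}\}$ contributes $(j-i)$; and the final segment to the $x=\frac23$ endpoint or to $t$ contributes the complementary coefficient $(r+1-i)$, $(r+2-i)$, $(r+1)$, or $(r+2)$ as appropriate, plus the $E_3$ edge of coefficient $1$. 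Summing a telescoping chain $i_1 + (i_2-i_1) + \cdots + (r+c-i_{\text{last}})$ collapses to $r+c$, and combined with the leading $E_1$ coefficient $1$ and (for the $E_2$/$E_3$ route) the trailing $E_3$ coefficient $1$, I expect every path to total exactly $r+3$ times its capacity.

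The main obstacle I anticipate is bookkeeping rather than conceptual: I must confirm that the case split in step \ref{C2Construction} is exhaustive and that the coefficient assigned to each edge type makes the telescoping sum come out to precisely $r+3$ in \emph{all} topological variants of an $s$-$t$ path (crossing-free $E_2$ edges versus edges split at multiple crossings, $E_4$ edges from an $x=\frac13$ vertex directly to $t$ versus $E_4$ edges passing through crossings, etc.). I would organize this by classifying each path according to whether its middle portion is an $E_2$/$E_3$ pair or a single $E_4$ edge, and within each class by the number of crossing vertices traversed, then verify the telescoping constant is $r+3$ in each. Care is also needed because the $w_2$ weights along the $E_2$ and $E_3$ portions of a single path need not be equal in general, but since the proposition only lower-bounds cost in terms of flow value and every edge's cost is a fixed multiple of its own capacity, the path-decomposition lower bound $c_2(P) = (r+3)\cdot(\text{flow on }P)$ holds edge-by-edge regardless, so the weight values only enter as the per-edge multipliers and do not disrupt the coefficient telescoping.
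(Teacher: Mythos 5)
Your key computation---that the cost coefficients in step \ref{C2Construction} telescope to exactly $r+3$ along every $s$--$t$ path---is correct, and it is the same combinatorial fact that drives the paper's proof: the coefficient of each edge equals the number of the $r+3$ sweepline cuts ($C_0=E_1$; $C_1,\dots,C_{r+1}$ given by vertical lines between consecutive crossing abscissae; $C_{r+2}=E_3\cup E_4$) that contain that edge, and every $s$--$t$ path, being monotone in the $x$-coordinate, meets each cut exactly once. The genuine gap is in your aggregation step. The quantity the proposition bounds is the cost of the purchased edge set, $\sum_{e:\,f(e)>0}c_2(e)$, in which each edge is paid for \emph{once}; you instead sum the per-path costs $c_2(P)$ over a path decomposition of a flow $f$ of value $v\ge k|E(G)|^2$. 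The paths here are badly non-edge-disjoint---for instance, all paths leaving a fixed vertex at $x=\tfrac13$ reuse the same $E_1$ edge of cost $|E(G)|^2$, whose flow splits among its many $E_2$-children---so summing per-path costs \emph{overcounts} the true cost: $\sum_{e:\,f(e)>0}c_2(e)\le\sum_P c_2(P)$. Consequently, establishing $\sum_P c_2(P)\ge(r+3)v$ yields no lower bound on the true cost; the two inequalities point the same way and do not chain.

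The repair is to do the accounting per edge rather than per path. Since $c_2(e)=\mathrm{coeff}(e)\,w_2(e)$ and $w_2(e)\ge f(e)$, you have $c_2(e)\ge\mathrm{coeff}(e)\,f(e)$, whence
\begin{align*}
\sum_{e:\,f(e)>0}c_2(e)\;\ge\;\sum_{e}\mathrm{coeff}(e)\,f(e)\;=\;\sum_{i=0}^{r+2}\,\sum_{e\in C_i}f(e)\;=\;(r+3)\,v,
\end{align*}
where the last equality holds because all edges cross each cut in the forward direction, so the flow through each cut equals the flow value; equivalently, expand $f(e)=\sum_{P\ni e}\phi_P$ and exchange the order of summation to recover your telescoping, now weighted by the total flow on each edge rather than counted once per path using it. Note that this per-edge, per-cut accounting is precisely the paper's argument: the paper decomposes the \emph{cost} over the $r+3$ cuts and uses that the capacity carrying flow across each cut must be at least the flow value, never invoking a path decomposition and hence never risking the double count.
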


\begin{proof}
In the embedding construction of $G_2$, there are with $r + 3$ cuts associated with the construction in order by $x$-coordinate. Let $\{z_i\}_{i = 1}^r$ be the $x$-coordinates of the crossings, $z_0 = \frac{1}{3}$, and $z_{r + 1} = \frac{2}{3}$. Let cut $C_i\subseteq E(G_2)$ for $i \in \{1, 2, \hdots, r + 1\}$ be the set of edges intersected by the line $x = \frac{z_{i - 1} + z_i}{2}$. Let $C_0 = E_1$ and $C_{r + 2} = E_3\cup E_4$.

The cost of an edge $e$ with capacity $w_2(e)$ in $G_2$ is assigned as $w_2(e)$ times the number of these $r + 3$ cuts that $e$ is in.  For any  $s-t$ path in $G_2$, the path intersects each of the $r+3$ cuts exactly once. Therefore, the cost of sending one unit of flow in $G_2$ is at least $(r+3)$, and sending  $k|E(G)|^2$ units of flow costs at least $(r+3) k|E(G)|^2$. This completes the proof.
\end{proof}

\begin{proposition}\label{PartialFlowProp}
For any positive integer $k$, the total edge cost of a flow with value at least $k|E(G)|^2$ that has partial flow along at least one edge must be at least $(r + 3)k|E(G)|^2 + 1$.
\end{proposition}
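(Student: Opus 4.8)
The plan is to reuse the cut decomposition from the proof of Proposition \ref{CostProp} and to squeeze one extra unit of cost out of the single cut that contains the partially-used edge. Recall that $c_2$ is calibrated so that, for every edge $e\in E(G_2)$, the cost $c_2(e)$ equals $w_2(e)$ times the number of the $r+3$ cuts $C_0,C_1,\hdots,C_{r+2}$ to which $e$ belongs; this is checked case by case against the definition of $c_2$ (for instance, an edge $\{v_i,v_j\}$ with parent in $E_2$ meets exactly the cuts $C_{i+1},\hdots,C_j$, giving the factor $j-i$). Consequently, writing $f$ for the flow, $V := k|E(G)|^2$ for its value, and $I := \{e\in E(G_2): f(e)>0\}$ for the set of edges it uses (all of which must be paid for), the total edge cost of $f$ is
$$ c_2(I) = \sum_{i=0}^{r+2}\ \sum_{e\in C_i\cap I} w_2(e). $$

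First I would bound each inner sum from below. Since $G_2$ is embedded so that every edge points from smaller to larger $x$-coordinate, it is a directed acyclic graph in which each of the $r+3$ cuts is a genuine $s$--$t$ cut crossed only in the forward direction. Hence flow conservation gives $\sum_{e\in C_i\cap I} f(e) = V$ for every $i$, and because $f(e)\le w_2(e)$ we get $\sum_{e\in C_i\cap I} w_2(e)\ge V$. Summing over all $r+3$ cuts recovers exactly the bound $c_2(I)\ge (r+3)V$ of Proposition \ref{CostProp}.

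The improvement by one comes from the edge carrying partial flow. Let $e^\ast$ be an edge with $0 < f(e^\ast) < w_2(e^\ast)$, and let $C_{i^\ast}$ be any cut containing $e^\ast$ (every edge of $G_2$ lies on at least one cut). On that cut,
$$ \sum_{e\in C_{i^\ast}\cap I} w_2(e) \ \ge\ \sum_{e\in C_{i^\ast}\cap I} f(e) \ +\ \bigl(w_2(e^\ast)-f(e^\ast)\bigr) \ =\ V + \bigl(w_2(e^\ast)-f(e^\ast)\bigr) \ >\ V. $$
The key point is that I do not need $f$ itself to be integral: the left-hand side $\sum_{e\in C_{i^\ast}\cap I} w_2(e)$ is an integer because every capacity $w_2(e)$ is an integer, and $V$ is an integer, so a strict inequality between the two forces $\sum_{e\in C_{i^\ast}\cap I} w_2(e)\ge V+1$. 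Retaining the bound $\ge V$ on the other $r+2$ cuts and summing yields $c_2(I)\ge (r+3)V + 1 = (r+3)k|E(G)|^2 + 1$, as claimed.

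The only real obstacle is justifying the two structural facts behind this argument: that each $C_i$ is crossed purely in the forward direction (so the flow across it equals $V$ exactly, not merely at least $V$), which follows from the left-to-right acyclic embedding of $G_2$; and the integrality jump from a strict inequality to a gap of at least one, for which it is essential that it is the \emph{capacities}, not the flow, that are integer-valued. Everything else is the bookkeeping already carried out in Proposition \ref{CostProp}.
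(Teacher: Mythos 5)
Your proof is correct and takes essentially the same approach as the paper's: decompose the total cost over the $r+3$ cuts from Proposition \ref{CostProp}, bound each cut's contribution below by the flow value, and extract one extra unit from a cut containing the partially-used edge. The only (cosmetic) difference is that you obtain the $+1$ from integrality of the capacities and of the flow value alone, whereas the paper also invokes integrality of the flow itself ($w_2(e)-f(e)\ge 1$); both are valid since flows here are integer-valued by definition.
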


\begin{proof}
Let $e\in E(G_2)$ be an edge with partial flow under a flow $f: E(G_2)\rightarrow \mathbb{Z}_{\ge 0}$. Consider one of the $r + 3$ cuts discussed in the proof of Proposition \ref{CostProp} that contains $e$. It suffices to show that the cost of edges in this cut is at least $k|E(G)|^2 + 1$. Since the capacities and flows through edges are integers and flow on $e$ is less than its capacity, $c_2(e) - f(e)\ge 1$, which implies the result.
\end{proof}

\begin{proposition}\label{DistinctCapacitiesProp}
Let $G_2$ be the planar graph that results from edge crossing removal. Consider two edges $e, e'\in E(G_1)$ that cross in the $x-y$ embedding of $G_1$. Consider any edges $f, f'\in E(G_2)$ with parent edges $e$ and $e'$ respectively. Then, $w_2(f)\ne w_2(f')$.
\end{proposition}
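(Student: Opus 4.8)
The plan is to reduce the claim to a single comparison per crossing and then dispatch the handful of geometric cases that can actually arise. First I would note that $w_2$ is constant across all child edges sharing a common parent: by Step~\ref{W2Construction} every child of an $E_2$-parent is assigned the label of that parent's edge-vertex endpoint, every child of an $E_4$-parent $\{u,t\}$ is assigned $|E(G)|^2-\sigma(w)$, where $w\in V(G)$ is the vertex corresponding to $u$ and $\sigma(w):=\sum_{f\in\delta_G(w)}\mathrm{label}(f)$ (this is exactly the sum appearing in the definition of $w_2$ on $E_4$), and $E_1,E_3$-parents have unique children. Hence $w_2(f)$ depends only on the parent of $f$, so it suffices to show that two \emph{crossing} parents receive different capacities. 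Writing $m:=|E(G)|$, I would record the attainable values: $m^2$ for $E_1$-children, a label in $\{1,\dots,m\}$ for $E_2$- and $E_3$-children, and $m^2-\sigma(w)$ for $E_4$-children.

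Next I would trim the cases. By Step~\ref{CrossingConstruction} every added crossing has $x$-coordinate in $(\tfrac13,\tfrac23)$, and in that open strip only $E_2$-edges and the initial portions of $E_4$-edges are present ($E_1$ lives in $[0,\tfrac13]$ and $E_3$ in $[\tfrac23,1]$). Since I embed each $E_4$-edge as a straight segment from its vertex-vertex to $t=(1,0)$, any two such segments fan into the common endpoint $t$ and therefore meet only at $t$ (the two vertex-vertices lie on the line $x=\tfrac13$, which omits $t$, so the segments are never collinear); as $t$ sits at $x=1\notin(\tfrac13,\tfrac23)$, no two $E_4$-edges cross. Thus a crossing pair $(e,e')$ consists either of two $E_2$-edges, or of one $E_2$-edge and one $E_4$-edge.

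For two crossing $E_2$-edges I would argue that they must have distinct edge-vertex (right) endpoints: two straight segments sharing a right endpoint meet only at that point, whose $x$-coordinate is $\tfrac23$, contradicting an interior crossing in $(\tfrac13,\tfrac23)$. Distinct edge-vertices carry distinct labels by the labeling of Step~\ref{W2Construction}, so the two parent capacities, being these labels, differ. For a crossing between an $E_2$-edge and an $E_4$-edge I would separate by magnitude: the $E_2$-child has capacity at most $m$, while the $E_4$-child has capacity
$$m^2-\sigma(w)\;\ge\; m^2-\tfrac{m(m+1)}{2}\;=\;\tfrac{m(m-1)}{2},$$
using $\sigma(w)\le 1+2+\cdots+m$. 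For $m\ge 4$ this lower bound exceeds $m$, so the two capacity ranges are disjoint and the values differ. The finitely many instances with $|E(G)|\le 3$ can be assumed away, since maximum independent set is decided directly on such $G$, so the reduction need only be invoked for $m\ge 4$.

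The part I expect to require the most care is the geometric bookkeeping rather than the arithmetic. Specifically, the argument hinges on the embedding choice eliminating $E_4$--$E_4$ crossings: were two $E_4$-edges allowed to cross, I would be forced to distinguish their capacities through the sums $\sigma(w)$ and $\sigma(w')$, and an \emph{arbitrary} labeling gives no guarantee that these sums are distinct. Routing the $E_4$-edges straight into $t$ sidesteps this entirely. The remaining subtlety — that an interior crossing of straight segments forces distinct endpoints — is elementary, and once both are in hand the numeric separation in the mixed case is routine.
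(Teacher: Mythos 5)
Your proof is correct and takes essentially the same approach as the paper's: reduce to comparing parent capacities, rule out $E_4$--$E_4$ crossings, settle the $E_2$--$E_2$ case via distinct right-endpoint labels, and separate the mixed $E_2$--$E_4$ case by the magnitude bound $|E(G)|^2 - \tfrac{|E(G)|(|E(G)|+1)}{2} = \tfrac{|E(G)|(|E(G)|-1)}{2} > |E(G)|$ for $|E(G)| \ge 4$, handling smaller instances by brute force. The only difference is that you spell out the geometric facts (straight $E_4$ segments fanning into $t$ cannot cross one another, and crossing $E_2$ segments must have distinct right endpoints) that the paper asserts without proof.
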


\begin{proof}
Note that all children of $e$ have the same maximum capacity. The same holds for children of $e'$. Parts (a) and (c) of Part \ref{W2Construction} of the construction ensure that if $e, e'\in E_2$, then $w_2(f)\ne w_2(f')$. Note that no two edges in $E_4$ cross in the embedding of $G_1$. The only other possible scenario with a crossing occurs when $e\in E_2$ and $e'\in E_4$. It suffices to show that $w_2(f') \ge \frac{|E(G)|^2 - |E(G)|}{2} > |E(G)|$ if $|E(G)|\ge 4$ (if $|E(G)|\le 3$, we can solve the instance with brute force in constant time). This suffices because the weight of any edge with a parent in $E_2$ is at most $|E(G)|$.
\end{proof}

These three properties are important for proving the next lemma. This lemma implies Theorem \ref{MaxFlowNPC} since the reduction takes polynomial time, the edge weights are polynomially bounded in the size of the graph, and $G_2$ is planar.

\begin{lemma}
$G$ has an independent set of size at least $k$ if and only if there is a flow on $G_2$ with total edge cost at most $(r + 3)k|E(G)|^2$ with value $k|E(G)|^2$.
\end{lemma}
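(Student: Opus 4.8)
The plan is to reduce both directions to the existence of an \emph{all-or-nothing} flow, meaning a flow $f$ in which every edge $e$ satisfies $f(e) \in \{0, w_2(e)\}$. Propositions \ref{CostProp} and \ref{PartialFlowProp} are precisely the tools that force this dichotomy: any flow of value $k|E(G)|^2$ costs at least $(r+3)k|E(G)|^2$, and strictly more as soon as a single edge carries partial flow. Consequently, a flow of value $k|E(G)|^2$ meeting the budget $(r+3)k|E(G)|^2$ exists if and only if an all-or-nothing flow of that value exists, and any such flow costs \emph{exactly} $(r+3)k|E(G)|^2$. This reframing lets me work entirely with saturated edges.

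For the forward direction I would start from an independent set $S \subseteq V(G)$ with $|S| = k$ and build the analogue of the flow used in the proof of Proposition \ref{NonplanarMaxFlowNPC}, now with the capacities $w_2$. For each $v \in S$, I saturate the (uncrossed) $E_1$ edge into $v^v$ with $|E(G)|^2$ units; at $v^v$ I push the full label-capacity along each $E_2$ chain to its edge-vertex and the remaining units along the $E_4$ chain to $t$. Conservation at $v^v$ holds because the definition of $w_2$ on the children of $E_4$ edges was chosen so that the outgoing capacities at $v^v$ sum to exactly $|E(G)|^2$. Routing every chain \emph{straight through} its crossing vertices keeps the flow all-or-nothing, and at each edge-vertex $v^e$ both conservation and the $E_3$ capacity are respected because independence of $S$ guarantees at most one of the two $E_2$ edges into $v^e$ carries flow. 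The resulting flow has value $k|E(G)|^2$, and by the computation in the proof of Proposition \ref{CostProp} its cost is exactly $(r+3)k|E(G)|^2$.

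The backward direction is where the real work lies. Given a flow $f$ of value $k|E(G)|^2$ and cost at most $(r+3)k|E(G)|^2$, Proposition \ref{PartialFlowProp} forces $f$ to be all-or-nothing. The crucial step, and the main obstacle, is to rule out \emph{redistribution} of flow at a crossing vertex. At each crossing (which involves two edges whose children have capacities $a$ and $b$), the all-or-nothing condition confines the incoming flow to the set $\{0, a, b, a+b\}$; since $a \ne b$ by Proposition \ref{DistinctCapacitiesProp}, these four values are distinct, so conservation forces the outgoing flow to match the incoming flow on each of the two through-edges \emph{separately}. In other words, flow entering on a child of parent edge $e$ must exit on a child of $e$, so $f$ descends to a well-defined all-or-nothing flow $f_1$ on $G_1$ in which all children of a given parent edge carry that parent's common value.

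Finally, I would recover the independent set from $f_1$ in the style of the only-if direction of Proposition \ref{NonplanarMaxFlowNPC}. Because $s$ has no incoming edges and each $E_1$ edge carries $0$ or $|E(G)|^2$, exactly $k$ of them are saturated; let $T \subseteq V(G)$ be the corresponding vertices. If some edge $\{u,v\} \in E(G)$ had both $u,v \in T$, then $v^u$ and $v^v$ each receive $|E(G)|^2$ units, and since the total outgoing capacity at each of these vertices equals exactly $|E(G)|^2$, the all-or-nothing condition forces \emph{every} outgoing edge to be saturated; in particular both $E_2$ chains into $v^e$ are saturated, delivering $2\ell$ units to $v^e$, where $\ell$ is the label of $v^e$, which exceeds the capacity $\ell$ of its unique outgoing ($E_3$) edge, a contradiction. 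Hence $T$ is an independent set of size $k$. Since $G_2$ is planar, the reduction runs in polynomial time, and the edge weights are bounded by a polynomial in $|E(G)|$, this lemma yields Theorem \ref{MaxFlowNPC}.
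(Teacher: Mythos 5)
Your proposal is correct and follows essentially the same route as the paper's proof: use Proposition \ref{PartialFlowProp} to force all-or-nothing flow, use Proposition \ref{DistinctCapacitiesProp} to show flow cannot be redistributed at crossing vertices (your ``four distinct values $\{0,a,b,a+b\}$'' phrasing is just a compact packaging of the paper's four-case analysis), descend the flow to $G_1$, and extract the independent set exactly as in Lemma \ref{WeakMaxFlowNPC}. The only cosmetic difference is that you make explicit, via the cut-counting from Proposition \ref{CostProp}, that every all-or-nothing flow of value $k|E(G)|^2$ costs exactly $(r+3)k|E(G)|^2$, a step the paper leaves as ``one can check.''
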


\begin{proof}
First, suppose that $G$ has an independent set $S\subseteq V(G)$ with $|S|\ge k$. Construct a flow $f: E(G_2)\rightarrow \mathbb{Z}_{\ge 0}$ as follows:

\begin{enumerate}
	\item Start by finding a flow $f': E(G_1)\rightarrow \mathbb{Z}_{\ge 0}$, where $G_1$ has a new capacity function $w_2': E(G_1)\rightarrow \mathbb{Z}_{\ge 0}$ defined by $w_2'(p) = w_2(e)$ for any child $e$ of $p$ (since all children have the same capacity). Let $S_1\subseteq V(G_1)$ be the vertices $G_1$ corresponding to vertices of $S$ in $G$ and let $T_1\subseteq V(G_1)$ be the vertices of $G_1$ corresponding to edges in $\delta(S)$ in $G$. Since $S$ is an independent set of $G$, the function

\begin{displaymath}
   f'(p) = \left\{
     \begin{array}{lr}
       w_2'(p) & : p \in \delta(S_1)\cup \delta^+(T_1)\\
       0 & : p \notin \delta(S_1)\cup \delta^+(T_1)
     \end{array}
   \right.
\end{displaymath}

is a flow on $G_1$ under the capacity function $w_2'$ with flow value $k|E(G)|^2$.
	\item Define $f$ by $f(e) = f'(p)$ for all $e\in E(G_2)$ with parent $p\in E(G_1)$.
\end{enumerate}

Note that $f$ also has flow value $k|E(G)|^2$. One can check that the flow $f$ has cost $(r + 3)k|E(G)|^2$ as well, completing this direction.

Now, suppose that $G_2$ has a flow $f: E(G_2)\rightarrow \mathbb{Z}_{\ge 0}$ with total edge cost $(r + 3)k|E(G)|^2$ and value $k|E(G)|^2$. By Proposition \ref{PartialFlowProp}, all edges either have no flow or full flow. Let $F\subseteq E(G_2)$ be the set of edges with full flow. We will start by showing that there is a flow $f'$ on $G_1$ with capacity function $w_2'$ such that for all edges $e\in E(G_2)$ with parent $p\in E(G_1)$, $f(e) = f'(p)$. Let the sequence of crossings as they were ordered for assigning edge costs in Part (\ref{C2Construction}a) be $\{x_i\}_{i = 1}^r$. For every $p\in E_2\cup E_4$, it suffices to show that its children have equal flow values under $f$. Consider a crossing $x_i$. Let $e_1, e_2$ be the incoming edges and let $e_1', e_2'$ be the outgoing edges, with $w_2(e_2') = w_2(e_2)$ and $w_2(e_1') = w_2(e_1)$. By Proposition \ref{DistinctCapacitiesProp}, $w_2(e_2) \ne w_2(e_1)$. There are four cases that must be dealt with, since all edges either have full flow or no flow with respect to $f$:

\begin{enumerate}
	\item $f(e_1) = f(e_2) = 0$. In this case, the only possible flow for $e_1'$ and $e_2'$ is $f(e_1') = f(e_2') = 0$.
	\item $f(e_1) = w_2(e_1)$ and $f(e_2) = w_2(e_2)$. This is full flow, so $f(e_1') = w_2(e_1)$ and $f(e_2') = w_2(e_2)$.
	\item $f(e_1) = w_2(e_1)$ and $f(e_2) = 0$. If any flow is sent along $e_2'$, then $e_1'$ has partial flow or $f(e_1') = 0$ and $e_2'$ has partial flow (since $w_2(e_1')\ne w_2(e_2')$). Therefore, no flow is sent on $e_2'$ and $f(e_2') = 0$, $f(e_1') = w_2(e_1)$.
	\item $f(e_1) = 0$ and $f(e_2) = w_2(e_2)$. It is not possible to send full flow along $e_1'$ in this case (similar to the previous case), so $f(e_1') = 0$ and $f(e_2') = w_2(e_2)$.
\end{enumerate}

In all cases, note that $f(e_1) = f(e_1')$ and $f(e_2) = f(e_2')$. Since this is true at all crossings, $f$ may be lifted to a flow $f': E(G_1)\rightarrow \mathbb{Z}_{\ge 0}$ with $f'(p) = f(e)$, where $p$ is the parent of $e\in E(G_2)$.\\

As in the proof of Lemma \ref{WeakMaxFlowNPC}, form an set $S\subseteq V(G)$ by taking the vertices of $G$ corresponding to left endpoints of edges with full flow under $f'$ in $G_1$. Note that $|S| = k$, since there must be exactly $k$ edges in $E_1$ with full flow and all other edges with no flow to give a flow value of $k|E(G)|^2$. It suffies to show that $S$ is an independent set of $G$. Suppose, instead, that there are two vertices $v_1, v_2\in S$ such that $\{v_1, v_2\}\in E(G)$. This implies that there is a vertex $u$ in $G_1$ corresponding to $\{v_1, v_2\}$ such that there are two incoming edges with full flow. However, this cannot occur, as this sum of the flows in these two edges will be double the capacity of the outgoing edge from $u$ in $E_3$. This is a contradiction, so $S$ must be an independent set.
\end{proof}

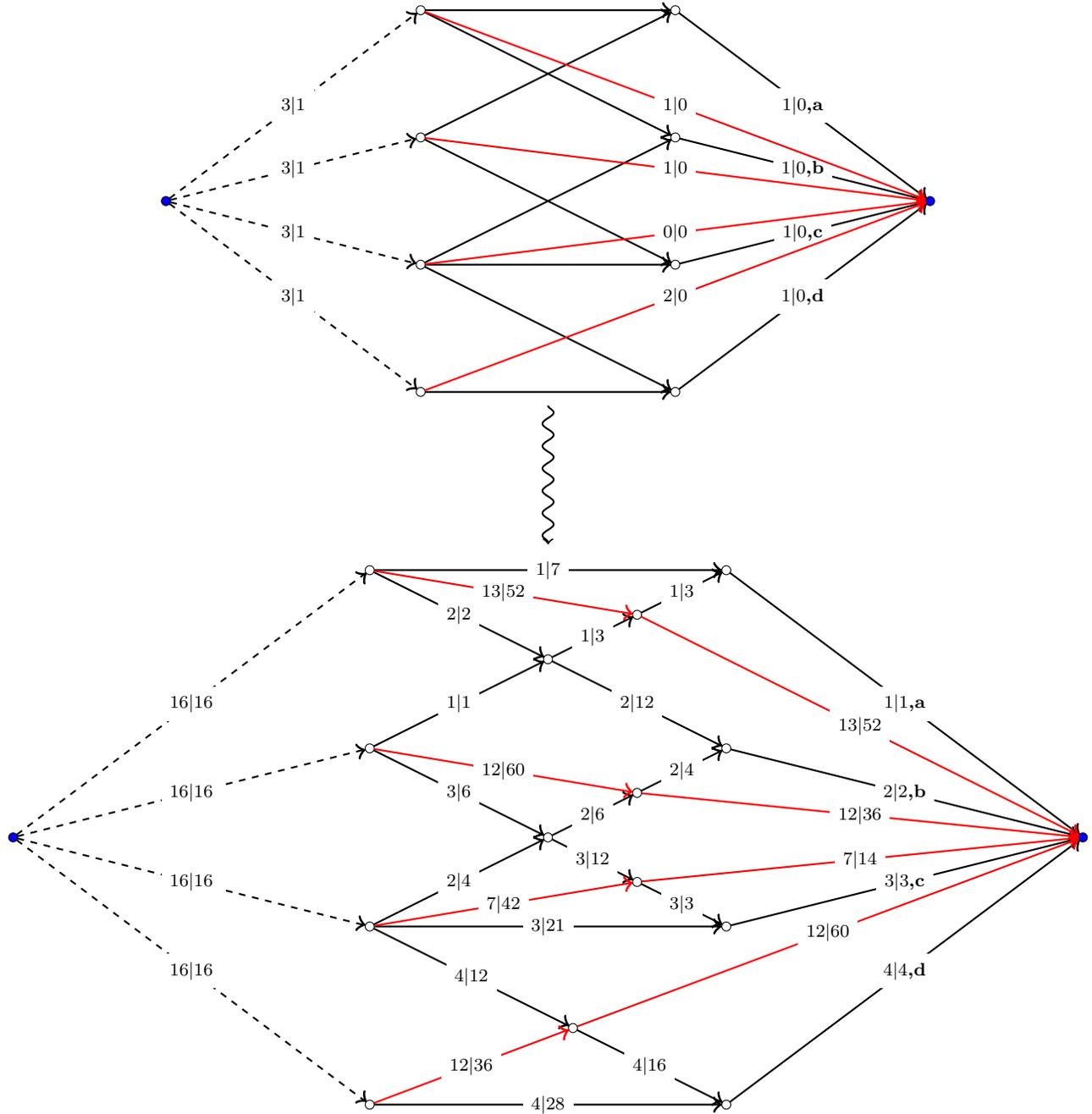
\begin{figure}[H]

\begin{tikzpicture}
\node(MaxFlowB){
	\begin{tikzpicture}
		\tikzset{VertexStyle/.append style = {fill = blue}}
		\Vertex[x = 0, y = 5]{S}
		\Vertex[x = 12, y = 5]{T}
		\tikzset{VertexStyle/.append style = {fill = none}}

		\Vertex[x = 4, y = 8]{A0}
		\Vertex[x = 4, y = 6]{A1}
		\Vertex[x = 4, y = 4]{A2}
		\Vertex[x = 4, y = 2]{A3}

		\Vertex[x = 8, y = 8]{a}
		\Vertex[x = 8, y = 6]{b}
		\Vertex[x = 8, y = 4]{c}
		\Vertex[x = 8, y = 2]{d}

		\tikzset{EdgeStyle/.style={ 
			postaction=decorate,
            		decoration={
                			markings,
                			mark=at position 1 with {\arrow[scale = 1.5]{>}}
            		}
		}}
		\Edge[style = dashed, label = $3|1$](S)(A0)
		\Edge[style = dashed, label = $3|1$](S)(A1)
		\Edge[style = dashed, label = $3|1$](S)(A2)
		\Edge[style = dashed, label = $3|1$](S)(A3)

		\Edge[label = {$1|0$,a}](a)(T)
		\Edge[label = {$1|0$,b}](b)(T)
		\Edge[label = {$1|0$,c}](c)(T)
		\Edge[label = {$1|0$,d}](d)(T)

		\Edge(A0)(a)
		\Edge(A1)(a)

		\Edge(A0)(b)
		\Edge(A2)(b)

		\Edge(A1)(c)
		\Edge(A2)(c)

		\Edge(A2)(d)
		\Edge(A3)(d)

		\Edge[color = red, label = $1|0$](A0)(T)
		\Edge[color = red, label = $1|0$](A1)(T)
		\Edge[color = red, label = $0|0$](A2)(T)
		\Edge[color = red, label = $2|0$](A3)(T)
	\end{tikzpicture}
};
\node[yshift = -10cm](MaxFlowC){
	\begin{tikzpicture}[scale = 1.4]
		\tikzset{VertexStyle/.append style = {fill = blue}}
		\Vertex[x = 0, y = 5]{S}
		\Vertex[x = 12, y = 5]{T}
		\tikzset{VertexStyle/.append style = {fill = none}}

		\Vertex[x = 4, y = 8]{A0}
		\Vertex[x = 4, y = 6]{A1}
		\Vertex[x = 4, y = 4]{A2}
		\Vertex[x = 4, y = 2]{A3}

		\Vertex[x = 8, y = 8]{a}
		\Vertex[x = 8, y = 6]{b}
		\Vertex[x = 8, y = 4]{c}
		\Vertex[x = 8, y = 2]{d}

		\Vertex[x = 6, y = 7]{X0}
		\Vertex[x = 6, y = 5]{X1}
		\Vertex[x = 6.28, y = 2.86]{X2}
		\Vertex[x = 7, y = 7.5]{X3}
		\Vertex[x = 7, y = 5.5]{X4}
		\Vertex[x = 7, y = 4.5]{X5}

		\tikzset{EdgeStyle/.style={ 
			postaction=decorate,
            		decoration={
                			markings,
                			mark=at position 1 with {\arrow[scale = 1.5]{>}}
            		}
		}}
		\Edge[style = dashed, label = $16|16$](S)(A0)
		\Edge[style = dashed, label = $16|16$](S)(A1)
		\Edge[style = dashed, label = $16|16$](S)(A2)
		\Edge[style = dashed, label = $16|16$](S)(A3)

		\Edge[label = {$1|1$,a}](a)(T)
		\Edge[label = {$2|2$,b}](b)(T)
		\Edge[label = {$3|3$,c}](c)(T)
		\Edge[label = {$4|4$,d}](d)(T)

		\Edge[label = $1|7$](A0)(a)
		\Edge[label = $1|1$](A1)(X0)
		\Edge[label = $1|3$](X0)(X3)
		\Edge[label = $1|3$](X3)(a)

		\Edge[label = $2|2$](A0)(X0)
		\Edge[label = $2|12$](X0)(b)
		\Edge[label = $2|4$](A2)(X1)
		\Edge[label = $2|6$](X1)(X4)
		\Edge[label = $2|4$](X4)(b)

		\Edge[label = $3|6$](A1)(X1)
		\Edge[label = $3|12$](X1)(X5)
		\Edge[label = $3|3$](X5)(c)
		\Edge[label = $3|21$](A2)(c)

		\Edge[label = $4|12$](A2)(X2)
		\Edge[label = $4|16$](X2)(d)
		\Edge[label = $4|28$](A3)(d)

		\Edge[color = red, label = $13|52$](A0)(X3)
		\Edge[color = red, label = $13|52$](X3)(T)
		\Edge[color = red, label = $12|60$](A1)(X4)
		\Edge[color = red, label = $12|36$](X4)(T)
		\Edge[color = red, label = $7|42$](A2)(X5)
		\Edge[color = red, label = $7|14$](X5)(T)
		\Edge[color = red, label = $12|36$](A3)(X2)
		\Edge[color = red, label = $12|60$](X2)(T)
	\end{tikzpicture}
};

\path[SquigglyArrow, ->](MaxFlowB)--(MaxFlowC);
\end{tikzpicture}

\caption{The result of crossing removal in the proof of Theorem \ref{MaxFlowNPC}. Labels are $w|c$. Note that six crossings were added, so $r = 6$ in this example. It costs at least 9 units of budget to send a unit of flow from $s$ to $t$. Furthermore, at any added crossing, the capacities of the incoming edges are distinct.}\label{MaxFlowNPCFigure}
\end{figure}

\section{Strong NP-hardness of shortest path interdiction on directed planar graphs}\label{ShortPathInt}

To prove Theorem \ref{ShortestPathNPC}, we reduce from BCFIP on planar graphs with source and sink adjacent to a face together, invoking Theorem \ref{MaxFlowNPC} to finish the proof. We use the max-flow min-cut theorem, the dual graph, and the fact that BCFIP is still strongly NP-complete when $s$ and $t$ are adjacent to one face to show that the max-flow problem can be transformed into a shortest path problem. The idea is similar to that of \cite{Phillips93} to show that the maximum flow interdiction problem is solvable in pseudo-polynomial time.

Consider a directed planar graph $G$, capacity function $w: E(G)\rightarrow \mathbb{Z}_{\ge 0}$, cost function $c: E(G)\rightarrow \mathbb{Z}_{\ge 0}$, budget $B > 0$, source vertex $s$, and sink vertex $t$ which collectively form an instance of BCFIP. Furthermore, assume that $s$ and $t$ are adjacent to a face together and $G\backslash s$ and $G\backslash t$ are connected (note that this is satisfied by the construction in Section \ref{MultiMaxFlow}). Without loss of generality, let this face be the infinite face of $G$. We will now construct a planar instance of DSPEIP as follows:

\begin{enumerate}
	\item Embed $G$ in the plane as $(\pi, E(G), E', R)$, where $E' = \{(u, v), (v, u): \forall (u, v)\in E(G)\}$ and $R((u, v)) = (v, u)$ for all $u, v\in V(G)$. Let $G_1 = G^*$ be the directed planar dual with respect to this embedding (see Figure \ref{DirectedPlanarDual}) and let $w\in V(G_1)$ be the vertex corresponding to the infinite face of $G$.

	\item Split $w$ into two vertices $u$ and $v$ to obtain the vertices of a graph $G_2$ and split the edges incident with $w$ as follows. Let $W = \{w_i\}_{i = 1}^k$ be a closed undirected walk in $G$ of vertices adjacent to the infinite face of $G$ that visits $s$ and $t$ exactly once with $w_1$ adjacent to $w_k$. Let $E(W)$ denote the set of edges on this walk. This walk exists because $G\backslash s$ and $G\backslash t$ are connected (see Proposition \ref{boundary-deletions}). Without loss of generality, suppose that $s$ appears before $t$ in $W$. Partition $E(W)$ into two sets $E_{st}$ and $E_{ts}$, where $E_{st}$ consists of edges that appear after $s$ and before $t$ in $W$ and $E_{ts} = E(W)\backslash E_{st}$. Have edges of $D_G(E_{st})$ be incident with $u$ instead of $w$ and have the edges of $D_G(E_{ts})$ be incident with $v$ instead of $w$. Let $g_2: E(G)\rightarrow E(G_2)$ be the edge set bijection described by this construction.

	\item Create a new graph $G_3$ from $G_2$ in which every edge is replaced with three copies, two in the same direction and one in the opposing direction. Let $g_{3a}: E(G)\rightarrow E(G_3)$ and $g_{3b}: E(G)\rightarrow E(G_3)$ injectively map edges of $G$ to same direction copies in $G_3$ and let $g_{3c}: E(G)\rightarrow E(G_3)$ injectively map edges of $G$ to reversed edges of $G_3$. Note that the sets $g_{3a}(E(G))$, $g_{3b}(E(G))$, and $g_{3c}(E(G))$ form a partition of $E(G_3)$.

	\item Let $w_3(g_{3a}(e)) = w(e)$ and $w_3(g_{3b}(e)) = w_3(g_{3c}(e)) = 0$ for each edge $e\in E(G)$.

	\item Let $c_3(g_{3a}(e)) = B + 1$, $c_3(g_{3b}(e)) = c(e)$, and $c_3(g_{3c}(e)) = B + 1$ for each edge $e\in E(G)$.
\end{enumerate}

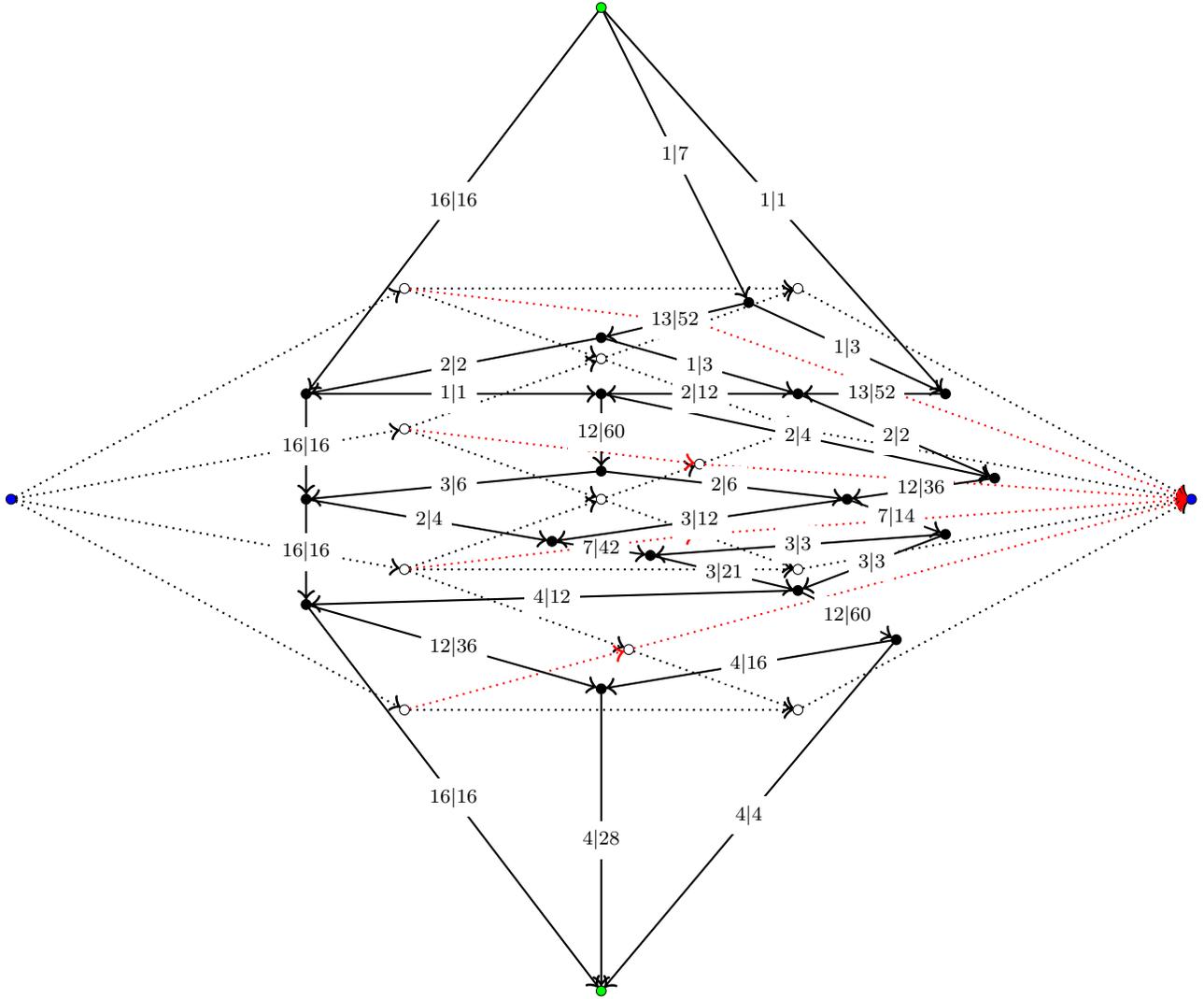
\begin{figure}
\begin{tikzpicture}[xscale = 1.4]

		\tikzset{VertexStyle/.append style = {fill = blue}}
		\Vertex[x = 0, y = 5]{S}
		\Vertex[x = 12, y = 5]{T}
		\tikzset{VertexStyle/.append style = {fill = none}}

		\Vertex[x = 4, y = 8]{A0}
		\Vertex[x = 4, y = 6]{A1}
		\Vertex[x = 4, y = 4]{A2}
		\Vertex[x = 4, y = 2]{A3}

		\Vertex[x = 8, y = 8]{a}
		\Vertex[x = 8, y = 6]{b}
		\Vertex[x = 8, y = 4]{c}
		\Vertex[x = 8, y = 2]{d}

		\Vertex[x = 6, y = 7]{X0}
		\Vertex[x = 6, y = 5]{X1}
		\Vertex[x = 6.28, y = 2.86]{X2}
		\Vertex[x = 7, y = 7.5]{X3}
		\Vertex[x = 7, y = 5.5]{X4}
		\Vertex[x = 7, y = 4.5]{X5}


		\tikzset{VertexStyle/.append style = {fill = green}}
		\Vertex[x = 6, y = 12]{U}
		\Vertex[x = 6, y = -2]{V}
		\tikzset{VertexStyle/.append style = {fill = black}}

		\Vertex[x = 3, y = 6.5]{B0}
		\Vertex[x = 3, y = 5]{B1}
		\Vertex[x = 3, y = 3.5]{B2}

		\Vertex[x = 7.5, y = 7.8]{B3}
		\Vertex[x = 6, y = 7.3]{B4}
		\Vertex[x = 6, y = 6.5]{B5}
		\Vertex[x = 6, y = 5.4]{B6}
		\Vertex[x = 5.5, y = 4.4]{B7}
		\Vertex[x = 6.5, y = 4.2]{B8}
		\Vertex[x = 6, y = 2.3]{B9}

		\Vertex[x = 9.5, y = 6.5]{B10}
		\Vertex[x = 8, y = 6.5]{B11}
		\Vertex[x = 10, y = 5.3]{B12}
		\Vertex[x = 8.5, y = 5]{B13}
		\Vertex[x = 9.5, y = 4.5]{B14}
		\Vertex[x = 8, y = 3.7]{B15}
		\Vertex[x = 9, y = 3]{B16}
		\tikzset{VertexStyle/.append style = {fill = none}}


		\tikzset{EdgeStyle/.style={ 
			postaction=decorate,
            		decoration={
                			markings,
                			mark=at position 1 with {\arrow[scale = 1.5]{>}}
            		}
		}}
		\Edge[style = dotted](S)(A0)
		\Edge[style = dotted](S)(A1)
		\Edge[style = dotted](S)(A2)
		\Edge[style = dotted](S)(A3)

		\Edge[style = dotted](a)(T)
		\Edge[style = dotted](b)(T)
		\Edge[style = dotted](c)(T)
		\Edge[style = dotted](d)(T)

		\Edge[style = dotted](A0)(a)
		\Edge[style = dotted](A1)(X0)
		\Edge[style = dotted](X0)(X3)
		\Edge[style = dotted](X3)(a)

		\Edge[style = dotted](A0)(X0)
		\Edge[style = dotted](X0)(b)
		\Edge[style = dotted](A2)(X1)
		\Edge[style = dotted](X1)(X4)
		\Edge[style = dotted](X4)(b)

		\Edge[style = dotted](A1)(X1)
		\Edge[style = dotted](X1)(X5)
		\Edge[style = dotted](X5)(c)
		\Edge[style = dotted](A2)(c)

		\Edge[style = dotted](A2)(X2)
		\Edge[style = dotted](X2)(d)
		\Edge[style = dotted](A3)(d)

		\Edge[style = dotted, color = red](A0)(X3)
		\Edge[style = dotted, color = red](X3)(T)
		\Edge[style = dotted, color = red](A1)(X4)
		\Edge[style = dotted, color = red](X4)(T)
		\Edge[style = dotted, color = red](A2)(X5)
		\Edge[style = dotted, color = red](X5)(T)
		\Edge[style = dotted, color = red](A3)(X2)
		\Edge[style = dotted, color = red](X2)(T)


		\Edge[label = $16|16$](U)(B0)
		\Edge[label = $1|7$](U)(B3)
		\Edge[label = $1|1$](U)(B10)
		\Edge[label = $16|16$](B0)(B1)
		\Edge[label = $1|1$](B0)(B5)
		\Edge[label = $16|16$](B1)(B2)
		\Edge[label = $2|4$](B1)(B7)
		\Edge[label = $12|36$](B2)(B9)
		\Edge[label = $16|16$](B2)(V)
		\Edge[label = $13|52$](B3)(B4)
		\Edge[label = $1|3$](B3)(B10)
		\Edge[label = $2|2$](B4)(B0)
		\Edge[label = $1|3$](B4)(B11)
		\Edge[label = $12|60$](B5)(B6)
		\Edge[label = $2|4$](B5)(B12)
		\Edge[label = $3|6$](B6)(B1)
		\Edge[label = $2|6$](B6)(B13)
		\Edge[label = $7|42$](B7)(B8)
		\Edge[label = $3|21$](B8)(B15)
		\Edge[label = $4|28$](B9)(V)
		\Edge[label = $13|52$](B10)(B11)
		\Edge[label = $2|12$](B11)(B5)
		\Edge[label = $2|2$](B11)(B12)
		\Edge[label = $12|36$](B12)(B13)
		\Edge[label = $3|12$](B13)(B7)
		\Edge[label = $7|14$](B13)(B14)
		\Edge[label = $3|3$](B14)(B8)
		\Edge[label = $3|3$](B14)(B15)
		\Edge[label = $4|12$](B15)(B2)
		\Edge[label = $12|60$](B15)(B16)
		\Edge[label = $4|16$](B16)(B9)
		\Edge[label = $4|4$](B16)(V)
\end{tikzpicture}
\caption{Reduction in the proof of Theorem \ref{ShortestPathNPC}. Dotted edges are edges from the old graph (from Figure \ref{MaxFlowNPCFigure}), while filled edges represent three different edges in the new graph. Every filled directed edge with weight $w|c$ in the new graph corresponds to one edge with weight $0|c$ in the same direction, one edge with weight $w|\infty$ in the same direction, and one edge with weight $0|\infty$ in the opposite direction.}\label{DirectedPlanarDual}
\end{figure}

Now, we prove Theorem \ref{ShortestPathNPC}:

\begin{proof}
Since the directed shortest path problem is in P, DSPEIP is in NP. Consider a graph $G$ with a capacity function $w: E(G)\rightarrow \mathbb{Z}_{\ge 0}$, cost function $c: E(G)\rightarrow \mathbb{Z}_{\ge 0}$, budget $B$, and source/sink nodes $s$ and $t$. By the max-flow min-cut theorem, solving BCFIP on $G$ is equivalent to solving the following problem on $G$:

\textbf{Given}: The directed graph $G$ with its capacity function $w$, cost function $c$, budget $B$, and source/sink nodes $s$ and $t$.

\textbf{Find}: A set of edges $I\subseteq E(G)$ with $c(I)\le B$ such that the total capacity of a minimum cut in $G[I]$ is maximized. More precisely, $I$ is a set that maximizes $$\min_{S\subseteq V(G): s\in S, t\notin S} w(\delta^+(S)\cap I) = \alpha(G[I], s, t)$$

Now, construct the directed graph $G_3$ by applying the previously described planar BCFIP to planar DSPEIP reduction with edge weight function $w_3: E(G_3)\rightarrow \mathbb{Z}_{\ge 0}$, interdiction cost function $c_3: E(G_3)\rightarrow \mathbb{Z}_{\ge 0}$, and budget $B$ (the same budget as for $G$).

\begin{center}
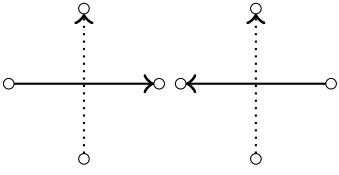
\begin{figure}
	\begin{tikzpicture}
		\Vertex[x = -1, y = 0]{W}
		\Vertex[x = 0, y = -1]{S}
		\Vertex[x = 1, y = 0]{E}
		\Vertex[x = 0, y = 1]{N}

		\tikzset{EdgeStyle/.style={ 
			postaction=decorate,
            		decoration={
                			markings,
                			mark=at position 1 with {\arrow[scale = 1.5]{>}}
            		}
		}}

		\Edge[style = dotted](S)(N)
		\Edge(W)(E)
	\end{tikzpicture}
	\begin{tikzpicture}
		\Vertex[x = -1, y = 0]{W}
		\Vertex[x = 0, y = -1]{S}
		\Vertex[x = 1, y = 0]{E}
		\Vertex[x = 0, y = 1]{N}

		\tikzset{EdgeStyle/.style={ 
			postaction=decorate,
            		decoration={
                			markings,
                			mark=at position 1 with {\arrow[scale = 1.5]{>}}
            		}
		}}

		\Edge[style = dotted](S)(N)
		\Edge(E)(W)
	\end{tikzpicture}
\caption{The first crossing diagram denotes the \emph{right orientation}, while the second crossing diagram denotes the \emph{left orientation}. Dotted edges are in $G$, while full edges are in $G_3$. Note that for every edge $e\in E(G)$, $g_{3a}(e)$ and $g_{3b}(e)$ cross it in the right orientation, while $g_{3c}(e)$ crosses $e$ in the left orientation.}\label{CrossingOrientations}
\end{figure}
\end{center}

First, we show that $\rho(G_3\backslash g_{3b}(I), u, v) \ge \alpha(G[I], s, t)$ for any $I\subseteq E(G)$. For this, it suffices to show that any $u-v$ shortest path in $G_3\backslash g_{3b}(I)$ from $u$ to $v$ can be made into an $s-t$ cut in $G[I]$ with the same weight. Consider a directed path $P\subset E(G_3)\backslash g_{3b}(I)$. Consider a simultaneous embedding of $G$ and $G_3$ in which no two edges of $G$ cross, no two edges of $G_3$ cross, and every edge $e\in E(G)$ crosses $g_{3a}(e), g_{3b}(e)$, and $g_{3c}(e)$. This also implies that every edge $g_{3a}(e)\in E(G_3)$ crosses $e\in E(G)$ (similarly for $g_{3b}$ and $g_{3c}$). In this embedding, edges of $G_3$ cross edges of $G$ in two different orientations (see Figure \ref{CrossingOrientations}). Partition $P$ into two sets $P_{right}$ and $P_{left}$ by crossing orientation. Note that $P_{left}\subseteq g_{3c}(E(G))$. Therefore, $w_3(P_{left}) = 0$. The edges in $P_{right}$ are either in $g_{3a}(I)$ or $g_{3b}(E(G))$. If there is an edge $e\in E(G)\backslash I$ such that $g_{3a}(e)\in P$, then $P$ can be made shorter or kept the same length in $G_3$ by replacing $g_{3a}(e)$ with $g_{3b}(e)$. Therefore, we may assume that $P_{right}$ only contains edges in $g_{3a}(I)$ or $g_{3b}(E(G)\backslash I)$.

$D_G(g_{3a}^{-1}(P)\cup g_{3b}^{-1}(P)\cup g_{3c}^{-1}(P))$ is a cycle (not necessarily directed) that passes through $w$ in $G_1$. By Proposition \ref{cut-cycle-duality}, $g_{3a}^{-1}(P)\cup g_{3b}^{-1}(P)\cup g_{3c}^{-1}(P)$ is an $s-t$ cut. Since $w_3(g_{3b}(E(G)\backslash I)) = 0$ and $G[I]$ only contains edges in $I$, $w_3(P_{right}) = w(g_{3a}^{-1}(P_{right}))$ and removing $g_{3a}^{-1}(P_{right})\cup g_{3c}^{-1}(P_{left})$ from $G[I]$ disconnects $s$ and $t$ in $G[I]$. Therefore, $\rho(G_3\backslash g_{3b}(I), u, v) \ge \alpha(G[I], s, t)$.

To complete the proof of Theorem \ref{ShortestPathNPC}, it suffices to show that $\rho(G_3\backslash g_{3b}(I), u, v) \le \alpha(G[I], s, t)$ for any $I\subseteq E(G)$ such that $c(I)\le B$. We just need to show that any minimum $s-t$ cut in $G[I]$ can be made into a path of $G_3\backslash g_{3b}(I)$ from $u$ to $v$ with equal weight. A minimum $s-t$ cut in $G[I]$, say given by the set $S$ with $s\in S$, $t\notin S$, corresponds to a cycle in the undirected dual graph that passes through the vertex corresponding to the infinite face of $G[I]$ by Proposition \ref{cut-cycle-duality}. Splitting this cycle at the vertex corresponding to the infinite face into $u$ and $v$ shows that there is a directed path in $G_3\backslash g_{3b}(I)$ from $u$ to $v$ that crosses the edges in $\delta(S)\cap I$ and no other edges in $G[I]$. If this path crosses any edges $e\in E(G)\backslash I$, it can pass through $g_{3b}(e)$ if it crosses with right orientation or $g_{3c}(e)$ if it crosses with left orientation. When the path crosses any edges $e\in I$, it can pass through $g_{3a}(e)$ if it crosses with right orientation or $g_{3c}(e)$ if it crosses with left orientation. This will cause the path to have the same weight as the cut, showing that $\rho(G_3\backslash g_{3b}(I), u, v) \le \alpha(G[I], s, t)$.
\end{proof}

\section{Strong NP-hardness of minimum perfect matching interdiction on planar bipartite graphs}\label{MinPerfectInt}

To prove Theorem \ref{MinPerfectMatchingNPC}, we introduce a new reduction from the directed shortest path problem to the minimum perfect matching problem. Unlike the currently used reduction in \cite{Hoffman63}, this reduction preserves planarity. Like the reduction in \cite{Hoffman63}, this reduction produces a bipartite graph.

Given a directed planar graph $G$ (instance of DSPEIP) with weight function $w: E(G)\rightarrow \mathbb{Z}_{\ge 0}$, interdiction cost function $c:E(G)\rightarrow \mathbb{Z}_{\ge 0}$, interdiction budget $B > 0$, path start node $u$, and path end node $v$, we construct three planar graphs before obtaining an instance of MPMEIP:

\begin{enumerate}
	\item\label{Deg3Step} Construct a directed planar graph $G_1$ on which solving DSPEIP is equivalent to solving DSPEIP on $G$. All vertices of $G_1$ except for $u$ and $v$ have degree at most 3.
	\begin{enumerate}
		\item Attach leaves $u_1$ to $u$ and $v_1$ to $v$, with a directed edge from $u_1$ to $u$ and from $v$ to $v_1$.
		\item Replace any vertex $w\in V(G)$ with $|\delta(w)| > 3$ (including the added leaves in the previous step) with a counterclockwise directed cycle of $|\delta(w)|$ edges. Number the new vertices $w_1, \hdots, w_{\delta(w)}$ and suppose that the counterclockwise order of incoming and outgoing edges for $w$ is $e_1, \hdots, e_{\delta(w)}$. For the edge $e_i$, reassign the endpoint that is $w$ to $w_i$. Let this graph be $G_1$. Let $f: E(G)\rightarrow E(G_1)$ map edges of $G$ to the corresponding edge of $G_1$.
		\item If $e\in E(G)$, let $w_1(f(e)) = w(e)$. If $e\in E(G_1)\backslash f(E(G))$, let $w_1(e) = 0$.
		\item If $e\in E(G)$, let $c_1(f(e)) = c(e)$. If $e\in E(G_1)\backslash f(E(G))$, let $c_1(e) = B + 1$.
	\end{enumerate}
	See an illustration of this construction in Figure~\ref{Fig:deg3}.
	
	\item\label{PseudoLineGraph} Construct the line graph $L(G_1)$ and add two vertices $u_2$ and $v_2$ to the graph. Construct $G_2$ from this graph by adding an edges $\{u_2, e\}$ for the edge $e\in \delta^+(u_1)$ in $G_1$ and $\{e, v_2\}$ for the edge $e\in \delta^-(v_1)$ in $G_1$. Let $f_1: E(G_1)\rightarrow V(G_2)$ be the injection induced by this construction.

Note that directed paths from $u_1$ to $v_1$ in $G_1$ are mapped to undirected paths in $G_2$. Since $G_1$ only has vertices (besides $u_1$ and $v_1$) with maximum degree three, $G_2$ is planar by 
Proposition \ref{planar-directed-line-graph}.

See an illustration of this step in Figure~\ref{Fig:linegraph}

	\item\label{MatchingSubdivision} Construct a weighted undirected planar graph $G_3$ from $G_2$. Minimum perfect matchings in subgraphs obtainable by interdiction of $G_3$ correspond to shortest paths from $u_1$ to $v_1$ in subgraphs of $G_1$.
	\begin{enumerate}
		\item For every edge $e\in E(G_1)$, split $f_1(e)\in V(G_2)$ into four vertices $f_{2a}(f_1(e)), f_{2b}(f_1(e))$, $f_{2c}(f_1(e)), f_{2d}(f_1(e))\in V(G_3)$ and three edges $\{f_{2a}(f_1(e)), f_{2b}(f_1(e))\}$, $\{f_{2b}(f_1(e)), f_{2c}(f_1(e))\}$, $\{f_{2c}(f_1(e)), f_{2d}(f_1(e))\}\in E(G_3)$. Define $g_{2a}: f_1(E(G_1))\rightarrow E(G_3)$ by $$g_{2a}(f_1(e)) = \{f_{2a}(f_1(e)), f_{2b}(f_1(e))\}$$ Define $g_{2b}$ and $g_{2c}$ similarly. Endpoints of edges of $G_2$ are reassigned to $f_{2b}$, $f_{2c}$, or $f_{2d}$ of the corresponding endpoints.
		\item Let $w_3(g_{2b}(f_1(e))) = w_1(e)$ for all $e\in E(G_1)$. For all other edges $e\in E(G_3)$, let $w_3(e) = 0$.
		\item Let $c_3(g_{2b}(f_1(e))) = c_1(e)$ for all $e\in E(G_1)$. For all other edges $e\in E(G_3)$, let $c_3(e) = B + 1$.
        \item Rename $u_2$ and $v_2$ with $u_3$ and $v_3$ respectively.
	\end{enumerate}
See an illustration of this construction in Figure~\ref{Fig:split}
\end{enumerate}

\begin{center}
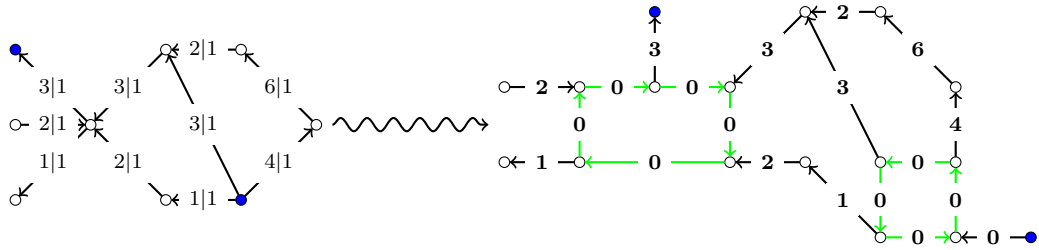
\begin{figure}
\begin{tikzpicture}
\node(MinPerfectMatchingFig1){
\begin{tikzpicture}
   \Vertex[x=0 ,y=0]{A0}
   \Vertex[x=0 ,y=1]{A1}
   \Vertex[x=1 ,y=1]{H0}
   \Vertex[x=2 ,y=2]{H1}
   \Vertex[x=3 ,y=2]{H2}
   \Vertex[x=4 ,y=1]{H3}
   \Vertex[x=2 ,y=0]{H5}
   \tikzset{VertexStyle/.append style = {shape = circle, fill = blue}}
   \Vertex[x=3 ,y=0]{H4}
   \Vertex[x=0 ,y=2]{A2}

   \tikzset{EdgeStyle/.style={->}}
   \Edge[label = $1|1$](H0)(A0)
   \Edge[label = $2|1$](A1)(H0)
   \Edge[label = $3|1$](H0)(A2)

   \Edge[label = $3|1$](H1)(H0)
   \Edge[label = $2|1$](H2)(H1)
   \Edge[label = $6|1$](H3)(H2)
   \Edge[label = $4|1$](H4)(H3)
   \Edge[label = $1|1$](H4)(H5)
   \Edge[label = $2|1$](H5)(H0)

   \Edge[label = $3|1$](H4)(H1)
\end{tikzpicture}
};
\node[xshift = 8cm](MinPerfectMatchingFig2){
\begin{tikzpicture}
   \Vertex[x=0 ,y=0]{A0}
   \Vertex[x=0 ,y=1]{A1}
   \Vertex[x=1 ,y=0]{H0a}
   \Vertex[x=1 ,y=1]{H0b}
   \Vertex[x=2 ,y=1]{H0c}
   \Vertex[x=3 ,y=1]{H0d}
   \Vertex[x=3 ,y=0]{H0e}
   \Vertex[x=4 ,y=2]{H1}
   \Vertex[x=5 ,y=2]{H2}
   \Vertex[x=6 ,y=1]{H3}
   \Vertex[x=4 ,y=0]{H5}
   \Vertex[x=5 ,y=0]{H4a}
   \Vertex[x=5 ,y=-1]{H4b}
   \Vertex[x=6 ,y=-1]{H4c}
   \Vertex[x=6 ,y=0]{H4d}
   \tikzset{VertexStyle/.append style = {shape = circle, fill = blue}}
   \Vertex[x=7 ,y=-1]{B0}
   \Vertex[x=2 ,y=2]{A2}

   \tikzset{EdgeStyle/.style={->}}
   \Edge[label = 1](H0a)(A0)
   \Edge[label = 2](A1)(H0b)
   \Edge[label = 3](H0c)(A2)

   \Edge[label = 3](H1)(H0d)
   \Edge[label = 2](H2)(H1)
   \Edge[label = 6](H3)(H2)
   \Edge[label = 4](H4d)(H3)
   \Edge[label = 1](H4b)(H5)
   \Edge[label = 2](H5)(H0e)

   \Edge[label = 3](H4a)(H1)

   \Edge[color = green, label = 0](H0a)(H0b)
   \Edge[color = green, label = 0](H0b)(H0c)
   \Edge[color = green, label = 0](H0c)(H0d)
   \Edge[color = green, label = 0](H0d)(H0e)
   \Edge[color = green, label = 0](H0e)(H0a)

   \Edge[color = green, label = 0](H4a)(H4b)
   \Edge[color = green, label = 0](H4b)(H4c)
   \Edge[color = green, label = 0](H4c)(H4d)
   \Edge[color = green, label = 0](H4d)(H4a)

   \Edge[label = 0](B0)(H4c)
\end{tikzpicture}
};
\path[SquigglyArrow, ->](MinPerfectMatchingFig1)--(MinPerfectMatchingFig2);
\end{tikzpicture}
\caption{Depiction of Step \ref{Deg3Step}, Section \ref{MinPerfectInt}. All added edges have edge weight $0|\infty$. Since $v$ is a leaf of $G$, no edges weight with weight 0 is added to make $v$ a leaf.}\label{Fig:deg3}
\end{figure}
\vspace{.125 in}

\begin{figure}
\begin{tikzpicture}
\node(MinPerfectMatchingFig1){
\begin{tikzpicture}[scale = .9]
   \Vertex[x=0 ,y=0]{A0}
   \Vertex[x=0 ,y=1]{A1}
   \Vertex[x=1 ,y=0]{H0a}
   \Vertex[x=1 ,y=1]{H0b}
   \Vertex[x=2 ,y=1]{H0c}
   \Vertex[x=3 ,y=1]{H0d}
   \Vertex[x=3 ,y=0]{H0e}
   \Vertex[x=4 ,y=2]{H1}
   \Vertex[x=5 ,y=2]{H2}
   \Vertex[x=6 ,y=1]{H3}
   \Vertex[x=4 ,y=0]{H5}
   \Vertex[x=5 ,y=0]{H4a}
   \Vertex[x=5 ,y=-1]{H4b}
   \Vertex[x=6 ,y=-1]{H4c}
   \Vertex[x=6 ,y=0]{H4d}
   \tikzset{VertexStyle/.append style = {shape = circle, fill = blue}}
   \Vertex[x=7 ,y=-1]{B0}
   \Vertex[x=2 ,y=2]{A2}

   \tikzset{EdgeStyle/.style={->}}
   \Edge[label = 1](H0a)(A0)
   \Edge[label = 2](A1)(H0b)
   \Edge[label = 3](H0c)(A2)

   \Edge[label = 3](H1)(H0d)
   \Edge[label = 2](H2)(H1)
   \Edge[label = 6](H3)(H2)
   \Edge[label = 4](H4d)(H3)
   \Edge[label = 1](H4b)(H5)
   \Edge[label = 2](H5)(H0e)

   \Edge[label = 3](H4a)(H1)

   \Edge[color = green, label = 0](H0a)(H0b)
   \Edge[color = green, label = 0](H0b)(H0c)
   \Edge[color = green, label = 0](H0c)(H0d)
   \Edge[color = green, label = 0](H0d)(H0e)
   \Edge[color = green, label = 0](H0e)(H0a)

   \Edge[color = green, label = 0](H4a)(H4b)
   \Edge[color = green, label = 0](H4b)(H4c)
   \Edge[color = green, label = 0](H4c)(H4d)
   \Edge[color = green, label = 0](H4d)(H4a)

   \Edge[label = 0](B0)(H4c)
\end{tikzpicture}
};
\node[xshift = 8cm, yshift = 0cm](MinPerfectMatchingFig2){
\begin{tikzpicture}
   \Vertex[x=0 ,y=0]{A0}
   \Vertex[x=0 ,y=1]{A1}
   \Vertex[x=1 ,y=0]{H0a}
   \Vertex[x=1 ,y=1]{H0b}
   \Vertex[x=2 ,y=1]{H0c}
   \Vertex[x=3 ,y=1]{H0d}
   \Vertex[x=3 ,y=0]{H0e}
   \Vertex[x=4 ,y=2]{H1}
   \Vertex[x=5 ,y=2]{H2}
   \Vertex[x=6 ,y=1]{H3}
   \Vertex[x=4 ,y=0]{H5}
   \Vertex[x=5 ,y=0]{H4a}
   \Vertex[x=5 ,y=-1]{H4b}
   \Vertex[x=6 ,y=-1]{H4c}
   \Vertex[x=6 ,y=0]{H4d}

   \Vertex[x=.5, y=0]{C0}
   \Vertex[x=.5, y=1]{C1}
   \Vertex[x=1, y=.5]{C2}
   \Vertex[x=1.5, y=1]{C3}
   \Vertex[x=2, y=0]{C4}
   \Vertex[x=2, y=1.5]{C5}
   \Vertex[x=2.5, y=1]{C6}
   \Vertex[x=3, y=.5]{C7}
   \Vertex[x=3.5, y=0]{C8}
   \Vertex[x=3.5, y=1.5]{C9}
   \Vertex[x=4.5, y=-.5]{C10}
   \Vertex[x=4.5, y=1]{C11}
   \Vertex[x=4.5, y=2]{C12}
   \Vertex[x=5, y=-.5]{C13}
   \Vertex[x=5.5, y=-1]{C14}
   \Vertex[x=5.5, y=0]{C15}
   \Vertex[x=5.5, y=1.5]{C16}
   \Vertex[x=6, y=-.5]{C17}
   \Vertex[x=6, y=.5]{C18}
   \Vertex[x=6.5, y=-1]{C19}

   \tikzset{VertexStyle/.append style = {shape = circle, fill = blue}}
   \Vertex[x=7 ,y=-1]{B0}
   \Vertex[x=2 ,y=2]{A2}

   \Edge(C5)(A2)
   \Edge(B0)(C19)

   \Edge(C0)(C4)
   \Edge(C1)(C3)
   \Edge(C2)(C3)
   \Edge(C2)(C4)
   \Edge(C3)(C5)
   \Edge(C3)(C6)
   \Edge(C4)(C7)
   \Edge(C4)(C8)
   \Edge(C6)(C7)
   \Edge(C7)(C9)
   \Edge(C8)(C10)
   \Edge(C9)(C11)
   \Edge(C9)(C12)
   \Edge(C10)(C13)
   \Edge(C11)(C15)
   \Edge(C12)(C16)
   \Edge(C13)(C14)
   \Edge(C13)(C15)
   \Edge(C14)(C17)
   \Edge(C15)(C17)
   \Edge(C16)(C18)
   \Edge(C17)(C18)
   \Edge(C17)(C19)
\end{tikzpicture}
};
\path[SquigglyArrow, ->](MinPerfectMatchingFig1)--(MinPerfectMatchingFig2);
\end{tikzpicture}
\caption{Depiction of Step \ref{PseudoLineGraph}, Section \ref{MinPerfectInt}}\label{Fig:linegraph}
\end{figure}
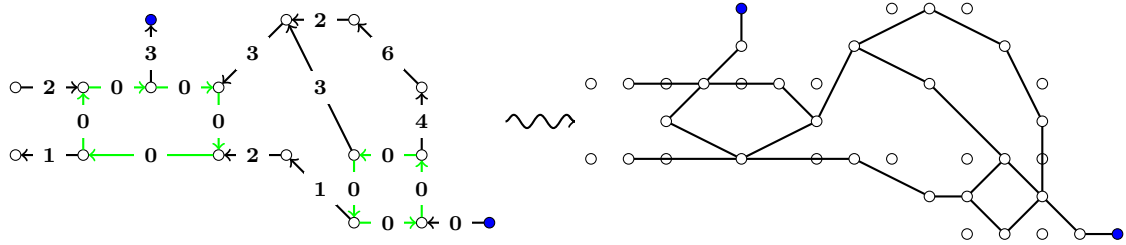
\vspace{.125 in}

\begin{figure}
\begin{tikzpicture}
\node(MinPerfectMatchingFig1){
\begin{tikzpicture}
   \Vertex[x=1, y=1.5]{C2}
   \Vertex[x=1, y=1]{C3}
   \Vertex[x=2.5, y=1]{C4}
   \Vertex[x=4, y=1.5]{I0}
   \Vertex[x=4, y=1]{I5}

   \Edge(C3)(C4)
   \Edge(C2)(C4)
   \Edge(C4)(I0)
   \Edge(C4)(I5)
\end{tikzpicture}
};
\node[xshift = 8cm, yshift = 0cm](MinPerfectMatchingFig2){
\begin{tikzpicture}
   \Vertex[x=1, y=1.5]{C2}
   \Vertex[x=1, y=1]{C3}
   \Vertex[x=2.5, y=1]{C4}
   \Vertex[x=3, y=1]{D0}
   \Vertex[x=4, y=1]{D1}
   \Vertex[x=4.5, y=1]{D2}
   \Vertex[x=6,y=1.5]{I0}
   \Vertex[x=6,y=1]{I5}

   \Edge(C3)(C4)
   \Edge(C2)(C4)
   \Edge(D2)(I0)
   \Edge(D2)(I5)
   \Edge(C4)(D0)
   \Edge[label = $0$](D0)(D1)
   \Edge(D1)(D2)
\end{tikzpicture}
};
\path[SquigglyArrow, ->](MinPerfectMatchingFig1)--(MinPerfectMatchingFig2);
\end{tikzpicture}
\caption{Depiction of one vertex subdivision in Step \ref{MatchingSubdivision}, Section \ref{MinPerfectInt}}\label{Fig:split}
\end{figure}
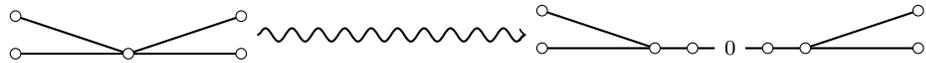
\end{center}

Since finding a minimum weight perfect matching in a graph is in P, MPMEIP is in NP. Note that the construction for Theorem \ref{ShortestPathNPC} has the property that for all interdiction sets $I\subseteq E(G)$ with $c(I)\le B$, there is a directed path from $u$ to $v$ in $G\backslash I$. We will now show that this construction suffices for proving strong NP-hardness of MPMEIP through three theorems.

\begin{theorem}
For any interdiction set $I\subseteq E(G)$ with $c(I)\le B$, there is a perfect matching in $G_3\backslash g_{2b}(f_1(f(I)))$ with the weight of a shortest path from $u$ to $v$ in $G\backslash I$. (i.e. $\mu(G_3\backslash g_{2b}(f_1(f(I))))\le \rho(G\backslash I, u, v)$).
\end{theorem}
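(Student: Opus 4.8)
The plan is to take a shortest $u$–$v$ path in $G\backslash I$ and push it forward through the three reductions, arriving at a perfect matching of $G_3\backslash g_{2b}(f_1(f(I)))$ whose weight equals the length of that path. Since $\mu$ is a minimum over perfect matchings, exhibiting a single such matching proves $\mu(G_3\backslash g_{2b}(f_1(f(I))))\le \rho(G\backslash I,u,v)$.

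First I would fix a shortest directed path $P$ from $u$ to $v$ in $G\backslash I$ (one exists by the property noted just after the construction) and lift it to a path $P_1$ from $u_1$ to $v_1$ in $G_1\backslash f(I)$. Step \ref{Deg3Step} is designed so that DSPEIP on $G_1$ is equivalent to DSPEIP on $G$: the leaf edges $(u_1,u)$, $(v,v_1)$ and the cycle edges introduced when splitting high-degree vertices all have $w_1$-weight $0$ and cost $B+1$, so they are never interdicted and contribute nothing to path length. Hence $P_1$ avoids $f(I)$ and satisfies $w_1(P_1)=w(P)=\rho(G\backslash I,u,v)$. Mapping the edges of $P_1$ through $f_1$ yields a path $Q$ in the line graph $G_2$ from $u_2$ to $v_2$: consecutive edges of $P_1$ share a head/tail and are therefore adjacent in $L(G_1)$, and $(u_1,u)\in\delta^+(u_1)$, $(v,v_1)\in\delta^-(v_1)$ are the unique edges joining $u_2$ and $v_2$ to the line graph.

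Next I would build the perfect matching $M$ gadget by gadget, using the vertex-split of Step \ref{MatchingSubdivision}. In each four-vertex gadget $f_{2a}\!-\!f_{2b}\!-\!f_{2c}\!-\!f_{2d}$ there are exactly two self-consistent ways to match its internal vertices: the internal matching $\{g_{2a},g_{2c}\}$ of weight $0$, or the middle edge $g_{2b}$ of weight $w_1(e)$, which leaves the endpoint vertices $f_{2a},f_{2d}$ to be covered through the external line-graph edges. For every edge $e'$ on $P_1$ I put $g_{2b}(f_1(e'))$ into $M$ and match the exposed endpoints along $Q$: the outgoing endpoint of each gadget to the incoming endpoint of the next, with $u_3$ matched to the incoming endpoint of the first gadget and $v_3$ to the outgoing endpoint of the last. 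For every edge $e'$ off $P_1$ I match its gadget internally by $\{g_{2a},g_{2c}\}$. A direct check shows every vertex is covered exactly once, so $M$ is a perfect matching of weight $\sum_{e'\in P_1} w_1(e')=\rho(G\backslash I,u,v)$. Crucially, $M$ uses a middle edge $g_{2b}(f_1(e'))$ only for edges $e'$ on $P_1$, and $P_1$ avoids $f(I)$, so $M$ never uses a deleted edge of $g_{2b}(f_1(f(I)))$; the interdicted gadgets are all off-path and are matched internally, which does not require their removed middle edge.

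The step I expect to be the main obstacle is verifying that the endpoint-by-endpoint matching along $Q$ is genuinely a perfect matching — that the external line-graph edges joining successive gadgets pair the free vertex $f_{2d}$ of one gadget with the free vertex $f_{2a}$ of the next, and that $u_3$ and $v_3$ close off the two ends. This amounts to checking the alternating pattern (external edge, middle edge, external edge, $\ldots$) along the path and confirming that the orientation convention of Step \ref{MatchingSubdivision} attaches incoming edges to the $f_{2a}$ end and outgoing edges to the $f_{2d}$ end; everything else is routine weight bookkeeping.
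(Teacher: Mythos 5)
Your proposal is correct and follows essentially the same route as the paper's proof: lift the shortest path through the zero-weight, cost-$(B+1)$ gadget edges to a path in $G_1\backslash f(I)$, then build the perfect matching by taking the middle edge $g_{2b}$ on path gadgets, the pair $\{g_{2a},g_{2c}\}$ on off-path gadgets, and the external line-graph edges (plus $u_3,v_3$) to cover the exposed ends. Your explicit verification of the alternating pattern along the path is just a more detailed account of the step the paper compresses into the claim that this partial matching extends to a unique perfect matching of the same weight.
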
\label{LEMinPerfectMatching}

\begin{proof}
Let $P\subseteq E(G)$ be a shortest path from $u$ to $v$ in $G\backslash I$ (since a path from $u$ to $v$ exists in $G\backslash I$). Note that $f(P)$ is not a path from $u_1$ to $v_1$ in $G_1$ if $P$ passes through vertices with degree at least four. However, every vertex in $G$ with degree at least four is replaced with a directed cycle to obtain $G_1$. Therefore, for two consecutive edges $e, e'\in P$ with $e'$ following $e$ and shared endpoint $w$, there is a path with weight 0 from $t(f(e))$ to $s(f(e'))$. All of these added edges are in $G_1\backslash f(I)$, since $c_1(d) = B + 1$ for each added edge $d$. Since $P$ passes through every vertex of $G$ at most once, the result of adding segments from each directed cycle corresponding to a vertex of $G$ with degree at least four is a path. Call this path $Q\subseteq E(G_1)$. Since all added edges have weight zero, $w_1(Q) = w(P)$.

We now show that there is a perfect matching $N\subseteq E(G_3)\backslash g_{2b}(f_1(f(I)))$ with $w_3(N) = w_1(Q)$. For any $u-v$ simply path $L$ in $G_1$,  there exists the unique perfect matching in $G_3$ in which $g_{2b}(f_1(L)) \cup g_{2a}(f_1(E(G_1)\backslash L))\cup g_{2c}(f_1(E(G_1)\backslash L))$ is a part of the matching.
Note that $w_3(g_{2b}(f_1(Q))) = w_1(Q)$ and that $$M:= g_{2b}(f_1(Q))\cup g_{2a}(f_1(E(G_1)\backslash Q))\cup g_{2c}(f_1(E(G_1)\backslash Q))$$ is a matching in $G_3\backslash g_{2b}(f_1(f(I)))$. Furthermore, $M$ is a subset of a unique perfect matching $N\subseteq E(G_3)\backslash  g_{2b}(f_1(f(I)))$ with $w_3(N) = w_3(M) = w_1(Q) = w(P)$. Therefore, $N$ is the desired perfect matching.
\end{proof}

\begin{theorem}
For any interdiction set $I\subseteq E(G)$ with $c(I)\le B$, there is a path in $G\backslash I$ from $u$ to $v$ with the weight of a minimum weight perfect matching in $G_3\backslash g_{2b}(f_1(f(I)))$. (i.e. $\mu(G_3\backslash g_{2b}(f_1(f(I))))\ge \rho(G\backslash I, u, v)$).
\end{theorem}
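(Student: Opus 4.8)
The plan is to establish the reverse of the inequality proved in the previous theorem by starting from a minimum weight perfect matching $N$ of $G_3\backslash g_{2b}(f_1(f(I)))$ and extracting from it a $u$-$v$ path in $G\backslash I$ of weight at most $w_3(N)$. Set $S=\{e\in E(G_1): g_{2b}(f_1(e))\in N\}$, the set of edges of $G_1$ whose weighted middle gadget edge is used by $N$. Every edge of $G_3$ other than the middle edges carries weight $0$, so $w_3(N)=\sum_{e\in S}w_1(e)=w_1(S)$. It therefore suffices to show that $S$ contains the edge set of a directed $u_1$-$v_1$ path $Q$ in $G_1$ with $Q\cap f(I)=\emptyset$: contracting the degree-reduction cycles of Step \ref{Deg3Step} then turns $Q$ into a $u$-$v$ walk $P$ in $G$, which avoids $I$ because no $e\in f(I)$ can lie in $S$ (its middle edge was deleted from $G_3$) and which satisfies $w(P)=w_1(Q)$ since all cycle edges have weight $0$. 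Reducing $P$ to a simple path cannot increase its weight, giving $\rho(G\backslash I,u,v)\le w_1(Q)\le w_1(S)=w_3(N)=\mu(G_3\backslash g_{2b}(f_1(f(I))))$, as required.

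The core of the proof is the extraction of $Q$, which I would obtain from a local analysis of $N$ on the subdivision gadgets. Each edge $e\in E(G_1)$ contributes a four-vertex path with edges $g_{2a}(f_1(e))$, $g_{2b}(f_1(e))$, $g_{2c}(f_1(e))$, and the external line-graph edges attach only at the two ends of this path. Hence any perfect matching meets a gadget in exactly one of two ways: the \emph{on} state, in which $g_{2b}(f_1(e))$ is matched and the two ends are matched externally, and the \emph{off} state, in which $g_{2a}(f_1(e))$ and $g_{2c}(f_1(e))$ are matched and the external edges at the ends are unused; thus $e\in S$ precisely when its gadget is on. I would then examine how these states interact at a junction, i.e. at an original vertex $x$ of $G_1$, where the line graph places a complete bipartite connection between the gadgets of the in-edges and out-edges of $x$. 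Because Step \ref{Deg3Step} forces every junction to have total degree at most $3$, only the cases $1\times1$, $1\times2$, and $2\times1$ arise, and in each of them the requirement that the shared external endpoints be matched yields the flow-conservation identity: the number of on in-edges at $x$ equals the number of on out-edges at $x$, and both lie in $\{0,1\}$.

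To finish, note that $u_3$ and $v_3$ each have a single incident edge in $G_3$, so $N$ must match them to the gadgets of the unique edges of $\delta^+(u_1)$ and $\delta^-(v_1)$, forcing both of those gadgets on. Together with conservation at every interior vertex, this shows that the indicator vector of $S$ is a $0/1$ flow in $G_1$ that is conserved everywhere except for a unit of excess leaving $u_1$ and a unit entering $v_1$. Any such flow decomposes into a single directed $u_1$-$v_1$ path together with vertex-disjoint directed cycles, and taking $Q$ to be that path gives $Q\subseteq S$ with $w_1(Q)\le w_1(S)$ because the weights are nonnegative and the discarded cycles contribute nonnegative weight. I expect the main obstacle to be making this local-to-global argument airtight: one must verify the two-state dichotomy and the conservation identity in every junction case, including the vertices of the directed cycles created in Step \ref{Deg3Step}, and rule out any interior vertex acting as an extra source or sink, so that the flow value is exactly one and its decomposition genuinely contains a $u_1$-$v_1$ path rather than only cycles.
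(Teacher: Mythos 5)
Your proof is correct and its skeleton is the same as the paper's: take a minimum weight perfect matching $N$, pass to the set $S$ of edges of $G_1$ whose middle gadget edge is used, note that $w_3(N)=w_1(S)$ and that interdicted edges cannot lie in $S$, recognize $S$ as a directed $u_1$-$v_1$ path together with directed cycles, and get rid of the cycles using nonnegativity of the weights before mapping back to $G$. Two differences are worth noting. First, the paper merely asserts the structural claim that $f_1^{-1}(g_{2b}^{-1}(M))$ is a $u_1$-$v_1$ path unioned with directed cycles; your on/off gadget dichotomy, the conservation identity at junctions of total degree at most $3$, and the forced matchings at $u_3$ and $v_3$ are precisely the justification the paper omits (just make sure the case analysis also covers junctions with only in-edges or only out-edges, where every incident gadget is forced off and conservation holds vacuously). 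Second, where you simply discard the cycles of the flow decomposition, the paper runs an exchange argument: along each such cycle $C$ it toggles every gadget from the on state to the off state, producing another perfect matching whose weight drops by $w_1(C)\ge 0$, and iterates until some minimum perfect matching corresponds exactly to a path. Your treatment is shorter and never modifies the matching; the paper's buys the slightly stronger (but here unneeded) fact that a minimum perfect matching realizing exactly a $u_1$-$v_1$ path exists. Both hinge on the same nonnegativity of $w_1$.
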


\begin{proof}
It suffices to show that $\rho(G_1\backslash f(I), u_1, v_1) \le \mu(G_3\backslash g_{2b}(f_1(f(I))))$. Consider a perfect matching $M\subseteq E(G_3)\backslash g_{2b}(f_1(f(I)))$. $M$ exists by the proof of Theorem \ref{LEMinPerfectMatching}. Note that $w_1(f_1^{-1}(g_{2b}^{-1}(M))) = w_3(M)$ and that $f_1^{-1}(g_{2b}^{-1}(M))$ is a directed path from $u$ to $v$ in $G_1$ unioned with directed cycles. Let $C\subseteq E(G_1)$ be one of these directed cycles. Note that adding the edges $g_{2a}(f_1(C))\cup g_{2c}(f_1(C))$ to $M$ results in a cycle. Removing the edges of this cycle besides $g_{2a}(f_1(C))\cup g_{2c}(f_1(C))$ from $M$ (which includes $g_{2b}(f_1(C))$) and adding $g_{2a}(f_1(C))\cup g_{2c}(f_1(C))$ results in another perfect matching $N$ in $G_3\backslash g_{2b}(f_1(f(I)))$ with $w_3(N)\le w_3(M)$. This contradicts the fact that $M$ is a minimum perfect matching (unless $N$ has the same weight, in which we could have started out with it). Furthermore, $f_1^{-1}(g_{2b}^{-1}(N))$ has one fewer directed cycle. Therefore, there is a minimum perfect matching $K\subseteq E(G_3)\backslash g_{2b}(f_1(f(I)))$ with $f_1^{-1}(g_{2b}^{-1}(K))$ a directed path from $u_1$ to $v_1$. Therefore, we are done.
\end{proof}

\begin{theorem}
If $G$ is a directed planar graph, then $G_3$ is an undirected planar bipartite graph.
\end{theorem}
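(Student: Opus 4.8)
The plan is to check the three assertions—undirected, planar, and bipartite—in turn, with essentially all of the work in the last two. That $G_3$ is undirected is immediate: Step~\ref{MatchingSubdivision} specifies $E(G_3)$ as a set of unordered pairs (the three internal gadget edges $g_{2a}, g_{2b}, g_{2c}$ attached to each $f_1(e)$, together with the edges of $G_2$ whose endpoints are reassigned), so the orientations carried by $G$, $G_1$, and $G_2$ are simply discarded and nothing further is needed.

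For planarity I would take the planarity of $G_2$ as already given (Step~\ref{PseudoLineGraph} establishes it through Proposition~\ref{planar-directed-line-graph}, using that $G_1$ has maximum degree at most $3$ away from its leaves). The graph $G_3$ is obtained from $G_2$ by replacing each line-graph vertex $f_1(e)$ with the four-vertex path $f_{2a}(f_1(e))-f_{2b}(f_1(e))-f_{2c}(f_1(e))-f_{2d}(f_1(e))$, with the $G_2$-edges meeting $e$ at its tail (its predecessors, plus the terminal edge at $u_3$ if present) attached at one end of the path and those meeting $e$ at its head (its successors, plus the terminal edge at $v_3$ if present) attached at the other. I would carry this out inside a fixed planar embedding of $G_2$: first split $f_1(e)$ into two vertices joined by an edge, assigning the predecessor edges to one and the successor edges to the other, and then subdivide the new connecting edge twice. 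A vertex split preserves planarity precisely when the two groups of incident edges each form a contiguous arc of the rotation at $f_1(e)$, and in the embedding furnished by Proposition~\ref{planar-directed-line-graph} the predecessor and successor edges cluster on opposite sides of $e$ and hence do so; edge subdivision trivially preserves planarity. Thus $G_3$ is planar.

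For bipartiteness I would exhibit a proper $2$-colouring directly. Colour the four vertices of each gadget alternately along the path $f_{2a}-f_{2b}-f_{2c}-f_{2d}$, and colour $u_3$ and $v_3$ to match their neighbours. The three internal edges of each gadget join consecutive path vertices and are therefore bichromatic. Every remaining edge is an edge of $G_2$—a line-graph edge $\{f_1(e),f_1(e')\}$, present exactly when the head of $e$ is the tail of $e'$, or a terminal edge at $u_3$ or $v_3$—and by the reassignment it runs from the end of one gadget path holding the successor edges to the end of the next holding the predecessor edges. Because each gadget path has an odd number of edges (three), its two ends receive opposite colours, so every such edge is bichromatic; the terminal edges are handled by the matching choice of colours for $u_3$ and $v_3$. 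Hence $G_3$ admits a proper $2$-colouring and is bipartite.

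The step I expect to be the main obstacle is the contiguity claim inside the planarity argument: verifying that, in the embedding provided by Proposition~\ref{planar-directed-line-graph}, the predecessor and successor neighbourhoods of each line-graph vertex genuinely occupy disjoint contiguous arcs, so that the split can be realised without introducing a crossing. Once that is confirmed, the planarity conclusion follows, and the bipartiteness argument is then a routine parity check that hinges only on the length-$3$ gadget path joining each predecessor end to the matching successor ends across $G_2$-edges.
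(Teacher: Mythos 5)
Your proposal is correct, but it proves bipartiteness by a genuinely different route than the paper. The paper argues via odd cycles: it notes $u_3,v_3$ (being degree-one) lie on no cycle, pulls any cycle $C$ of $G_3$ back to a closed walk $D$ in the undirected version of $G_1$, and counts $|C| = 3|D| + |T| + |S| = 4|D| + |S|$, where $S$ is the set of vertices at which $D$ reverses direction; since direction reversals along a closed walk come in pairs, $|S|$ is even and so is $|C|$. You instead exhibit an explicit proper $2$-colouring, alternating along each length-three gadget path and using the fact that every $G_2$-edge joins the head-type end of one gadget to the tail-type end of another. Both arguments hinge on exactly that structural fact about attachments; your reading of Step~\ref{MatchingSubdivision} (that $G_2$-edges attach only at the two extreme vertices of each gadget, predecessors at one end and successors at the other) is the one consistent with Figure~\ref{Fig:split} and with the matching-correspondence theorems, even though the construction's text ambiguously mentions reassignment to $f_{2b}$, $f_{2c}$, or $f_{2d}$. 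Your colouring is more elementary and self-contained --- it avoids the parity bookkeeping and the slightly delicate claim that reversal vertices are even in number --- while the paper's count has the merit of never needing a globally consistent labelling of gadget ends. On planarity you are also more careful than the paper, which simply asserts that each construction step preserves planarity: you reduce the step $G_2 \to G_3$ to a vertex split plus subdivisions and identify the real obligation, namely that the predecessor and successor edges at each line-graph vertex occupy contiguous arcs of its rotation in the embedding furnished by Proposition~\ref{planar-directed-line-graph}; that claim is true (predecessor edges cluster near the tail of $e$ and successor edges near its head when $f_1(e)$ is drawn on $e$), so your argument closes a gap the paper leaves implicit rather than opening one.
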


\begin{proof}
Replacing vertices with cycles preserves planarity. Since the vertices in $G_1$ have degree at most three, the construction to obtain $G_2$ preserves planarity. Finally, the construction to obtain $G_3$ preserves planarity, so all of this construction preserves planarity.

Now, it suffices to show that for any input directed graph $G$, $G_3$ is bipartite. For this, one can show that $G_3$ only has even cycles. Let $C\subseteq E(G_3)$ be a cycle in $G_3$. Note that $u_3$ and $v_3$ cannot be a part of $C$. Furthermore, $D := f_1^{-1}(g_{2a}^{-1}(C)\cup g_{2b}^{-1}(C)\cup g_{2c}^{-1}(C))$ is a cycle in the undirected version of $G_1$. Although $D$ may not be a directed cycle in $G_1$, the number of vertices on the cycle $D$ at which the direction of the edges reverses must be even.

Let $T\subseteq V(G_1)$ be the vertices of $D$ and let $S\subseteq T$ be the set of vertices at which $D$ switches direction. Every vertex $w$ in $G_1$ is associated with one or two edges in $G_3$ that connect $f_{2a}(e)$ and $f_{2d}(e')$ for edges $e, e'\in G_1$ with $t(e) = s(e') = w$. $D$ switches direction at $w$ only when $C$ passes through exactly two edges corresponding to $w$ in $G_3$. Also, $D$ maintains direction through $w$ only when $C$ passes through one edge corresponding to $w$ in $G_3$. Therefore, $|C| = 3|D| + |T| + |S| = 4|D| + |S|$. Since $|S|$ is even by the previous paragraph, $|C|$ is even, as desired.
\end{proof}

\section{Strong NP-hardness of maximum matching interdiction on planar bipartite graphs}\label{MaxInt}

We will prove Theorem \ref{MaxMatchingNPC} by edge weight manipulations. For an edge-weighted graph $G$ with a perfect matching, let $\gamma(G)$ be the maximum weight of any perfect matching in $G$. We will start by showing that the decision version of the following problem is strongly NP-complete on bipartite planar graphs:\\

\textbf{Input}: An edge-weighted graph $G = (V, E)$ with edge weight function $w: E\rightarrow \mathbb{Z}_{\ge 0}$, interdiction cost function $c: E\rightarrow \mathbb{Z}_{\ge 0}$, and interdiction budget $B > 0$.  It is assumed that, for every set $I\subseteq E$ with $c(I)\le B$, $G\backslash I$ has a perfect matching.\\

\textbf{Output}: The subset $I\subseteq E$ with $c(I)\le B$ that minimizes $\gamma(G\backslash I)$.\\

Now, we will reduce MPMEIP to this problem, which is NP because the maximum perfect matching problem is in P. Suppose we are given a graph $G$ with weight function $w: E(G)\rightarrow \mathbb{Z}_{\ge 0}$, cost function $c: E(G)\rightarrow \mathbb{Z}_{\ge 0}$, and interdiction budget $B > 0$ on which we want to solve MPMEIP. Let $W = \max_{e\in E(G)} w(e)$ and consider the weight function $w': E(G)\rightarrow \mathbb{Z}_{\ge 0}$ defined by $w'(e) = W - w(e)$ for all $e\in E(G)$. To prove strong NP-completeness, it suffices to show the following lemma (since edge weights are still bounded by a polynomial in the size of the graph):\\

\begin{lemma}
For any interdiction set $I\subseteq E(G)$ with $c(I)\le B$, $\gamma_{w'}(G\backslash I) = \frac{W|V(G)|}{2} - \mu_{w}(G\backslash I)$.
\end{lemma}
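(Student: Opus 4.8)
The plan is to establish the identity $\gamma_{w'}(G\backslash I) = \frac{W|V(G)|}{2} - \mu_{w}(G\backslash I)$ by exhibiting a weight-complementing bijection between perfect matchings under the two weight functions. The key observation is that every perfect matching $M$ in $G\backslash I$ (or indeed in any graph on $|V(G)|$ vertices) uses exactly $\frac{|V(G)|}{2}$ edges, since every vertex is covered exactly once. This means the total $w'$-weight and $w$-weight of a fixed perfect matching $M$ are rigidly linked.

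First I would fix an arbitrary interdiction set $I$ with $c(I)\le B$ and an arbitrary perfect matching $M$ in $G\backslash I$. By the assumption in the problem statement, such a matching exists. Since $M$ is perfect and each edge covers exactly two vertices, $|M| = \frac{|V(G)|}{2}$. Then I would compute directly:
\begin{align*}
w'(M) &= \sum_{e\in M} w'(e) = \sum_{e\in M}(W - w(e)) = W|M| - \sum_{e\in M} w(e)\\
&= \frac{W|V(G)|}{2} - w(M).
\end{align*}
The crucial point here is that the additive constant $\frac{W|V(G)|}{2}$ is the \emph{same} for every perfect matching, because it depends only on $|M|$, which is fixed across all perfect matchings of $G\backslash I$.

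Next I would leverage this per-matching identity to relate the two optimization problems. Since the map $M\mapsto M$ is a bijection from perfect matchings of $G\backslash I$ to themselves, and $w'(M) = \frac{W|V(G)|}{2} - w(M)$ for every such $M$, maximizing $w'(M)$ is equivalent to minimizing $w(M)$. Taking the maximum over all perfect matchings on the left and recognizing that maximizing $-w(M)$ is minimizing $w(M)$, I obtain
$$\gamma_{w'}(G\backslash I) = \max_{M} w'(M) = \frac{W|V(G)|}{2} - \min_{M} w(M) = \frac{W|V(G)|}{2} - \mu_{w}(G\backslash I),$$
where both extrema range over perfect matchings of $G\backslash I$. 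This completes the identity.

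I do not expect a serious obstacle in this lemma, as it is essentially an arithmetic observation; the main subtlety to state carefully is that the number of edges in a perfect matching is a fixed quantity $\frac{|V(G)|}{2}$ independent of which perfect matching is chosen, which is what makes the affine relationship between $w'$- and $w$-weights uniform and hence order-reversing. I would also note explicitly that the reduction preserves the budget and the interdiction costs verbatim (the weight manipulation touches only $w$, not $c$), so an interdiction set is optimal for the maximum-perfect-matching interdiction problem under $w'$ if and only if it is optimal for MPMEIP under $w$, and the decision thresholds transform correspondingly via $k\mapsto \frac{W|V(G)|}{2} - k$. Together with the earlier reduction chain and the fact that $W|V(G)|/2$ is polynomially bounded, this establishes the strong NP-completeness claimed in Theorem \ref{MaxMatchingNPC}.
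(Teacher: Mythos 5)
Your proof is correct and takes essentially the same approach as the paper: both hinge on the fact that every perfect matching $M$ of $G\backslash I$ has exactly $\frac{|V(G)|}{2}$ edges, so $w'(M) = \frac{W|V(G)|}{2} - w(M)$ uniformly. The paper merely phrases the final step as two inequalities obtained by plugging in the $w'$-maximum and $w$-minimum perfect matchings, whereas you phrase it as a single order-reversing max--min exchange; these are the same argument.
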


\begin{proof}
First, we will show that $\gamma_{w'}(G\backslash I)\le \frac{W|V(G)|}{2} - \mu_{w}(G\backslash I)$. By the definition of MPMEIP, $G\backslash I$ has a perfect matching. Let $M\subseteq E(G)\backslash I$ be one such perfect matching. Note that $|M| = \frac{|V(G)|}{2}$. Therefore, $w'(M) = \frac{W|V(G)|}{2} - w(M)$. Taking $M$ to be the maximum perfect matching with respect to $w'$ shows that $\gamma_{w'}(G\backslash I) = \frac{W|V(G)|}{2} - w(M) \le \frac{W|V(G)|}{2} - \mu_{w}(G\backslash I)$. Taking $M$ to be the minimum perfect matching with respect to $w$ shows that $\gamma_{w'}(G\backslash I)\ge w'(M) = \frac{W|V(G)|}{2} - \mu_{w}(G\backslash I)$, as desired.
\end{proof}

Finally, we will use this strong NP-completeness result to show that MMEIP is strongly NP-complete on planar graphs. Note that MMEIP is in NP because the maximum matching problem is in P. Let $w'': E(G)\rightarrow \mathbb{Z}_{\ge 0}$ be defined by $w''(e) = w'(e) + 2\nu_{w'}(G) + 2$. It suffices to show the following lemma, since the weights $w''$ are polynomially bounded in the size of the graph:\\

\begin{lemma}
For any interdiction set $I\subseteq E(G)$ with $c(I)\le B$, $\gamma_{w'}(G\backslash I) + |V(G)|\nu_{w'}(G) + |V(G)|= \nu_{w''}(G\backslash I)$.
\end{lemma}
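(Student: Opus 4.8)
The plan is to exploit the fact that $w''$ is obtained from $w'$ by adding the same large constant $A := 2\nu_{w'}(G) + 2$ to every edge weight, so that for any matching $M$ one has $w''(M) = w'(M) + A\lvert M\rvert$. The additive term $A\lvert M\rvert$ rewards larger matchings, and the size of $A$ is chosen precisely so that this reward dominates any possible difference in $w'$-weight. The first step is therefore to show that every maximum-$w''$-weight matching of $G\backslash I$ is in fact a \emph{perfect} matching.

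To prove this, I would argue by contradiction. By the standing assumption on MPMEIP instances, $G\backslash I$ has a perfect matching $M_p$, which has $\lvert M_p\rvert = \lvert V(G)\rvert/2$ edges. If a maximum-$w''$-weight matching $M^\ast$ were not perfect, then $\lvert M_p\rvert \ge \lvert M^\ast\rvert + 1$, and using $w'(M_p)\ge 0$ together with $w'(M^\ast)\le \nu_{w'}(G\backslash I)\le \nu_{w'}(G)$ (Proposition \ref{subgraph}) gives $w''(M_p) - w''(M^\ast) \ge A - \nu_{w'}(G) = \nu_{w'}(G) + 2 > 0$, contradicting the maximality of $M^\ast$. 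Hence $\nu_{w''}(G\backslash I)$ is attained by a perfect matching.

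The remaining steps are bookkeeping. Since all perfect matchings of $G\backslash I$ contain exactly $\lvert V(G)\rvert/2$ edges, on this restricted class the objective $w''(M) = w'(M) + A\lvert V(G)\rvert/2$ differs from $w'(M)$ by the fixed constant $A\lvert V(G)\rvert/2$; maximizing $w''$ over perfect matchings is thus equivalent to maximizing $w'$ over perfect matchings, which by definition yields $\gamma_{w'}(G\backslash I)$. Combining this with the first step gives $\nu_{w''}(G\backslash I) = \gamma_{w'}(G\backslash I) + A\lvert V(G)\rvert/2$, and substituting $A = 2\nu_{w'}(G)+2$ and simplifying $A\lvert V(G)\rvert/2 = (\nu_{w'}(G)+1)\lvert V(G)\rvert = \lvert V(G)\rvert\nu_{w'}(G) + \lvert V(G)\rvert$ produces the claimed identity.

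The only genuinely nontrivial point is the cardinality-forcing step: one must verify that the constant $A$ is large enough that increasing the number of matched edges always outweighs any loss in $w'$-weight, which is exactly why $A$ is set to exceed $\nu_{w'}(G)$. Everything else is a direct consequence of the equivalence between maximizing $w''$ and maximizing $w'$ once the optimal matching is known to be perfect, so I expect no further obstacles.
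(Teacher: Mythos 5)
Your proposal is correct and follows essentially the same route as the paper's proof: both exploit that $w''(M) = w'(M) + (2\nu_{w'}(G)+2)\lvert M\rvert$, show by contradiction that the added constant is large enough to force every maximum-$w''$-weight matching of $G\backslash I$ to be perfect, and then reduce the identity to bookkeeping over perfect matchings of fixed size $\lvert V(G)\rvert/2$. The only cosmetic difference is that you compare a perfect matching directly against a hypothetical non-perfect optimum, whereas the paper bounds the two quantities separately before deriving the same contradiction.
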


\begin{proof}
Consider a perfect matching $M\subseteq E(G)\backslash I$. Note that $w'(M) + |V(G)|\nu_{w'}(G) + |V(G)| = w''(M)$ because $|M| = \frac{|V(G)|}{2}$. Therefore, if $M$ is the maximum weight perfect matching with respect to $w'$ in $G\backslash I$, then $\gamma_{w'}(G\backslash I) + |V(G)|\nu_{w'}(G) + |V(G)|= w''(M)\le \nu_{w''}(G\backslash I)$.\\

Now, suppose that $M$ is a maximum weight matching with respect to $w''$. Suppose, for the sake of contradiction, that $|M| \le \frac{|V(G)|}{2} - 1$. Then,

\begin{align*}
\nu_{w''}(G\backslash I) &= w''(M)\\
 &\le w'(M) +  (2\nu_{w'}(G) + 2)(\frac{|V(G)|}{2} - 1)\\
 &\le (|V(G)| - 1)\nu_{w'}(G) + |V(G)| - 2
\end{align*}

However, the weight of any perfect matching with respect to $w''$ is at least $(\nu_{w'}(G) + 1)|V(G)| > (|V(G)| - 1)\nu_{w'}(G) + |V(G)| - 2$, which contradicts the fact that $M$ is a maximum weight matching. Therefore, $|M| = \frac{|V(G)|}{2}$ and is perfect. Therefore,

\begin{align*}
\nu_{w''}(G\backslash I) &= w''(M)\\
&= w'(M) + (2\nu_{w'}(G) + 2)\frac{|V(G)|}{2}\\
&\le \gamma_{w'}(G\backslash I) + |V(G)|\nu_{w'}(G) + |V(G)|
\end{align*}

as desired.
\end{proof}

\section{Conclusion and open problems}\label{Conclusion}

In this paper, we described the complexity of edge interdiction problems when restricted to planar graphs. We presented a Pseudo-PTAS for the weighted maximum matching interdiction problem (MMEIP) on planar graphs. The algorithm extends Baker's Technique for local bilevel min-max optimization problems on planar graphs. Furthermore, we gave strong NP-hardness results for budget-constrained maximum flow (BCFIP), directed shortest path interdiction (DSPEIP), minimum perfect matching interdiction (MPMEIP), and maximum matching interdiction (MMEIP) on planar graphs. The latter three results followed from the strong NP-completeness of BCFIP on directed planar graphs with source and sink on a common face. This strong NP-completeness proof first reduced the maximum independent set problem to BCFIP on general directed graphs. We then noticed that optimal flows on these directed graphs had either full flow or no flow along all edges. To take advantage of this fact, we introduced a sweepline technique for assigning transportation costs and edge capacities in order to ensure that no two crossing edges had the same capacity. After introducing this embedding technique, edge crossings could be replaced with just one vertex.

There are many interesting open problems relating to interdiction. While hardness of approximation results are known for shortest path interdiction \cite{Khachiyan08}, it is only known that a $(2 - \epsilon)$-factor pseudo-polynomial time approximation would imply that $P = NP$. While heuristic solutions are known for the shortest path interdiction problem \cite{Israeli02}, no approximation algorithms with nontrivial approximation (or pseudoapproximation) guarantees are known. For the maximum matching interdiction problem, Zenklusen \cite{Zenklusen10} showed several hardness of approximation results. Nonetheless, no hardness of approximation results are known for MMEIP (computing $\nu_B^E(G)$ within a multiplicative factor). A pseudoapproximation is known for a continuous variant of the maximum flow interdiction problem \cite{Phillips03}. However, no algorithms and no hardness of approximation results are known for the (discrete) maximum flow interdiction problem on general directed graphs.

While we now know that BCFIP, directed shortest path interdiction, mimimum perfect matching interdiction, and MMEIP are strongly NP-complete on planar graphs, there are no known Pseudo-PTASes for BCFIP, shortest path interdiction, and minimum perfect matching interdiction when restricted to planar graphs.

\section*{Acknowledgements}
The authors are supported by Defense Threat Reduction Agency grant BRCALL08-A-2-0030, the Office of the Dean of the College at Princeton University, the Center for Nonlinear Studies at Los Alamos National Laboratory, and the DHS-STEM Research Fellowship Program. This research was performed under an appointment to the U.S. Department of Homeland Security (DHS) Science and Technology (S\&T) Directorate Office of University Programs HS-STEM Summer Internship Program, administered by the Oak Ridge Institute for Science and Education (ORISE) through an interagency agreement between the U.S. Department of Energy (DOE) and DHS. ORISE is managed by the Oak Ridge Associated Universities (ORAU) under DOE contract number DE-AC05-06OR23100. All opinions expressed in this paper are the author's and do not necessarily reflect the policies and views of DHS, DOE, and ORAU/ORISE.

\bibliographystyle{plain}
\bibliography{InterdictionRefs}

\end{document}